\documentclass{lmcs}
\pdfoutput=1

\usepackage{lastpage}
\lmcsdoi{18}{1}{23}
\lmcsheading{}{\pageref{LastPage}}{}{}%
{Apr.~13,~2021}{Feb.~01,~2022}{}

\usepackage[utf8]{inputenc}

\usepackage{amsmath,amssymb}
\usepackage{stmaryrd}
\usepackage{cite}

\usepackage{tikz}
\usetikzlibrary{%
  arrows,%
  arrows.meta,%
  calc,%
  chains,%
  patterns,%
  decorations.pathmorphing,%
  decorations.pathreplacing,%
  fit,%
  intersections,%
  positioning,%
  shapes.multipart,%
  svg.path,%
}


\newenvironment{quotex}{%
  \vskip\abovedisplayskip%
  \list{}{\rightmargin0pt}\item[]%
}{\endlist\vskip\belowdisplayskip%
}

\usepackage[draft,margin]{fixme}

\hyphenation{Isa-belle}



\newcommand{\optparen}[1]{{#1}}
\newcommand{\quotparen}[1]{{#1}}
\newcommand*{\theorem}[1]{\textsc{#1}}
\newcommand*{\tc}[1]{\textit{#1}} 
\newcommand*{\const}[1]{\textsf{#1}}
\newcommand*{\syntax}[1]{\textrm{\underline{#1}}}
\newcommand*{\keyw}[1]{\texttt{#1}}
\newcommand*{\var}[1]{\mathit{#1}}
\newcommand*{\hastype}{::}
\newcommand*{\fun}{\rightarrow}
\newcommand*{\relfun}{\Mapsto}
\newcommand*{\reltype}{\otimes}
\newcommand*{\unit}{\mathsf{1}}
\newcommand*{\Iff}{\longleftrightarrow}
\newcommand*{\fst}{\const{fst}}
\newcommand*{\snd}{\const{snd}}
\newcommand*{\relcomp}{\mathbin{\bullet}}
\newcommand*{\vimagerel}[3]{#1 \mathop{\langle} #3 \mathop{\rangle} #2}
\newcommand*{\seq}[1]{\overline{#1}}
\newcommand*{\embedding}{\mathfrak{e}}
\newcommand*{\append}{\mathbin{\cdot}}

\newcommand*{\internalize}[1]{{\kern.75pt\setlength{\fboxsep}{1pt}\fbox{$#1$}}\kern-.25pt}
\newcommand*{\internalizesim}[2]{{\kern.75pt\color{white}\setlength{\fboxsep}{1pt}\colorbox{gray}{$#1$}}\kern-.25pt}

\makeatletter
\providecommand{\leftsquigarrow}{%
  \mathrel{\mathpalette\reflect@squig\relax}%
}
\newcommand{\reflect@squig}[2]{%
  \reflectbox{$\m@th#1\rightsquigarrow$}%
}
\makeatother

\tikzstyle{circlednode}=[every node/.style={shape=circle,draw=black!70,text=black!70,inner sep=1pt,font=\small}]

\newcommand*{\circled}[1]{\smash{\tikz[baseline={(char.base)}, circlednode]{\node (char) {#1};}}}

\tikzstyle{circledxnode}=[every node/.style={shape=circle,draw=black,text=black,inner sep=0pt}]

\newcommand*{\circledx}[1]{\smash{\tikz[baseline={(char.base)}, circledxnode]{\node (char) {#1};}}}
\renewcommand{\circledast}{\circledx{$\star$}}

\newcommand*{\rtranclp}[1]{\overset{\ast}{#1}}

\newcommand{\exampleend}{%
  \bgroup
  \let\oldqedsymbol=\qedsymbol
  \def\qedsymbol{\smash{\scalebox{0.8}{\rotatebox{45}{\oldqedsymbol}}}}%
  \qed
  \egroup
}

\title{Quotients of Bounded Natural Functors}


\author[B.~F\"urer]{Basil F\"urer\rsuper{a}}
\address{Department of Computer Science, ETH Z\"urich, Switzerland}
\author[A.~Lochbihler]{Andreas Lochbihler\rsuper{b}}
\address{Digital Asset (Switzerland) GmbH, Zurich, Switzerland}
\email{mail@andreas-lochbihler.de}
\author[J.~Schneider]{Joshua Schneider\rsuper{c}}
\address{Institute of Information Security, Department of Computer Science, ETH Z\"urich, Switzerland}
\email{joshua.schneider@inf.ethz.ch}
\author[D.~Traytel]{Dmitriy Traytel\rsuper{d}}
\address{Department of Computer Science, University of Copenhagen, Denmark}
\email{traytel@di.ku.dk}


\keywords{Inductive and coinductive datatypes, quotient types, functors, higher-order logic, proof assistants.}

\begin{document}

%
%

\maketitle              

\begin{abstract}
  The functorial structure of type constructors is the foundation for many definition and proof principles in higher-order logic (HOL).
  For example, inductive and coinductive datatypes can be built modularly from bounded natural functors (BNFs), a class of well-behaved type constructors.
  Composition, fixpoints, and---under certain conditions---subtypes are known to preserve the BNF structure.
  In this article, we tackle the preservation question for quotients, the last
  important principle for introducing new types in HOL\@.
  We identify sufficient conditions under which a quotient inherits the BNF structure from its underlying type.
  Surprisingly, lifting the structure in the obvious manner fails for some quotients, a problem that also affects the quotients of polynomial functors used in the Lean proof assistant.
  We provide a strictly more general lifting scheme that supports such problematic quotients.
  We extend the Isabelle/HOL proof assistant with a command that automates the registration of a quotient type as a BNF, 
  reducing the proof burden on the user from the full set of BNF axioms to our inheritance conditions.
  We demonstrate the command's usefulness through several case studies.
\end{abstract}

\section{Introduction}

The functorial structure of type constructors forms the basis for many definition and proof principles in proof assistants.
Examples include datatype and codatatype definitions~\cite{AvigadCarneiroHudon2019ITP,blanchette14itp,TraytelPopescuBlanchette2012LICS}, program synthesis~\cite{CohenDenesMortberg2013CPP,HuffmanKuncar2013CPP,LammichLochbihler2018JAR}, generalized term rewriting~\cite{Sozeau2010JFR}, and reasoning based on representation independence~\cite{BasinLochbihlerSefidgar2020JC,HuffmanKuncar2013CPP,KuncarPopescu2019JAR} and about effects~\cite{Lochbihler2019jar,LochbihlerSchneider2016ITP}.

A type constructor becomes a functor through a mapper operation that lifts functions on the type arguments to the constructed type.
The mapper must be \emph{functorial}, i.e., preserve identity functions ($\const{id}$) and distribute over function composition ($\circ$).
For example, the list type constructor $\_\;\tc{list}$%
\footnote{Type constructors are written postfix in this article.}
has the well-known mapper $\const{map} \hastype (\alpha\fun\beta) \fun \alpha\;\tc{list} \fun \beta\;\tc{list}$, which applies the given function to every element in the given list. It is functorial:
\begin{equation*}
  \const{map}\;\const{id} = \const{id}
  \qquad\qquad
  \const{map}\;g \circ \const{map}\;f = \const{map}\;(g \circ f)
\end{equation*}

Most applications of functors can benefit from even richer structures.
In this article, we focus on bounded natural functors (BNFs)~\cite{TraytelPopescuBlanchette2012LICS}.
A~BNF comes with additional setter operators that return sets of occurring elements, called atoms, for each type argument (Section~\ref{section:BNF}).
The setters must be \emph{natural} transformations, i.e., commute with the mapper, and \emph{bounded}, i.e., have a fixed cardinality bound on the sets they return.
For example, lists are a BNF with the setter $\const{set} \hastype \alpha\;\tc{list} \fun \alpha\;\tc{set}$, which returns the set of elements in a list. It satisfies $\const{set}\circ\const{map}\;f = f\langle \_\rangle \circ \const{set}$, where $f\langle \_\rangle$ denotes the function that maps a set $X$ to $f\langle X\rangle = \{f\;x\mid x \in X\}$, i.e., the image of $X$ under $f$.
Moreover, $\const{set}\;\var{xs}$ is always a finite set because lists are finite sequences.

Originally, BNFs were introduced for modularly constructing datatypes and codata\-types~\cite{blanchette14itp} in the Isabelle/HOL proof assistant.
Although (co)datatypes are still the most important use case, the BNF structure is used nowadays in other contexts such as reasoning via free theorems~\cite{LochbihlerSefidgarBasinMaurer2019CSF} and transferring theorems between types~\cite{Kuncar2016PhD,LochbihlerSchneider2018ITP}.

\looseness=-1
Several type definition principles in HOL
preserve the BNF structure:
composition (e.g., $(\alpha\;\tc{list})\;\tc{list}$), 
datatypes and codatatypes~\cite{TraytelPopescuBlanchette2012LICS}, and---under certain conditions---subtypes~\cite{Biendarra2015BA,LochbihlerSchneider2018ITP}.
Subtypes include records and type copies.
Accordingly, when a new type constructor is defined via one of these principles from an existing BNF, then the new type automatically comes with a mapper and setters and with theorems for the BNF properties.

One important type definition principle is missing above: quotients~\cite{Homeier2005TPHOLs,HuffmanKuncar2013CPP,KaliszykUrban2011SAC,Paulson2006TCL,Slotosch1997TPHOLs}.
A quotient type (Section~\ref{section:quotient}) identifies elements of an underlying type according to a (partial) equivalence relation $\optparen{\sim}$.
That is, the quotient type is isomorphic to the equivalence classes of $\optparen{\sim}$.
For example, unordered pairs $\alpha\;\tc{upair}$ are the quotient of ordered pairs $\alpha \times \alpha$ and the equivalence relation $\optparen{\sim_{\tc{upair}}}$ generated by $(x, y) \sim_{\tc{upair}} (y, x)$.
Similarly, finite sets, bags, and cyclic lists are quotients of lists where the equivalence relation permutes or duplicates the list elements as needed.

In this article, we answer the question when and how a quotient type inherits its underlying type's BNF structure.
It is well known that a quotient preserves the functorial properties if the underlying type's mapper preserves $\optparen{\sim}$;
then the quotient type's mapper is simply the lifting of the underlying type's mapper to equivalence classes~\cite{AvigadCarneiroHudon2019ITP}.

For setters, the situation is more complicated.
Ad\'amek et al.~\cite{AdamekGummTrnkova2010JLC} call a functor \emph{sound} if it preserves empty intersections.
All BNFs are sound.
However, we discovered that if the setters are defined as one would expect for a quotient, then the resulting functor may be unsound.
To repair the situation, we characterize the setters in terms of the mapper and identify a definition scheme for the setters that results in sound functors.
We then derive sufficient conditions on the equivalence relation $\optparen{\sim}$ for the BNF properties to be preserved for these definitions (Section~\ref{section:theory}).

Moreover, we have implemented an Isabelle/HOL command that automates the registration of a quotient type as a BNF (Section~\ref{section:implementation});
the user merely needs to discharge the conditions on $\optparen{\sim}$.
One of the conditions, subdistributivity, often requires considerable proof effort, though.
We therefore developed a novel sufficient criterion using confluent relations that simplifies the proofs in our case studies (Section~\ref{section:confluence}).
Our implementation is distributed with the Isabelle2021 release. Some of the examples in this article are only available in Isabelle's development repository\footnote{\href{https://isabelle.in.tum.de/repos/isabelle}{\texttt{https://isabelle.in.tum.de/repos/isabelle}}, revision be11fe268b33} and will be part of the forthcoming Isabelle release.

\paragraph{Contributions}
The main contributions of this article are the following:
\begin{enumerate}
\item
  We identify sufficient criteria for when a quotient type preserves the BNF properties of the underlying type.
  Registering a quotient as a BNFs allows (co)datatypes to nest recursion through it. Consider for example node-labeled unordered binary trees
  \begin{quotex}
      $\keyw{datatype}\;\tc{ubtree} = \const{Leaf} \mid \const{Node}\;\tc{nat}\; (\mkern-2mu\tc{ubtree}\;\tc{upair})$
  \end{quotex}
  BNF use cases beyond datatypes benefit equally.

\item 
  In particular, we show that the straightforward definitions would cause the functor to be unsound, and find better definitions that avoid unsoundness.
  This problem is not limited to BNFs.
  The lifting operations for Lean's quotients of polynomial functors~\cite{AvigadCarneiroHudon2019ITP} also suffer from unsoundness and our repair applies to them as well (Section~\ref{section:QPF}). We show in Section~\ref{section:QPF} that unsoundness hinders modular proofs.

\item
  We propose a sufficient criterion on $\optparen{\sim}$ for subdistributivity,
  which is typically the most difficult BNF property to show.
  We show with several examples that the criterion is applicable in practice and yields relatively simple proofs.

\item
  We have implemented an Isabelle/HOL command to register the quotient as a BNF once the user has discharged the conditions on $\optparen{\sim}$.
  The command also generates proof rules for transferring theorems about the BNF operations from the underlying type to the quotient (Section~\ref{section:transfer}).
  Several case studies demonstrate the command's usefulness.
  Some examples reformulate well-known BNFs as quotients (e.g., unordered pairs, distinct lists, finite sets). 
  Others formally prove the BNF properties for the first time, e.g., cyclic lists, the free idempotent monoid, and regular expressions modulo ACI\@.
  These examples become part of the collection of formalized BNFs and can thus be used in datatype definitions and other BNF applications.
\end{enumerate}

\begin{exa}\label{ex:regex}
    \looseness=-1
    To illustrate our contributions' usefulness,
    we consider linear dynamic logic (LDL)~\cite{DBLP:conf/ijcai/GiacomoV13}, an extension of linear temporal logic with regular expressions.
    LDL's syntax is usually given as two mutually recursive datatypes of formulas and regular expressions~\cite{DBLP:conf/ijcai/GiacomoV13,DBLP:conf/rv/BasinKT17}.
    Here, we opt for nested recursion, which has the modularity benefit of being able to formalize regular expressions separately.
    We define regular expressions $\alpha\;\tc{re}$:
    \begin{quotex}
        \begin{tabular}{@{}r@{\,}c@{}l@{}}
        \keyw{datatype}\;$\alpha\;\tc{re}$&${}={}$&$\const{Zero} \mid \const{Eps} \mid \const{Atom}\;\alpha$\\&$\mid$&$ \const{Alt}\;(\alpha\;\tc{re})\;(\alpha\;\tc{re}) \mid \const{Conc}\;(\alpha\;\tc{re})\;(\alpha\;\tc{re}) \mid \const{Star}\;(\alpha\;\tc{re})$
    \end{tabular}
    \end{quotex}

    Often, it is useful to consider regular expressions modulo some syntactic equivalences.
    For example, identifying expressions modulo the associativity, commutativity, and idempotence (ACI) of the alternation constructor $\const{Alt}$ results in a straightforward construction of deterministic finite automata from regular expressions via Brzozowski derivatives~\cite{DBLP:conf/itp/NipkowT14}. We define the ACI-equivalence $\optparen{\sim_\tc{aci}}$ as the least congruence relation satisfying:
    \begin{quotex}
        $\const{Alt}\;(\const{Alt}\;r\;s)\;t \sim_\tc{aci} \const{Alt}\;r\;(\const{Alt}\;s\;t)$
        \quad\;
        $\const{Alt}\;r\;s \sim_\tc{aci} \const{Alt}\;s\;r$
        \quad\;
        $\const{Alt}\;r\;r \sim_\tc{aci} r$
    \end{quotex}
%
    %
    %
%
    Next, we define the quotient type of regular expressions modulo ACI $\alpha\;\tc{re}_\tc{aci}$ and the datatype of LDL formulas $\tc{ldl}$, which uses nested recursion through $\alpha\;\tc{re}_\tc{aci}$.
    \begin{quotex}
        \keyw{quotient\_type}\;$\alpha\;\tc{re}_\tc{aci} = \alpha\;\tc{re} / \quotparen{\sim_\tc{aci}}$\\[0.7\jot]
        \keyw{datatype}\;$\tc{ldl} = \const{Prop}\;\tc{string} \mid \const{Neg}\;\tc{ldl} \mid \const{Conj}\;\tc{ldl}\;\tc{ldl} \mid \const{Match}\;(\tc{ldl}\;\tc{re}_\tc{aci})$
    \end{quotex}
    For the last declaration to succeed, Isabelle must know that $\alpha\;\tc{re}_\tc{aci}$ is a BNF\@. We will show in Section~\ref{section:confluence} how our work allows us to lift $\alpha\;\tc{re}$'s BNF structure to $\alpha\;\tc{re}_\tc{aci}$.\exampleend
\end{exa}

This article extends and revises the homonymous IJCAR conference
paper~\cite{DBLP:conf/cade/FurerLST20}. Specifically, the article newly
describes the interaction of quotients and \emph{non-emptiness
    witnesses}~\cite{DBLP:conf/esop/Blanchette0T15}, an additional piece of
information tracked as part of the BNF structure and used to prove non-emptiness
of inductive datatypes, which is a requirement for introducing new types in HOL
(Section~\ref{section:witness}). We also show how to lift the BNF structure to
\emph{partial quotients} by combining our constructions for quotients with the
ones for subtypes, and discuss limitations of this approach
(Section~\ref{section:partial:quotient}). Moreover, we include several previously omitted pen-and-paper proofs of our results, whose formalized counterparts are implemented as tactics as part of our Isabelle command to automate the lifting. We also give a more detailed
description of our command's interface
(Section~\ref{section:command}) and formalize several new examples, some of which required us to extend our results' scope. Notably, the new Example~\ref{ex:regex2} uses the new Lemma~\ref{lemma:wide:intersection:finite} and the updated Theorem~\ref{thm:confluent:quotient}, which generalizes the corresponding Theorem~4 from the conference paper.

\section{Background}

We work in Isabelle/HOL, Isabelle's variant of classical higher-order logic---a simply typed theory with Hilbert choice and rank-1 polymorphism. 
We refer to a textbook for a detailed introduction to Isabelle/HOL~\cite{DBLP:books/sp/NipkowK14} and only summarize relevant notation here.

\looseness=-1
Types are built from type variables $\alpha,\,\beta,\,\ldots$ via type constructors.
A type constructor can be nullary (\tc{nat}) or have some type arguments ($\alpha\;\tc{list}$, $\alpha\;\tc{set}$, $(\alpha,\,\beta)\;\tc{upair}$).
Type constructor application is written postfix.
Exceptions are the binary type constructors for sums ($+$), products ($\times$), and functions ($\fun$), all written infix.
Terms are built from variables $x,\,y,\,\ldots$ and constants $\const{c},\,\const{d},\,\ldots$ via lambda-abstractions $\lambda x.\; t$ and applications $t\;u$. The sum type's embeddings are $\const{Inl}$ and $\const{Inr}$ and the product type's projections are $\fst$ and $\snd$.

\looseness=-1
The primitive way of introducing new types in HOL is to take a non-empty subset of an existing type.
For example, the type of lists could be defined as the set of pairs $(n \hastype \tc{nat},\,f \hastype \tc{nat} \fun \alpha)$ where $n$ is the list's length and $f\;i$ is the list's $i$th element for $i < n$ and some fixed unspecified element of type $\alpha$ for $i \geq n$. 
To spare the users from such low-level encodings, Isabelle/HOL offers higher-level mechanisms for introducing new types, which are internally reduced to primitive subtyping.
In fact, lists are defined as an inductive  $\keyw{datatype}\;\alpha\;\tc{list} = [] \mid \alpha \mathbin\# \alpha\;\tc{list}$, where $[]$ is the empty list and $\#$ is the infix list constructor.
Recursion in datatypes and their coinductive counterparts may take place only under well-behaved type constructors, the bounded natural functors~(Section~\ref{section:BNF}).
Quotient types (Section~\ref{section:quotient}) are another high-level mechanism for introducing new types.

For $n$-ary definitions, we use the vector notation $\seq{x}$ to denote the sequence $x_1, \ldots, x_n$, where $n$ is clear from the context.
Vectors spanning several variables indicate repetition with synchronized indices. For example, $\const{map}_F\;\seq{(g\circ f)}$ abbreviates $\const{map}_F\;({g_1\circ f_1})\;\ldots\;({g_n\circ f_n})$. Abusing notation slightly, we write $\seq{\alpha} \fun \beta$ for the $n$-ary function type $\alpha_1 \fun \cdots \fun \alpha_n \fun \beta$.

\looseness=-1
To simplify notation, we identify the type of binary predicates $\alpha \fun \beta \fun \tc{bool}$ and sets of pairs $(\alpha \times \beta)\;\tc{set}$, and write $\alpha \otimes \beta$ for both.
These types are different in Isabelle/HOL and the BNF ecosystem works with binary predicates.
The identification allows us to mix set and relation operations, e.g., the subset relation $\subseteq$ and relation composition $\relcomp$ (both written infix).

\subsection{Bounded Natural Functors}%
\label{section:BNF}

\looseness=-1
A bounded natural functor (BNF)~\cite{TraytelPopescuBlanchette2012LICS} is an $n$-ary type constructor $\seq{\alpha}\;F$ equipped with the following polymorphic constants. Here and elsewhere, $i$ implicitly ranges over $\{1,\,\ldots,\,n\}$:
\begin{equation*}
\begin{tabular}{@{}l@{\qquad\qquad}l@{}}
$\const{map}_F \hastype \seq{(\alpha \fun \beta)} \fun \seq{\alpha}\;F \fun \seq{\beta}\;F$&

$\const{bd}_{F} \hastype \tc{cardinal}_F$\\[0.7\jot]

$\const{set}_{F,i} \hastype \seq{\alpha}\;F \fun \alpha_i\;\tc{set}$\quad for all $i$&

$\const{rel}_F \hastype \seq{(\alpha \reltype \beta)} \fun \seq{\alpha}\;F \reltype \seq{\beta}\;F$

\end{tabular}
\end{equation*}


\looseness=-1
The \emph{shape and content} intuition~\cite{TraytelPopescuBlanchette2012LICS} is a useful way of thinking about elements of $\seq{\alpha}\;F$.
The mapper $\const{map}_F$ leaves the shape unchanged but modifies the contents by applying its function arguments.
The $n$ setters $\const{set}_{F,i}$ extract the contents (and dispose of the shape).
For example, the shape of a list is given by its length, which $\const{map}$ preserves.
The cardinal bound $\const{bd}_F$ is a fixed bound on the number of elements returned by $\const{set}_{F,i}$.
Cardinal numbers are represented in HOL using particular well-ordered relations~\cite{DBLP:conf/itp/Blanchette0T14} over a large-enough type (specific to $F$).
We mention the bounds and cardinal numbers only for completeness;
they are not interesting for this article.
Finally, the relator $\const{rel}_F$ lifts relations on the type arguments to a relation on $\seq{\alpha}\;F$ and $\seq{\beta}\;F$. Thereby, it only relates elements of $\seq{\alpha}\;F$ and $\seq{\beta}\;F$ that have the same shape.

The BNF constants must satisfy the following properties:

\begin{center}
\begin{tabular}{@{}r@{\;\;}l@{\qquad}l@{\;}l@{}}
\theorem{map\_id}&$\const{map}_F\;\seq{\const{id}} = \const{id}$\\[0.7\jot]

\theorem{map\_comp}&$\const{map}_F\;\seq{g} \circ \const{map}_F\;\seq{f} = \const{map}_F\;\seq{(g \circ f)}$\\[0.7\jot]

\theorem{set\_map}&$\const{set}_{F,i} \circ \const{map}_F\;\seq{f} = f_i\langle \_\rangle \circ \const{set}_{F,i}$\\[0.7\jot]

\theorem{map\_cong}&\multicolumn{3}{@{}l@{}}{$(\forall i.\; \forall z \in \const{set}_{F,i}\;x.\;f_i\;z = g_i\;z)\implies \const{map}_F\;\seq{f}\;x = \const{map}_F\;\seq{g}\;x$}\\[0.7\jot]

\theorem{set\_bd}&$|\const{set}_{F,i}\;x| \leq_o \const{bd}_F$\\[0.7\jot]

\theorem{bd}&$\const{infinite\_card}\;\const{bd}_F$\\[0.7\jot]

\theorem{in\_rel}&\multicolumn{3}{@{}l@{}}{$\const{rel}_F\;\seq{R}\;x\;y = \exists z.\;(\forall i.\;\const{set}_{F,i}\;z\subseteq R_i) \land \const{map}\;\seq{\fst}\;z = x \land \const{map}\;\seq{\snd}\;z = y$}\\[0.7\jot]

\theorem{rel\_comp}&$\const{rel}_F\;\seq{R} \relcomp \const{rel}_F\;\seq{S} \subseteq \const{rel}_F\;\seq{(R \relcomp S)}$
\end{tabular}
\end{center}
%
%
\looseness=-1
Properties \theorem{map\_id} and \theorem{map\_comp} capture the mapper's functoriality; \theorem{set\_map} the setters' naturality. Moreover, the mapper and the setters must agree on what they identify as content (\theorem{map\_cong}).
Any set returned by $\const{set}_{F,i}$ must be bounded (\theorem{set\_bd}); the operator $\leq_o$ compares cardinal numbers~\cite{DBLP:conf/itp/Blanchette0T14}.
The bound is required to be infinite (\theorem{bd}), which simplifies arithmetics.
The relator can be expressed in terms of the mapper and the setter (\theorem{in\_rel}) and must distribute over relation composition (\theorem{rel\_comp}). 
The other inclusion, namely $\const{rel}_F\;\seq{(R \relcomp S)} \subseteq \const{rel}_F\;\seq{R} \relcomp \const{rel}_F\;\seq{S}$, follows from these properties.
We refer to \theorem{rel\_comp} as \emph{subdistributivity} because it only requires one inclusion. In principle, the setter can also be expressed in terms of the mapper as the least set satisfying the congruence rule \theorem{map\_cong}. We rely on this fact in Section~\ref{section:theory}. Making the setters and the relator part of the BNF structure (rather than defining everything from the mapper) simplifies the statement of the BNF properties.

A useful derived operator is the action on sets $\internalize{F} \hastype \seq{\alpha\;\tc{set}} \fun \seq{\alpha}\;F\;\tc{set}$, which generalizes the type constructor's action on its type arguments.
Formally, $\internalize{F}\;\seq{A} = \{x \mid \forall i.\; \const{set}_{F,i}\;x\subseteq A_i\}$. Note that we can write $z \in \internalize{F}\;\seq{R}$ to replace the equivalent $\forall i.\;\const{set}_{F,i}\;z\subseteq R_i$ in \theorem{in\_rel}.

\looseness=-1
Most basic types are BNFs, notably, sum and product types.
BNFs are closed under composition, e.g., $\unit + \alpha \times \beta$ is a BNF with the mapper $\lambda f\; g.\;\const{map}_{\unit +}\;(\const{map}_\times\;f\;g)$, where $\unit$ is the unit type (consisting of the single element $\star$) and $\const{map}_{\unit +}\;h = \const{map}_{+}\;\const{id}\;h$.
Moreover, BNFs support fixpoint operations, which correspond to (co)datatypes,  and are closed under them~\cite{TraytelPopescuBlanchette2012LICS}. For instance, the \keyw{datatype} command internally computes a least solution for the fixpoint type equation $\beta = \unit + \alpha \times \beta$ 
to define the $\alpha\;\tc{list}$ type.
Closure means that the resulting datatype, here $\alpha\;\tc{list}$, is equipped with the BNF structure, specifically the mapper $\const{map}$.
Also subtypes inherit the BNF structure under certain
conditions~(Section~\ref{section:partial:quotient}).
For example, the subtype $\alpha\;\tc{nelist}$ of non-empty lists $\{\var{xs} :: \alpha\;\tc{list} \mid \var{xs} \neq []\}$ is a BNF\@.

\subsection{Quotient types}%
\label{section:quotient}

\looseness=-1
An equivalence relation $\optparen{\sim}$ on a type $T$ partitions the type into equivalence classes.
Isa\-belle/HOL supports the definition of the quotient type $Q = T / \quotparen{\sim}$, which yields a new type $Q$ isomorphic to the set of equivalence classes~\cite{KaliszykUrban2011SAC}.
For example, consider $\optparen{\sim_{\tc{fset}}}$ that relates two lists if they have the same set of elements, i.e., $\var{xs} \sim_{\tc{fset}} \var{ys}$ iff $\const{set}\;\var{xs} = \const{set}\;\var{ys}$.
The following command defines the type $\alpha\;\tc{fset}$ of finite sets as a quotient of lists:
\begin{quotex}
  $\keyw{quotient\_type }\alpha\;\tc{fset} = \alpha\;\tc{list} / \quotparen{\sim_{\tc{fset}}}$
\end{quotex}
This command requires a proof that $\optparen{\sim_{\tc{fset}}}$ is, in fact, an equivalence relation.
The relationship between a quotient type $Q$ and the underlying type $T$ is formally captured by the correspondence relation $\const{cr}_{Q} \hastype T \otimes Q$.
For example, $(\mathit{xs}, X) \in \const{cr}_{\tc{fset}}$ iff the list $\mathit{xs}$ is a representative of the finite set $X$, i.e.,
$X$ corresponds to the unique equivalence class that contains $\mathit{xs}$.

The Lifting and Transfer tools~\cite{HuffmanKuncar2013CPP,Kuncar2016PhD} automate the lifting of definitions and theorems from the raw type $T$ to the quotient $Q$.
For example, the image operation on finite sets can be obtained by lifting the list mapper $\const{map}$ using the command
\begin{quotex}
  $\keyw{lift\_definition }\const{fimage} \hastype (\alpha \fun \beta) \fun \alpha\;\tc{fset} \fun \beta\;\tc{fset}\keyw{ is }\const{map}$
\end{quotex}
Lifting is only possible for terms that respect the quotient.
For $\const{fimage}$, respectfulness states that $\const{map}\;f\;\var{xs} \sim_{\tc{fset}} \const{map}\;f\;\var{ys}$ whenever $\var{xs} \sim_{\tc{fset}} \var{ys}$.

Lifting and Transfer are based on \emph{transfer rules} that relate two terms of possibly different types. 
The \keyw{lift\_definition} command automatically proves the transfer rule
\begin{equation*}
  (\const{map},\const{fimage}) \in ((=) \relfun \const{cr}_{\tc{fset}} \relfun \const{cr}_{\tc{fset}})
\end{equation*}
where $R \relfun S$ (right-associative) relates two functions iff they map $R$-related arguments to $S$-related results.
The meaning of the above rule is that applying $\const{map}\;f$ to a list representing the finite set $X$ results in a list that represents $\const{fimage}\;f\;X$, for all $f$.
The transfer rule's relation $(=) \relfun \const{cr}_{\tc{fset}} \relfun \const{cr}_{\tc{fset}}$ is constructed according to the types of the related terms.
This enables the composition of transfer rules to relate larger terms.
For instance, the Transfer tool derives the following equivalence using the above and other transfer rules:
\begin{equation*}
   (\forall \mathit{xs}.\;\const{set}\;(\const{map}\;\const{id}\;\mathit{xs}) = \const{set}\;\mathit{xs}) \Iff (\forall X.\;\const{fimage}\;\const{id}\;X = X)
\end{equation*}
Thus, one can prove the equation $\forall X.\;\const{fimage}\;\const{id}\;X = X$ by reasoning about lists.

Proper equivalence relations are reflexive.
Therefore, every element of the type $T$ is part of exactly one equivalence class.
It is also possible to define a partial quotient from a partial equivalence relation, which might not be reflexive.
The \keyw{quotient\_type} command and the Lifting and Transfer tools support partial quotients.
Note that partial quotients subsume subtypes (take the restriction of equality to the subset as the partial equivalence).

\section{Quotients of Bounded Natural Functors}%
\label{section:theory}

We develop the theory for when a quotient type inherits the underlying type's BNF structure.
We consider the quotient $\seq{\alpha}\;Q = \seq{\alpha}\;F / \quotparen{\sim}$ of an $n$-ary BNF $\seq{\alpha}\;F$ over an equivalence relation $\optparen{\sim}$ on $\seq{\alpha}\;F$.
The first idea is to define $\const{map}_Q$ and $\const{set}_{Q,i}$ in terms of $F$'s operations:
\begin{quotex}
  $\keyw{quotient\_type}\;\seq{\alpha}\;Q = \seq{\alpha}\;F / \quotparen{\sim}$
  \\[0.7\jot]
  $\keyw{lift\_definition}\;\const{map}_Q \hastype \seq{(\alpha \fun \beta)} \fun \seq{\alpha}\;Q \fun \seq{\beta}\;Q\;\;\keyw{is}\;\;\const{map}_F$
  \\[0.7\jot]
  $\keyw{lift\_definition}\;\const{set}_{Q,i} \hastype \seq{\alpha}\;Q \fun \alpha_i\;\tc{set}\;\;\keyw{is}\;\;\const{set}_{F,i}$
\end{quotex}
These three commands require the user to discharge the following proof obligations:
\begin{equation}
  \label{eq:equiv:sim}
  \const{equivp}\ \optparen{\sim}
\end{equation}
\begin{equation}
  \label{eq:map:respect}
  x \sim y \Longrightarrow \const{map}_F\;\seq{f}\;x \sim \const{map}_F\;\seq{f}\;y
\end{equation}
\begin{equation}
  \label{eq:set:respect:naive}
  x \sim y \Longrightarrow \const{set}_{F,i}\;x = \const{set}_{F,i}\; y
\end{equation}
\looseness=-1
The first two conditions are as expected:
$\optparen{\sim}$ must be an equivalence relation, by~\eqref{eq:equiv:sim}, and compatible with $F$'s mapper, by~\eqref{eq:map:respect}, i.e., $\const{map}_F$ preserves $\optparen{\sim}$.
The third condition, however, demands that equivalent values contain the same atoms.
This rules out many practical examples including the following simplified (and therefore slightly artificial) one.

\begin{exa}\label{ex:Inl:Inl}
    Consider $\alpha\;F_P = \alpha + \alpha$ with the equivalence relation $\optparen{\sim_P}$ generated by $\const{Inl}\;x \sim_P \const{Inl}\;y$, where $\const{Inl}$ is the sum type's left embedding. 
    That is, $\optparen{\sim_P}$ identifies all values of the form $\const{Inl}\;z$ and thus $\alpha\;Q_P = \alpha\;F_P / \quotparen{\sim_P}$ is isomorphic to the type $\unit + \alpha$. However, $\const{Inl}\;x$ and $\const{Inl}\;y$ have different sets of atoms $\{x\}$ and $\{y\}$, assuming $x\neq y$.
    \exampleend
\end{exa}

We now derive better definitions for the setters and conditions under which they preserve the BNF properties.
To that end, we characterize setters in terms of the mapper (Section~\ref{section:setter:characterization}).
Using this characterization, we derive the relationship between $\const{set}_{Q,i}$ and $\const{set}_{F,i}$ and identify the conditions on $\optparen{\sim}$ (Section~\ref{section:repair}).
Next, we do the same for the relator (Section~\ref{section:relator}).
We thus obtain the conditions under which $\seq{\alpha}\;Q$ preserves $F$'s BNF properties.

\looseness=-1
One of the conditions, the relator's subdistributivity over relation composition, is often difficult to show directly in practice.
We therefore present an easier-to-establish criterion for the special case where a confluent rewrite relation $\optparen{\rightsquigarrow}$ generates $\optparen{\sim}$ (Section~\ref{section:confluence}).

Finally, we discuss the interaction of quotients with non-emptiness witnesses (Section~\ref{section:witness}), an additional piece of information tracked by BNFs, and the generalization to partial quotients, where $\optparen{\sim}$ is a partial equivalence relation, i.e., 
not necessarily reflexive (Section~\ref{section:partial:quotient}).

\subsection{Characterization of the BNF setter}%
\label{section:setter:characterization}

We now characterize $\const{set}_{F,i}$ in terms of $\const{map}_F$ for an arbitrary BNF $\seq{\alpha}\;F$.
Observe that $F$'s action  $\internalize{F}\;\seq{A}$ on sets contains all values that can be built with atoms from $\seq{A}$.
Hence, $\const{set}_{F,i}\;x$ is the smallest set $A_i$ such that $x$ can be built from atoms in $A_i$. Formally, the next equation follows directly from the definition of \internalize{F}:
\begin{equation}
  \label{eq:set:in}
  \const{set}_{F,i}\;x = \bigcap\{A_i \mid x \in \internalize{F}\;\seq{\const{UNIV}}\;A_i\;\seq{\const{UNIV}}\}
\end{equation}
\looseness=-1
Only atoms of type $\alpha_i$ are restricted; all atoms of other types $\alpha_j$ may come from $\const{UNIV}$, the set of all elements of type $\alpha_j$.
Moreover, $\internalize{F}$ can be defined without $\const{set}_{F,i}$, namely by trying to distinguish values using the mapper.
Informally, $x$ contains atoms not from $\seq{A}$ iff $\const{map}_F\;\seq{f}\;x$ differs from $\const{map}_F\;\seq{g}\;x$ for some functions $\seq{f}$ and $\seq{g}$ that agree on $\seq{A}$.
Hence, we obtain
\begin{equation}
  \label{eq:F_in}
  \internalize{F}\;\seq{A} = \{x \mid \forall \seq{f}\;\seq{g}.\;(\forall i.\;\forall a\in A_i.\; f_i\; a = g_i\; a) \longrightarrow \const{map}_F\;\seq{f}\;x = \const{map}_F\;\seq{g}\;x\},
\end{equation}
where $f_i, g_i \hastype \alpha_i \fun \unit + \alpha_i$.
The range type $\unit + \alpha_i$ adds a new atom $\circledast = \const{Inl}\;\star$ to the atoms of type $\alpha_i$.
Thus $\unit + \alpha_i$ contains at least two atoms, as all HOL types are inhabited, and $f_i$ and $g_i$ can therefore meaningfully distinguish atoms (for singleton types $\seq{\alpha}$, the right hand side would hold trivially if $f_i, g_i$ had type the $\alpha_i \fun \alpha_i$ because there is only one such function).
We write $\embedding \hastype \alpha \fun \unit + \alpha$ for the embedding of $\alpha$ into $\unit + \alpha$ (i.e., $\embedding = \const{Inr}$).

\begin{proof}
  From left to right is trivial with \theorem{map\_cong}.
  So let $x$ be such that $\const{map}_F\;\seq{f}\;x = \const{map}_F\;\seq{g}\;x$ whenever $f_i\;a = g_i\;a$ for all $a \in A_i$ and all $i$.
  By the definition of $\internalize{F}$, it suffices to show that $\const{set}_{F,i}\;x \subseteq A_i$.
  Set $f_i\;a = \embedding\;a$ if $a \in A_i$ and $f_i\;a = \circledast$ for $a \in A_i$, and $g_i = \embedding$.
  Then,
  \begin{equation*}
    \begin{tabular}{@{}l@{}l@{\qquad}l@{}}
      $f_i \langle \const{set}_{F,i}\;x \rangle$ &
      ${} = \const{set}_{F,i}\;(\const{map}_F\;\seq{f}\;x)$
      &
      by \theorem{set\_map}
      \\[0.7\jot]
      &
      ${} = \const{set}_{F,i}\;(\const{map}_F\;\seq{g}\;x)$
      &
      by choice of $x$ as $\seq{f}$ and $\seq{g}$ agree on $\seq{A}$
      \\[0.7\jot]
      &
      ${} = \embedding \langle \const{set}_{F,i}\;x\rangle$
      &
      by \theorem{set\_map}
    \end{tabular}
  \end{equation*}
  Therefore, $\forall a \in \const{set}_{F,i}\;x.\;\exists y.\; f_i\;a = \embedding\;y$, i.e., $\const{set}_{F,i}\;x \subseteq A_i$ by $f_i$'s definition.
\end{proof}
%
\indent
Equations~\ref{eq:set:in} and~\ref{eq:F_in} reduce the setters $\const{set}_{F,i}$ of a BNF to its mapper $\const{map}_F$.
In the next section, we will use this characterization to derive a definition of $\const{set}_{Q,i}$ in terms of $\const{set}_{F,i}$.
However, this definition does not give us naturality out of the box.

\begin{exaC}[{\cite[Example 4.2, part iii]{AdamekGummTrnkova2010JLC}}]%
  \label{ex:ae}
  Consider the functor $\alpha\;F_{\tc{seq}} = \tc{nat} \fun \alpha$ of infinite sequences with $x \sim_{\tc{ae}} y$ whenever $\{n \mid x\; n \neq y\; n\}$ is finite.
  That is, two sequences are equivalent iff they are equal almost everywhere.
  Conditions~\eqref{eq:equiv:sim} and~\eqref{eq:map:respect} hold, but not the naturality for the corresponding $\const{map}_Q$ and $\const{set}_Q$.
  \exampleend
\end{exaC}

Gumm~\cite{Gumm2005CALCO} showed that $\const{set}_F$ as defined in terms of~\eqref{eq:set:in} and~\eqref{eq:F_in} is a natural transformation iff $\internalize{F}$ preserves wide intersections and preimages, i.e.,
\begin{gather}
  \label{eq:wide:intersection}
  \internalize{F}\;\seq{(\bigcap \mathcal{A})} =
  \bigcap \{\internalize{F}\; \seq{A} \mid \forall i.\; A_i \in \mathcal{A}_i\}
  \\
  \label{eq:preimage:preservation}
  \internalize{F}\;\seq{(f^{-1}\langle A\rangle)} = (\const{map}_F\ \seq{f})^{-1}\langle\internalize{F}\;\seq{A}\rangle
\end{gather}
where $f^{-1}\langle A\rangle = \{ x \mid f\ x \in A \}$ denotes the preimage of $A$ under $f$.
Then, $\internalize{F}\;\seq{A} = \{ x \mid \forall i.\;\const{set}_{F,i}\;x \subseteq A_i\}$ holds.
The quotient in Example~\ref{ex:ae} does not preserve wide intersections.

In theory, we have now everything we need to define the BNF operations on the quotient $\seq{\alpha}\;Q = \seq{\alpha}\;F / \quotparen{\sim}$:
Define $\const{map}_Q$ as the lifting of $\const{map}_F$.
Define $\internalize{Q}$ and $\const{set}_{Q,i}$ using~\eqref{eq:F_in} and~\eqref{eq:set:in} in terms of $\const{map}_Q$, and the relator via \theorem{in\_rel}.
Prove that $\internalize{Q}$ preserves preimages and wide intersections.
Prove that $\const{rel}_Q$ satisfies subdistributivity (\theorem{rel\_comp}).

\looseness=-1
Unfortunately, the definitions and the preservation conditions are phrased in terms of $Q$, not in terms of $F$ and $\optparen{\sim}$.
It is therefore unclear how $\const{set}_{Q,i}$ and $\const{rel}_Q$ relate to $\const{set}_{F,i}$ and $\const{rel}_F$.
In practice, understanding this relationship is important:
we want to express the BNF operations and discharge the proof obligations in terms of $F$'s operations and later use the connection to transfer properties from $\const{set}_F$ and $\const{rel}_F$ to $\const{set}_Q$ and $\const{rel}_Q$.
We will work out the precise relationships for the setters in Section~\ref{section:repair} and for the relator in Section~\ref{section:relator}.

\subsection{The quotient's setter}%
\label{section:repair}

We relate $Q$'s setters to $F$'s operations and $\optparen{\sim}$.
We first look at $\internalize{Q}$, which characterizes $\const{set}_{Q,i}$ via~\eqref{eq:set:in}.
Let $[x]_\sim = \{y \mid x \sim y\}$ denote the equivalence class that $x \hastype \seq{\alpha}\;F$ belongs to, and $[A]_\sim = \{ [x]_\sim \mid x \in A\}$ denote the equivalence classes of elements in $A$.
We identify the values of $\seq{\alpha}\;Q$ with $\seq{\alpha}\;F$'s equivalence classes.
Then, it follows using~\eqref{eq:equiv:sim},~\eqref{eq:map:respect}, and~\eqref{eq:F_in} that $\internalize{Q}\;A = [\internalizesim{F}{\sim}\;A]_\sim$ where
\begin{equation}
  \label{eq:F_in'}
  \internalizesim{F}{\sim}\;\seq{A} = \{ x \mid \forall \seq{f}\;\seq{g}.\;(\forall i.\;\forall a\in A_i.\; f_i\; a = g_i\; a) \longrightarrow \const{map}_F\;\seq{f}\;x \sim \const{map}_F\;\seq{g}\;x\}
\end{equation}
with $f_i, g_i \hastype \alpha_i \fun \unit + \alpha_i$.
Equation~\ref{eq:F_in'} differs from~\eqref{eq:F_in} only in that the equality in $\const{map}_F\;\seq{f}\;x = \const{map}_F\;\seq{g}\;x$ is replaced by $\optparen{\sim}$.
Clearly $[\internalize{F}\;\seq{A}]_\sim \subseteq [\internalizesim{F}{\sim}\;\seq{A}]_\sim$.
The converse holds for non-empty sets $A_i$, as shown next.

\begin{lem}
  If $A_i \neq \{\}$ for all $i$,
  then $[\internalizesim{F}{\sim}\;\seq{A}]_\sim \subseteq [\internalize{F}\;\seq{A}]_\sim$.
\end{lem}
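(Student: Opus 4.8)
The plan is to show that every equivalence class appearing on the left already has a representative inside $\internalize{F}\;\seq{A}$. Concretely, I would fix an arbitrary $x \in \internalizesim{F}{\sim}\;\seq{A}$ and construct some $x'$ with $x \sim x'$ and $x' \in \internalize{F}\;\seq{A}$; since $\internalize{F}\;\seq{A} = \{z \mid \forall i.\;\const{set}_{F,i}\;z \subseteq A_i\}$, it is enough to produce an equivalent $x'$ all of whose atoms lie in the respective $A_i$.

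The construction is where the hypothesis $A_i \neq \{\}$ enters: fix some $a_i^0 \in A_i$ and define the retraction $r_i \hastype \alpha_i \fun \alpha_i$ by $r_i\;a = a$ when $a \in A_i$ and $r_i\;a = a_i^0$ otherwise, so that $r_i$ is the identity on $A_i$ and its range is contained in $A_i$. Setting $x' = \const{map}_F\;\seq{r}\;x$, property \theorem{set\_map} gives $\const{set}_{F,i}\;x' = r_i\langle\const{set}_{F,i}\;x\rangle \subseteq A_i$, so indeed $x' \in \internalize{F}\;\seq{A}$.

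It remains to establish $x \sim x'$. I would instantiate the defining property of $x \in \internalizesim{F}{\sim}\;\seq{A}$ with $f_i = \embedding \circ r_i$ and $g_i = \embedding$, both of type $\alpha_i \fun \unit + \alpha_i$. These agree on $A_i$ because $r_i$ acts as the identity there, so we obtain $\const{map}_F\;\seq{(\embedding \circ r)}\;x \sim \const{map}_F\;\seq{\embedding}\;x$; by \theorem{map\_comp} the left-hand side rewrites to $\const{map}_F\;\seq{\embedding}\;x'$, yielding $\const{map}_F\;\seq{\embedding}\;x' \sim \const{map}_F\;\seq{\embedding}\;x$ in $\seq{(\unit + \alpha)}\;F$. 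To return to type $\seq{\alpha}\;F$, I would pull this relation back along a left inverse $\seq{p}$ of $\seq{\embedding}$, i.e.\ functions $p_i \hastype \unit + \alpha_i \fun \alpha_i$ with $p_i \circ \embedding = \const{id}$ (such $p_i$ exist since $\embedding = \const{Inr}$ is injective and every HOL type is inhabited). Applying respectfulness~\eqref{eq:map:respect} with the $\seq{p}$ and simplifying both sides via \theorem{map\_comp} and \theorem{map\_id} gives $x' \sim x$, whence $[x]_\sim = [x']_\sim \in [\internalize{F}\;\seq{A}]_\sim$ by~\eqref{eq:equiv:sim}.

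The one genuinely delicate point I anticipate is this last transport step: the membership condition of $\internalizesim{F}{\sim}$ is phrased with functions landing in $\unit + \alpha_i$, so it delivers an equivalence at the type $\seq{(\unit + \alpha)}\;F$ rather than at $\seq{\alpha}\;F$, and recovering $x' \sim x$ at the original type is exactly what the left inverse of $\embedding$—combined with functoriality and the respectfulness of $\sim$—buys us. The nonemptiness hypothesis is used solely to make $r_i$ total while confining its range to $A_i$; it is also precisely what fails when some $A_i = \{\}$, which is why the reverse inclusion does not hold unconditionally.
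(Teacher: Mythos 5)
Your proof is correct and follows essentially the same route as the paper's: both fix witnesses $a_i \in A_i$, map $x$ through the retraction onto $\seq{A}$ (your $\seq{r}$, the paper's $\seq{h}$), derive $x \sim x'$ by instantiating the defining property of $\internalizesim{F}{\sim}\;\seq{A}$ with $\embedding$ versus $\embedding \circ r_i$ and then transporting back along a left inverse of $\embedding$ via \theorem{map\_comp}, \theorem{map\_id}, and~\eqref{eq:map:respect}---including the delicate point you flag, that the equivalence is first obtained at type $\seq{(\unit+\alpha)}\;F$ and must be pulled back to $\seq{\alpha}\;F$. The only (harmless) variation is that you verify $x' \in \internalize{F}\;\seq{A}$ directly from the setter definition via \theorem{set\_map}, whereas the paper checks the mapper characterization~\eqref{eq:F_in}; your version is if anything slightly more direct.
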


\begin{proof}
  Since $A_i$ is non-empty, fix $a_i \in A_i$ for all $i$.
  Let $x \in \internalizesim{F}{\sim}\;\seq{A}$ and consider $y = \const{map}_F\;\seq{h}\;x$
  where $h_i\;a = a$ if $a \in A_i$ and $h_i\;a = a_i$ otherwise.
  Let $\embedding^{-1}$ denote the left-inverse of $\embedding$.
  Then,
  \begin{equation*}
    \begin{tabular}{@{}l@{}l@{\qquad}l@{}}
      $x$ &
      ${} = \const{map}_F\;\seq{(\embedding^{-1} \circ \embedding)}\;x$
      &
      by \theorem{map\_id} and $\embedding^{-1} \circ \embedding = \const{id}$
      \\[0.7\jot]
      &
      ${} = \const{map}_F\;\seq{\embedding^{-1}}\;(\const{map}_F\;\seq{\embedding}\;x)$
      &
      by \theorem{map\_comp}
      \\[0.7\jot]
      &
      ${} \sim \const{map}_F\;\seq{\embedding^{-1}}\;(\const{map}_F\;\seq{(\embedding \circ h)}\;x)$
      &
      by~\eqref{eq:F_in'} and~\eqref{eq:map:respect} as $\embedding\;a = (\embedding \circ h_i)\;a$ for $a \in A_i$
      \\[0.7\jot]
      &
      ${} = \const{map}_F\;\seq{(\embedding^{-1} \circ \embedding \circ h)}\;x = y$
      &
      by \theorem{map\_comp} and $\embedding^{-1} \circ \embedding = \const{id}$
    \end{tabular}
  \end{equation*}
  It therefore suffices to show that $y \in \internalize{F}\;\seq{A}$.
  Let $\seq{f}$ and $\seq{g}$ with $f_i, g_i \hastype \alpha_i \fun \unit + \alpha_i$ such that $f_i\;a = g_i\;a$ for all $a \in A_i$.
  Then $f_i \circ h_i = g_i \circ h_i$ as the range of $h_i$ is $A_i$.
  So $\const{map}_F\;\seq{f}\;y = \const{map}_F\;\seq{(f \circ h)}\;x = \const{map}_F\;\seq{(g \circ h)}\;x = \const{map}_F\;\seq{g}\;y$.
  Thus $y \in \internalize{F}\;\seq{A}$.
\end{proof}

However, this inclusion $[\internalizesim{F}{\sim}\;\seq{A}]_\sim \subseteq [\internalize{F}\;\seq{A}]_\sim$ may fail for empty sets $A_i$, as the next example shows.

\begin{exa}[Example~\ref{ex:Inl:Inl} continued]%
  \label{ex:Inl:Inl:cont}
  For the example viewing $\unit + \alpha$ as a quotient of $\alpha\;F_P = {\alpha + \alpha}$ via $\optparen{\sim_P}$, we have $[\const{Inl}\;x]_{\sim_P} \in [\internalizesim{F_P}{\sim_P}\;\{\}]_{\sim_P}$ because $\const{map}_{F_P}\;f\;(\const{Inl}\;x) = \const{Inl}\;(f\;x) \sim_P \const{Inl}\;(g\;x) = \const{map}_{F_P}\;g\;(\const{Inl}\;x)$ for all $f$ and $g$.
  Yet $\internalize{F_P}\;\{\}$ is empty, and so is $[\internalize{F_P}\;\{\}]_{\sim_P}$.
  \exampleend
\end{exa}

To avoid the problematic case of empty sets, we change types:
instead of $\seq{\alpha}\;F / \quotparen{\sim}$, we consider the quotient $\seq{(\unit + \alpha)}\;F / \quotparen{\sim}$.
Then, we have the following equivalence: 

\begin{lem}%
  \label{lem:F_in'}
  $\internalizesim{F}{\sim}\;\seq{A} = \{ x \mid [\const{map}_F\;\seq{\embedding}\;x]_\sim \in [\internalize{F}\;\seq{(\{\circledast\} \cup \embedding\langle A\rangle)}]_\sim\}$.
\end{lem}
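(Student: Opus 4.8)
The plan is to prove the stated set equality by two inclusions. Write $B_i = \{\circledast\} \cup \embedding\langle A_i\rangle \subseteq \unit + \alpha_i$, so that $\internalize{F}\;\seq{B}$ abbreviates the right-hand side's $\internalize{F}\;\seq{(\{\circledast\} \cup \embedding\langle A\rangle)}$ and, concretely, $z \in \internalize{F}\;\seq{B}$ means $\const{set}_{F,i}\;z \subseteq B_i$ for all $i$. The point of having switched to $\seq{(\unit + \alpha)}\;F$ is that $\circledast \in B_i$ always, so $B_i$ is never empty; this is exactly what repairs the inclusion that Example~\ref{ex:Inl:Inl:cont} showed to fail on $\seq{\alpha}\;F$. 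Throughout, $\optparen{\sim}$ is read as the relation on $\seq{(\unit + \alpha)}\;F$, which is the common type of all the map-images appearing below.

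First I would handle the inclusion from left to right. The obvious representative $\const{map}_F\;\seq{\embedding}\;x$ of the class $[\const{map}_F\;\seq{\embedding}\;x]_\sim$ generally does \emph{not} lie in $\internalize{F}\;\seq{B}$: by \theorem{set\_map} its atoms are $\embedding\langle\const{set}_{F,i}\;x\rangle$, which are contained in $B_i$ only when $\const{set}_{F,i}\;x \subseteq A_i$. So instead I collapse the stray atoms onto $\circledast$: put $p_i \hastype \unit + \alpha_i \fun \unit + \alpha_i$ with $p_i\;b = b$ for $b \in B_i$ and $p_i\;b = \circledast$ otherwise, and take $z = \const{map}_F\;\seq{(p \circ \embedding)}\;x$. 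Then \theorem{set\_map} gives $\const{set}_{F,i}\;z \subseteq B_i$, so $z \in \internalize{F}\;\seq{B}$; and since $p_i \circ \embedding$ and $\embedding$ both send every $a \in A_i$ to $\embedding\;a \in B_i$, they agree on $A_i$, so \eqref{eq:F_in'} applied to $x \in \internalizesim{F}{\sim}\;\seq{A}$ yields $z \sim \const{map}_F\;\seq{\embedding}\;x$. Hence $[\const{map}_F\;\seq{\embedding}\;x]_\sim = [z]_\sim \in [\internalize{F}\;\seq{B}]_\sim$.

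For the converse, assume $\const{map}_F\;\seq{\embedding}\;x \sim z$ for some $z \in \internalize{F}\;\seq{B}$ and abbreviate $w = \const{map}_F\;\seq{\embedding}\;x$; given arbitrary $f_i, g_i \hastype \alpha_i \fun \unit + \alpha_i$ that agree on $A_i$, I must show $\const{map}_F\;\seq{f}\;x \sim \const{map}_F\;\seq{g}\;x$. Since $\embedding$ is injective I can factor $f_i = \hat f_i \circ \embedding$ and $g_i = \hat g_i \circ \embedding$ through endofunctions $\hat f_i, \hat g_i \hastype \unit + \alpha_i \fun \unit + \alpha_i$, and---the one free choice---fix $\hat f_i\;\circledast = \hat g_i\;\circledast$. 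Then $\hat f_i$ and $\hat g_i$ agree on all of $B_i$ (on $\circledast$ by construction, and on $\embedding\;a$ with $a \in A_i$ because $f_i\;a = g_i\;a$), so \theorem{map\_cong} applied to $z$, whose atoms lie in $B_i$, gives $\const{map}_F\;\seq{\hat f}\;z = \const{map}_F\;\seq{\hat g}\;z$. The claim then follows by the chain
\begin{align*}
  \const{map}_F\;\seq{f}\;x
    &= \const{map}_F\;\seq{\hat f}\;w
    \sim \const{map}_F\;\seq{\hat f}\;z \\
    &= \const{map}_F\;\seq{\hat g}\;z
    \sim \const{map}_F\;\seq{\hat g}\;w
    = \const{map}_F\;\seq{g}\;x,
\end{align*}
using \theorem{map\_comp} for the factorizations, \eqref{eq:map:respect} on $w \sim z$ for the two $\optparen{\sim}$-steps, and the symmetry and transitivity of $\optparen{\sim}$ from \eqref{eq:equiv:sim}. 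I expect the main obstacle to be conceptual rather than computational: because $x$ may carry atoms outside $A_i$, one cannot argue with $w$ directly but must route everything through a genuinely $\optparen{\sim}$-equivalent surrogate $z$ confined to $B_i$, which is precisely where \theorem{map\_cong} regains traction; the only delicate bookkeeping is transporting the endofunctions $\hat f_i, \hat g_i$ on $\unit + \alpha_i$ back to $f_i, g_i$ on $\alpha_i$ via the injectivity of $\embedding$.
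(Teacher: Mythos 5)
Your proof is correct and takes essentially the same route as the paper's: left to right you use the identical collapsing function (your $p_i \circ \embedding$ is precisely the paper's $f_i$, sending $a \in A_i$ to $\embedding\;a$ and everything else to $\circledast$), and right to left you likewise sandwich an equality on the surrogate $z$ between two applications of~\eqref{eq:map:respect} along $\const{map}_F\;\seq{\embedding}\;x \sim z$. The only cosmetic difference is in transporting $f_i, g_i$ to the lifted type: the paper writes $f_i = \embedding^{-1} \circ \const{map}_{\unit+{}}\;f_i \circ \embedding$, working through the doubly-summed type $\unit + (\unit + \alpha_i)$ and discharging the agreement step via~\eqref{eq:F_in}, whereas you factor $f_i = \hat f_i \circ \embedding$ through endofunctions on $\unit + \alpha_i$ (using injectivity of $\embedding$ and matching choices at $\circledast$) and invoke \theorem{map\_cong} directly --- a slightly leaner bookkeeping of the same argument.
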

%
\begin{proof}
  For the left to right direction,
  let $x \in \internalizesim{F}{\sim}\;\seq{A}$ and set $f_i\;y = \embedding\;y$ for $y \in A_i$ and $f_i\;y = \circledast$ for $y \notin A_i$.
  Then, $\const{set}_{F,i}\;(\const{map}_F\;\seq{f}\;x) = f_i\langle\const{set}_{F,i}\;x\rangle$ by the naturality of $\const{set}_{F,i}$ and $f_i\langle B\rangle \subseteq \{\circledast\} \cup \embedding\langle A_i\rangle$ by $f_i$'s definition for any $B$.
  Hence $\const{map}\;\seq{f}\;x \in \internalize{F}\;\seq{(\{\circledast\} \cup \embedding\langle A\rangle)}$ as $\internalize{F}\;\seq{C} = \{ x \mid \forall i.\;\const{set}_{F,i}\; x \subseteq C_i \}$ by definition.
  So, $[\const{map}_F\;\seq\embedding\;x]_\sim \in [\internalize{F}\;\seq{(\{\circledast\} \cup \embedding\langle A\rangle)}]_\sim$
  because $\const{map}_F\;\seq\embedding\;x \sim \const{map}\;\seq{f}\;x$ by~\eqref{eq:F_in'} and $x \in \internalizesim{F}{\sim}\;\seq{A}$.

  For the right to left direction,
  let $x$ such that $\const{map}_F\;\seq{\embedding}\;x \sim y$ for some $y \in \internalize{F}\;\seq{(\{\circledast\} \cup \embedding\langle A\rangle)}$.
  Let $\seq{f}$ and $\seq{g}$ such that $f_i\;a = g_i\;a$ for all $a \in A_i$ and all $i$.
  Then, $\const{map}_F\;\seq{f}\;x \sim \const{map}_F\;\seq{g}\;x$ holds by the following reasoning,
  where $\const{map}_{\unit+{}}\;h$ satisfies $\const{map}_{\unit+{}}\;h\;(\embedding\;a) = \embedding\;(h\;a)$ and $\const{map}_{\unit+{}}\;h\;\circledast = \circledast$:
  \begin{equation*}
    \begin{tabular}[b]{@{}l@{}l@{\qquad}l@{}}
      $\const{map}_F\;\seq{f}\;x$ &
      ${} =
      \const{map}_F\;\seq{\embedding^{-1}}\;(\const{map}_F\;\seq{(\const{map}_{\unit+{}}\;f)}\;(\const{map}_F\;\seq{\embedding}\;x))$
      &
      as $f_i = \embedding^{-1} \circ \const{map}_{\unit+{}}\;f_i \circ \embedding$
      \\[0.7\jot]
      & ${} \sim \const{map}_F\;\seq{\embedding^{-1}}\;(\const{map}_F\;\seq{(\const{map}_{\unit+{}}\;f)}\;y)$
      &
      by $\const{map}_F\;\seq{\embedding}\;x \sim y$ and~\eqref{eq:map:respect}
      \\[0.7\jot]
      & ${} = \const{map}_F\;\seq{\embedding^{-1}}\;(\const{map}_F\;\seq{(\const{map}_{\unit+{}}\;g)}\;y)$
      &
      by choice of $y$ and~\eqref{eq:F_in}
      \\[0.7\jot]
      & ${} \sim \const{map}_F\;\seq{\embedding^{-1}}\;(\const{map}_F\;\seq{(\const{map}_{\unit+{}}\;g)}\;(\const{map}_F\;\seq{\embedding}\;x))$
      &
      by $y \sim \const{map}_F\;\seq{\embedding}\;x$ and~\eqref{eq:map:respect}
      \\[0.7\jot]
      & ${} = \const{map}_F\;\seq{g}\;x$
      &
      as $\embedding^{-1} \circ \const{map}_{\unit+{}}\;g_i \circ \embedding = g_i$
    \end{tabular}
    \qedhere
  \end{equation*}
\end{proof}

Lemma~\ref{lem:F_in'} allows us to characterize the quotient's setters $\const{set}_{Q}$ in terms of $\const{set}_F$.

\begin{thm}[Setter characterization]%
  \label{thm:set:characterization}
  $\const{set}_{Q,i}\;[x]_\sim = \bigcap\nolimits_{y \in [\const{map}_F\;\seq{\embedding}\;x]_\sim} \{a \mid \embedding\;a \in \const{set}_{F,i}\;y\}$
\end{thm}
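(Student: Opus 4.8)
The plan is to instantiate the setter characterization~\eqref{eq:set:in} at the quotient and then peel away the layers of the membership condition until only $F$'s setter on the augmented equivalence class remains. Writing $\seq{\const{UNIV}}\;A_i\;\seq{\const{UNIV}}$ for the tuple that is $\const{UNIV}$ in every position except $A_i$ in position $i$, equation~\eqref{eq:set:in} applied to $Q$ reads
\begin{equation*}
  \const{set}_{Q,i}\;[x]_\sim = \bigcap\{A_i \mid [x]_\sim \in \internalize{Q}\;\seq{\const{UNIV}}\;A_i\;\seq{\const{UNIV}}\},
\end{equation*}
so the whole argument reduces to analysing this membership predicate.

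First I would unfold $\internalize{Q}\;\seq{A} = [\internalizesim{F}{\sim}\;\seq{A}]_\sim$, which was established above, and note that $\internalizesim{F}{\sim}\;\seq{A}$ is $\sim$-saturated: if $x \sim x'$, then because $\sim$ respects $\const{map}_F$ by~\eqref{eq:map:respect} and is symmetric and transitive by~\eqref{eq:equiv:sim}, the defining condition of~\eqref{eq:F_in'} carries over from $x$ to $x'$. Consequently $[x]_\sim \in [\internalizesim{F}{\sim}\;\seq{A}]_\sim$ is equivalent to $x \in \internalizesim{F}{\sim}\;\seq{A}$, which removes the existential hidden in the equivalence classes.

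Next I would invoke Lemma~\ref{lem:F_in'} for our tuple. Since $\embedding = \const{Inr}$ has the whole right summand as its image, $\{\circledast\} \cup \embedding\langle\const{UNIV}\rangle = \const{UNIV}$ in every position $j \neq i$, so the lemma rewrites $x \in \internalizesim{F}{\sim}\;\seq{A}$ as the existence of some $y$ in the augmented class $E = [\const{map}_F\;\seq{\embedding}\;x]_\sim$ with $\const{set}_{F,i}\;y \subseteq \{\circledast\} \cup \embedding\langle A_i\rangle$. The key observation is that, abbreviating $S_y = \{a \mid \embedding\;a \in \const{set}_{F,i}\;y\}$, this containment holds exactly when $S_y \subseteq A_i$: the only element of the right-hand side outside the image of $\embedding$ is $\circledast$, and $\embedding$ is injective, so $\const{set}_{F,i}\;y$ fits inside $\{\circledast\} \cup \embedding\langle A_i\rangle$ iff every genuine atom $a$ with $\embedding\;a \in \const{set}_{F,i}\;y$ already lies in $A_i$. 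Combining these equivalences turns the characterization into
\begin{equation*}
  \const{set}_{Q,i}\;[x]_\sim = \bigcap\{A_i \mid \exists y \in E.\; S_y \subseteq A_i\}.
\end{equation*}

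Finally I would show this equals $\bigcap_{y \in E} S_y$, the stated right-hand side. For one inclusion, each $S_{y_0}$ with $y_0 \in E$ satisfies the predicate (take $y = y_0$) and is therefore one of the intersected sets, so the left-hand side is contained in every $S_{y_0}$ and hence in $\bigcap_{y_0 \in E} S_{y_0}$. For the other, any admissible $A_i$ dominates some $S_y$, whence $\bigcap_{y' \in E} S_{y'} \subseteq S_y \subseteq A_i$. I expect the main subtlety to lie not in these routine set manipulations but in the bookkeeping forced by the move to $\seq{(\unit + \alpha)}\;F$: one must apply Lemma~\ref{lem:F_in'} at precisely the tuple that is $\const{UNIV}$ off position $i$, and note that $E$ is non-empty since $\const{map}_F\;\seq{\embedding}\;x \in E$ by reflexivity of $\sim$, so that the resulting intersection is a genuine bound rather than vacuously $\const{UNIV}$.
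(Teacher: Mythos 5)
Your proposal is correct and takes essentially the same route as the paper's proof: instantiate~\eqref{eq:set:in} at $Q$, unfold $\internalize{Q}\;\seq{A} = [\internalizesim{F}{\sim}\;\seq{A}]_\sim$, apply Lemma~\ref{lem:F_in'} at the tuple that is $\const{UNIV}$ off position $i$ (using $\{\circledast\} \cup \embedding\langle\const{UNIV}\rangle = \const{UNIV}$), and collapse the resulting intersection. The only difference is that you spell out three steps the paper leaves implicit---the $\sim$-saturation of $\internalizesim{F}{\sim}\;\seq{A}$, the equivalence of $\const{set}_{F,i}\;y \subseteq \{\circledast\} \cup \embedding\langle A_i\rangle$ with $\{a \mid \embedding\;a \in \const{set}_{F,i}\;y\} \subseteq A_i$ via injectivity of $\embedding$, and the final double inclusion---all of which are correct.
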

\begin{proof}
  Recall that we defined $\const{set}_{Q,i}$ by~\eqref{eq:set:in}.
  Then
  \begin{equation*}
    \begin{tabular}{@{}l@{}l@{}l@{\qquad\!\!\!}l@{}}
      $\const{set}_{Q,i}\ [x]_\sim$
      &
      ${} = {}$
      &
      $\bigcap \{ A_i \mid [x]_\sim \in \internalize{Q}\;\seq{\const{UNIV}}\;A_i\;\seq{\const{UNIV}} \}$
      &
      by~\eqref{eq:set:in}
      \\[0.7\jot]
      &
      ${} = {}$
      &
      $\bigcap \{ A_i \mid [x]_\sim \in [\internalizesim{F}{\sim}\;\seq{\const{UNIV}}\;A_i\;\seq{\const{UNIV}}]_\sim \}$
      &
      by $\internalize{Q}\;\seq{A} = [\internalizesim{F}{\sim}\;\seq{A}]_\sim$
      \\[0.7\jot]
      &
      ${} = {}$
      &
      $\bigcap \{ A_i \mid [\const{map}_F\;\seq{\embedding}\;x]_\sim \in [\internalize{F}\;\seq{\const{UNIV}}\;(\{\circledast\} \cup \embedding\langle A_i\rangle)\;\seq{\const{UNIV}}]_\sim \}$
      &
      by Lemma~\ref{lem:F_in'}
      \\[0.7\jot]
      &
      ${} = {}$
      &
      $\bigcap \{ A_i \mid [\const{map}_F\;\seq{\embedding}\;x]_\sim \in [\{ y \mid \const{set}_{F,i}\;y \subseteq \{\circledast\} \cup \embedding\langle A_i\rangle\}]_\sim \}$
      &
      by Definition of $\internalize{F}$
      \\[0.7\jot]
      &
      ${} = {}$
      &
      $\bigcap \{ \{ a \mid \embedding\;a \in \const{set}_{F,i}\; y \} \mid y \sim \const{map}_F\;\seq{\embedding}\;x \}$
      &
      \\[0.7\jot]
      &
      ${} = {}$
      &
      $\bigcap\nolimits_{y \in [\const{map}_F\;\seq{\embedding}\;x]_\sim} \{a \mid \embedding\;a \in \const{set}_{F,i}\;y\}$
    &\qedhere
    \end{tabular}
  \end{equation*}
\end{proof}

\begin{exa}[Example~\ref{ex:Inl:Inl:cont} continued]\label{ex:Inl:Inl:cont2}
  For the example viewing $\unit + \alpha$ as a quotient of $\alpha\;F_P = {\alpha + \alpha}$ via $\optparen{\sim_P}$, Theorem~\ref{thm:set:characterization} yields
  \begin{trivlist}
  \item
    \begin{minipage}[t]{0.5\textwidth}
      \centering
      \begin{tabular}{@{}l@{}l@{}}
        &
        $\const{set}_{Q_P}\;[\const{Inl}\;x]_{\sim_P}$
        \\[0.7\jot]
        ${}={}$
        &
        $\bigcap\nolimits_{y \in [\const{map}_{F_P}\;\embedding\;(\const{Inl}\;x)]_{\sim_P}} \{a \mid \embedding\;a \in \const{set}_{F_P}\;y\}$
        \\[0.7\jot]
        ${}={}$
        &
        $\bigcap\nolimits_{y \in [\const{Inl}\;(\embedding\;x)]_{\sim_P}} \{a \mid \embedding\;a \in \const{set}_{F_P}\;y\}$
        \\[0.7\jot]
        ${}={}$
        &
        $\bigcap\nolimits_{y \in \const{Inl}\left\langle\const{UNIV}\right\rangle} \{a \mid \embedding\;a \in \const{set}_{F_P}\;y\}$
        \\[0.7\jot]
        ${}={}$
        &
        $\bigcap\nolimits_{z \in \const{UNIV}} \{a \mid \embedding\;a \in \const{set}_{F_P}\;(\const{Inl}\;z)\}$
        \\[0.7\jot]
        ${}={}$
        &
        $\bigcap\nolimits_{z \in \const{UNIV}} \{a \mid \embedding\;a = z\}$
        \\[0.7\jot]
        ${}={}$
        &
        $\{\}$
      \end{tabular}
    \end{minipage}%
\vrule
    \begin{minipage}[t]{0.5\textwidth}
      \centering
      \begin{tabular}{@{}l@{}l@{}}
        &
        $\const{set}_{Q_P}\;[\const{Inr}\;x]_{\sim_P}$
        \\[0.7\jot]
        ${}={}$
        &
        $\bigcap\nolimits_{y \in [\const{map}_{F_P}\;\embedding\;(\const{Inr}\;x)]_{\sim_P}} \{a \mid \embedding\;a \in \const{set}_{F_P}\;y\}$
        \\[0.7\jot]
        ${}={}$
        &
        $\bigcap\nolimits_{y \in [\const{Inr}\;(\embedding\;x)]_{\sim_P}} \{a \mid \embedding\;a \in \const{set}_{F_P}\;y\}$
        \\[0.7\jot]
        ${}={}$
        &
        $\bigcap\nolimits_{y \in \{\const{Inr}\;(\embedding\;x)\}} \{a \mid \embedding\;a \in \const{set}_{F_P}\;y\}$
        \\[0.7\jot]
        ${}={}$
        &
        $\{a \mid \embedding\;a \in \const{set}_{F_P}\;(\const{Inr}\;(\embedding\;x))\}$
        \\[0.7\jot]
        ${}={}$
        &
        $\{a \mid \embedding\;a = \embedding\;x\}$
        \\[0.7\jot]
        ${}={}$
        &
        $\{x\}$
      \end{tabular}
    \end{minipage}
  \end{trivlist}
  \vspace{-0.8em}\exampleend
\end{exa}

Next, we express the conditions~\eqref{eq:wide:intersection} and~\eqref{eq:preimage:preservation} on $\internalize{Q}$ in terms of $\optparen{\sim}$ and~$\internalize{F}$.
For wide intersections, the condition is as follows:
\begin{equation}
  \label{eq:wide:intersections}
  \kern-.5em
  \forall i.\;\mathcal{A}_i \neq \{\} \wedge (\bigcap\mathcal{A}_i \neq \{\})
  {\implies}
  \bigcap\{[\internalize{F}\;\seq{A}]_\sim \mid \forall i.\;A_i \in \mathcal{A}_i\}  \subseteq
  \left[\bigcap\{\internalize{F}\;\seq{A} \mid \forall i.\;A_i \in \mathcal{A}_i\}\right]_\sim\kern-.9em
\end{equation}
\looseness=-1
The conclusion is as expected: for sets of the form $\internalize{F}\;\seq{A}$, taking equivalence classes preserves wide intersections.
The assumption is the interesting part:
preservation is needed only for \emph{non-empty} intersections.
Non-emptiness suffices because Lemma~\ref{lem:F_in'} relates $\internalizesim{F}{\sim}\;\seq{A}$ to $\internalize{F}\;\seq{(\{\circledast\} \cup \embedding\langle A\rangle)}$ and all intersections of interest therefore contain $\circledast$. %

\begin{lem}%
  \label{lemma:wide:intersection:preservation}
  $[\internalizesim{F}{\sim}\;\seq{(\bigcap \mathcal{B})}]_\sim = \bigcap \{[\internalizesim{F}{\sim}\; \seq{B}]_\sim \mid \forall i.\; B_i \in \mathcal{B}_i\}$
  if~\eqref{eq:wide:intersections} holds for $\seq{\mathcal{A}}$ given by $\mathcal{A}_i = \{ \{\circledast\} \cup \embedding\langle B\rangle \mid B \in \mathcal{B}_i \}$.
\end{lem}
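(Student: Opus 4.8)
The plan is to prove the two inclusions separately, under the standing assumption that every $\mathcal{B}_i$ is non-empty; this is exactly the situation in which the hypothesis~\eqref{eq:wide:intersections} carries content, since the extra atom $\circledast$ forces $\circledast \in \bigcap\mathcal{A}_i$ and thus makes the precondition $\bigcap\mathcal{A}_i \neq \{\}$ automatic, leaving $\mathcal{A}_i \neq \{\}$, i.e.\ $\mathcal{B}_i \neq \{\}$, as the only requirement. The crucial preliminary observation is that $\internalizesim{F}{\sim}\;\seq{A}$ is \emph{saturated} with respect to $\optparen{\sim}$: reading off~\eqref{eq:F_in'} and using~\eqref{eq:map:respect} together with the symmetry and transitivity from~\eqref{eq:equiv:sim}, one checks that $x \sim x'$ implies $x \in \internalizesim{F}{\sim}\;\seq{A} \Longleftrightarrow x' \in \internalizesim{F}{\sim}\;\seq{A}$. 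Hence $[x]_\sim \in [\internalizesim{F}{\sim}\;\seq{A}]_\sim$ reduces to plain membership $x \in \internalizesim{F}{\sim}\;\seq{A}$, so both sides of the claim may be tested on representatives. The inclusion $\subseteq$ then follows from monotonicity of $\internalizesim{F}{\sim}$ in $\seq{A}$ (enlarging some $A_i$ only strengthens the premise on $\seq{f},\seq{g}$ in~\eqref{eq:F_in'}): as $\bigcap\mathcal{B}_i \subseteq B_i$ for every $B_i \in \mathcal{B}_i$, we get $\internalizesim{F}{\sim}\;\seq{(\bigcap\mathcal{B})} \subseteq \internalizesim{F}{\sim}\;\seq{B}$, so any representative lies in each component of the right-hand intersection.

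For the converse I would first record the bridging identity $\bigcap\mathcal{A}_i = \{\circledast\} \cup \embedding\langle\bigcap\mathcal{B}_i\rangle$: the common element $\circledast$ factors out of the intersection, and since $\embedding = \const{Inr}$ is injective it commutes with the non-empty intersection, $\bigcap_{B \in \mathcal{B}_i}\embedding\langle B\rangle = \embedding\langle\bigcap\mathcal{B}_i\rangle$. Now take a representative $x$ of an element of the right-hand intersection; by saturation $x \in \internalizesim{F}{\sim}\;\seq{B}$ for every tuple $\seq{B}$ with $B_i \in \mathcal{B}_i$. Applying Lemma~\ref{lem:F_in'} to each such $\seq{B}$ rewrites this as $[\const{map}_F\;\seq{\embedding}\;x]_\sim \in [\internalize{F}\;\seq{A}]_\sim$ for $A_i = \{\circledast\} \cup \embedding\langle B_i\rangle$, and since $B_i \mapsto A_i$ is a bijection between the valid tuples for $\seq{\mathcal{B}}$ and for $\seq{\mathcal{A}}$, this says precisely that $[\const{map}_F\;\seq{\embedding}\;x]_\sim \in \bigcap\{[\internalize{F}\;\seq{A}]_\sim \mid \forall i.\,A_i \in \mathcal{A}_i\}$. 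The preconditions of~\eqref{eq:wide:intersections} hold for $\seq{\mathcal{A}}$ by the remark above, so it yields $[\const{map}_F\;\seq{\embedding}\;x]_\sim \in \bigl[\bigcap\{\internalize{F}\;\seq{A} \mid \forall i.\,A_i \in \mathcal{A}_i\}\bigr]_\sim$. Finally I rewrite the inner set with the BNF wide-intersection property~\eqref{eq:wide:intersection} of $\internalize{F}$ and the bridging identity, $\bigcap\{\internalize{F}\;\seq{A} \mid \forall i.\,A_i \in \mathcal{A}_i\} = \internalize{F}\;\seq{(\bigcap\mathcal{A})} = \internalize{F}\;\seq{(\{\circledast\} \cup \embedding\langle\bigcap\mathcal{B}\rangle)}$, and one more application of Lemma~\ref{lem:F_in'} converts $[\const{map}_F\;\seq{\embedding}\;x]_\sim \in [\internalize{F}\;\seq{(\{\circledast\} \cup \embedding\langle\bigcap\mathcal{B}\rangle)}]_\sim$ into $x \in \internalizesim{F}{\sim}\;\seq{(\bigcap\mathcal{B})}$, which is what we had to show.

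The main obstacle is the bookkeeping in the converse direction: one has to keep the quantifier alternation straight while passing an arbitrary member of the wide intersection through Lemma~\ref{lem:F_in'} in both directions, and one has to verify that the decorated families $\seq{\mathcal{A}}$ obtained from $\seq{\mathcal{B}}$ are exactly those over which the hypothesis is phrased, with its non-emptiness side conditions discharged by the presence of $\circledast$. This extra atom is the crux and the whole reason for moving to $\seq{(\unit+\alpha)}\;F$: it guarantees $\bigcap\mathcal{A}_i \neq \{\}$, so that the hypothesis is genuinely applicable rather than vacuous, side-stepping the empty-set pathology of Example~\ref{ex:Inl:Inl:cont}. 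I would also note that the non-emptiness of the $\mathcal{B}_i$ is genuinely needed: if some $\mathcal{B}_i = \{\}$, the right-hand intersection ranges over no tuples and collapses to the whole type, whereas the left-hand side in general does not, so this degenerate case is excluded and, consistently, renders~\eqref{eq:wide:intersections} vacuous.
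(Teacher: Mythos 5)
Your proof is correct and is essentially the paper's own argument: the same ingredients appear in the same roles---Lemma~\ref{lem:F_in'} applied in both directions, the bridging identity $\bigcap\mathcal{A}_i = \{\circledast\} \cup \embedding\langle\bigcap\mathcal{B}_i\rangle$, the unconditional wide-intersection property~\eqref{eq:wide:intersection} of $\internalize{F}$, and the hypothesis~\eqref{eq:wide:intersections} with its non-emptiness preconditions discharged by the presence of $\circledast$. The differences are purely presentational: you split the equality into two inclusions (obtaining $\subseteq$ from monotonicity of $\internalizesim{F}{\sim}$, where the paper instead notes that the converse inclusion of~\eqref{eq:wide:intersections} holds trivially and then runs one equational chain) and you make explicit the $\sim$-saturation of $\internalizesim{F}{\sim}\;\seq{A}$, which the paper uses implicitly when commuting $\bigcap$ with $[\cdot]_\sim$.
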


\begin{proof}
  Note that $\seq{\mathcal{A}}$ satisfies the assumption of~\eqref{eq:wide:intersections}.
  We first show that the other inclusion of~\eqref{eq:wide:intersections} holds trivially.
  Let $u \sim x \in \bigcap\{\internalize{F}\;\seq{A} \mid \forall i.\;A_i \in \mathcal{A}_i\}$.
  Then $x \in \internalize{F}\;\seq{A}$ whenever $A_i \in \mathcal{A}_i$ for all $i$, and so is $u \in [\internalize{F}\;\seq{A}]_\sim$.
  Hence $\bigcap\{[\internalize{F}\;\seq{A}]_\sim \mid \forall i.\;A_i \in \mathcal{A}_i\} =
  \left[\bigcap\{\internalize{F}\;\seq{A} \mid \forall i.\;A_i \in \mathcal{A}_i\}\right]_\sim$.

  As $\embedding$ is injective and $\circledast$ is not in $\embedding$'s range, we have $\bigcap \mathcal{A}_i = \{\circledast\} \cup \embedding\langle\bigcap \mathcal{B}_i\rangle$.
  We calculate
  \begin{equation*}
    \begin{tabular}{@{}l@{}l@{}l@{\qquad}l@{}}
      $[\internalizesim{F}{\sim}\;\seq{(\bigcap \mathcal{B})}]_\sim$
      &
      ${} = {}$
      &
      $[\{ x \mid [\const{map}_F\;\seq{\embedding}\;x]_\sim \in [\internalize{F}\;\seq{(\{\circledast\} \cup \embedding\langle \bigcap \mathcal{B}\rangle)}]_\sim \}]_\sim$
      &
      by Lemma~\ref{lem:F_in'}
      \\[0.7\jot]
      &
      ${} = {}$
      &
      $[\{ x \mid [\const{map}_F\;\seq{\embedding}\;x]_\sim \in [\internalize{F}\;\seq{(\bigcap\mathcal{A})}]_\sim \}]_\sim$
      \\[0.7\jot]
      &
      ${} = {}$
      &
      $[\{ x \mid [\const{map}_F\;\seq{\embedding}\;x]_\sim \in \left[ \bigcap \{ \internalize{F}\;\seq{A}  \mid \forall i.\; A_i \in \mathcal{A}_i \}\right]_\sim \}]_\sim$
      &
      by~\eqref{eq:wide:intersection}
      \\[0.7\jot]
      &
      ${} = {}$
      &
      $[\{ x \mid [\const{map}_F\;\seq{\embedding}\;x]_\sim \in \bigcap \{ [\internalize{F}\;\seq{A}]_\sim \mid \forall i.\; A_i \in \mathcal{A}_i \}]_\sim$
      &
      by the above equality
      \\[0.7\jot]
      &
      ${} = {}$
      &
      $\bigcap \left\{ [\{ x \mid [\const{map}_F\;\seq{\embedding}\;x]_\sim \in [\internalize{F}\;\seq{A}]_\sim \}]_\sim \mid \forall i.\; A_i \in \mathcal{A}_i \right\}$
      &
      \\[0.7\jot]
      &
      ${} = {}$
      &
      $\bigcap \{ [\internalizesim{F}{\sim}\;\seq{B}]_\sim \mid \forall i.\; B_i \in \mathcal{B}_i \}$
      &
      by Lemma~\ref{lem:F_in'}
    \qedhere
    \end{tabular}
  \end{equation*}
\end{proof}

Condition~\ref{eq:wide:intersections} is satisfied trivially for equivalence relations that preserve $\const{set}_{F,i}$, i.e., satisfy~\eqref{eq:set:respect:naive}.
Examples include permutative structures like finite sets and cyclic lists.

\begin{lem}%
  \label{lem:wide:intersections}
  If $\optparen{\sim}$ satisfies~\eqref{eq:set:respect:naive},
  then $[x]_\sim \in [\internalize{F}\;\seq{A}]_\sim$ iff $x \in \internalize{F}\;\seq{A}$, and
  condition~\eqref{eq:wide:intersections} holds.
\end{lem}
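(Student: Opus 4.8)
The plan is to reduce both claims to a single observation: under~\eqref{eq:set:respect:naive} each set $\internalize{F}\;\seq{A}$ is \emph{saturated} with respect to $\optparen{\sim}$, i.e.\ it is a union of equivalence classes. Indeed, by definition $x \in \internalize{F}\;\seq{A}$ holds iff $\const{set}_{F,i}\;x \subseteq A_i$ for all $i$, so membership depends only on the values $\const{set}_{F,i}\;x$, and~\eqref{eq:set:respect:naive} declares these invariant under $\optparen{\sim}$. Hence $x \sim y$ together with $x \in \internalize{F}\;\seq{A}$ forces $\const{set}_{F,i}\;y = \const{set}_{F,i}\;x \subseteq A_i$ for all $i$, i.e.\ $y \in \internalize{F}\;\seq{A}$.

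For the first claim, the direction $x \in \internalize{F}\;\seq{A} \implies [x]_\sim \in [\internalize{F}\;\seq{A}]_\sim$ is immediate from the definition of $[\,\cdot\,]_\sim$, witnessed by $x$ itself. For the converse, $[x]_\sim \in [\internalize{F}\;\seq{A}]_\sim$ means there is a representative $y \in \internalize{F}\;\seq{A}$ with $y \sim x$; saturation then yields $x \in \internalize{F}\;\seq{A}$. This establishes the stated equivalence for arbitrary $x$ and $\seq{A}$.

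For condition~\eqref{eq:wide:intersections}, I take an arbitrary element of the left-hand intersection and fix a representative, so that it has the form $[x]_\sim$ with $[x]_\sim \in [\internalize{F}\;\seq{A}]_\sim$ for every family $\seq{A}$ satisfying $A_i \in \mathcal{A}_i$ for all $i$. Applying the equivalence just proved to each such family gives $x \in \internalize{F}\;\seq{A}$ for all of them, so $x$ lies in $\bigcap\{\internalize{F}\;\seq{A} \mid \forall i.\;A_i \in \mathcal{A}_i\}$ and therefore $[x]_\sim$ lies in the image of that intersection under $[\,\cdot\,]_\sim$, which is exactly the right-hand side. Notably, the non-emptiness hypotheses on $\mathcal{A}_i$ and $\bigcap\mathcal{A}_i$ are not needed in this case; they become relevant only in the general setting of Lemma~\ref{lemma:wide:intersection:preservation}, where~\eqref{eq:set:respect:naive} is unavailable and one must route through Lemma~\ref{lem:F_in'}.

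I do not expect a genuine obstacle here: the whole argument is the saturation observation, after which both parts are bookkeeping about passing between an equivalence class and one of its representatives. The only subtlety worth flagging is that membership in $[\internalize{F}\;\seq{A}]_\sim$ is witnessed by \emph{some} representative, which need not be the chosen $x$, so one genuinely relies on~\eqref{eq:set:respect:naive} rather than mere reflexivity of $\optparen{\sim}$ to transport membership back to $x$.
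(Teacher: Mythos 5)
Your proof is correct and follows essentially the same route as the paper's: the saturation observation that membership in $\internalize{F}\;\seq{A}$ depends only on the setters, which~\eqref{eq:set:respect:naive} makes $\optparen{\sim}$-invariant, followed by routine transport between classes and representatives. The only cosmetic difference is that the paper phrases the second part as a chain of equivalences passing through $\internalize{F}\;\seq{(\bigcap\mathcal{A})}$ via~\eqref{eq:wide:intersection}, whereas you work directly with the intersection $\bigcap\{\internalize{F}\;\seq{A} \mid \forall i.\;A_i \in \mathcal{A}_i\}$ on the right-hand side; your remark that the non-emptiness hypotheses are not needed here also matches the paper's proof, which never uses them.
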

%
\begin{proof}
  Condition~\eqref{eq:set:respect:naive} says that $\const{set}_{F,i}\;x = \const{set}_{F,i}\;y$ whenever $x \sim y$.
  So $[x]_\sim \in [\internalize{F}\;\seq{A}]_\sim$ iff $x \in \internalize{F}\;\seq{A}$ because $\internalize{F}\;\seq{A} = \{ x \mid \forall i.\;\const{set}_{F,i}\;x \subseteq A_i\}$ for all $\seq{A}$.
  Thus,~\eqref{eq:wide:intersections} holds by the following calculation:
  \begin{equation*}
    \begin{tabular}{@{}l@{\ iff\ }l@{\qquad}l@{}}
      $[x]_\sim \in \bigcap\{[\internalize{F}\;\seq{A}]_\sim \mid \forall i.\;A_i \in \mathcal{A}_i\}$
      &
      $[x]_\sim \in [\internalize{F}\;\seq{A}]_\sim$ whenever $A_i \in \mathcal{A}_i$ for all $i$
      \\[0.7\jot]
      &
      $x \in \internalize{F}\;\seq{A}$ whenever $A_i \in \mathcal{A}_i$ for all $i$
      \\[0.7\jot]
      &
      $x \in \internalize{F}\;\seq{(\bigcap \mathcal{A})}$
      &
      by~\eqref{eq:wide:intersection}
      \\[0.7\jot]
      &
      $[x]_\sim \in [\internalize{F}\;\seq{(\bigcap \mathcal{A})}]_\sim$
      &
    \qedhere
    \end{tabular}%
  \end{equation*}
\end{proof}
%
In contrast, the non-emptiness assumption is crucial for quotients that identify values with different sets of atoms, such as Example~\ref{ex:Inl:Inl}.
In general, such quotients do \emph{not} preserve empty intersections (Section~\ref{section:related:work}).

We can factor condition~\eqref{eq:wide:intersections} into a separate property for each type argument $i$: 
\begin{equation}
  \label{eq:wide:intersection:one}
  \kern-.5em
  \mathcal{A}_i \neq \{\} \wedge (\bigcap\mathcal{A}_i) \neq \{\}
  {\implies}
  \bigcap\nolimits_{A \in \mathcal{A}_i} [\{x \mid \const{set}_{F,i}\;x \subseteq A\}]_\sim \subseteq
  \left[\{x \mid \const{set}_{F,i}\;x \subseteq \bigcap \mathcal{A}_i\}\right]_\sim\kern-1em
\end{equation}
This form is used in our implementation (Section~\ref{section:implementation}).
It is arguably more natural to prove for a concrete functor $F$ because each property focuses on a single setter.
\begin{lem}%
  \label{lem:wide:intersections:alt}
  Let $\optparen{\sim}$ satisfy~\eqref{eq:equiv:sim} and~\eqref{eq:map:respect}.
  Then,~\eqref{eq:wide:intersections} holds iff~\eqref{eq:wide:intersection:one} holds for all $i$.
\end{lem}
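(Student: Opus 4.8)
The plan is to prove the two implications separately. The direction from \eqref{eq:wide:intersections} to \eqref{eq:wide:intersection:one} is a pure specialization: given $i$, I would instantiate \eqref{eq:wide:intersections} with the family $\seq{\mathcal{A}}$ where $\mathcal{A}_j = \{\const{UNIV}\}$ for every $j \neq i$ and $\mathcal{A}_i$ as given. Since $\const{set}_{F,j}\;x \subseteq \const{UNIV}$ always holds and $\const{UNIV} \neq \{\}$ (every HOL type is inhabited), the hypothesis of \eqref{eq:wide:intersections} collapses to exactly the hypothesis $\mathcal{A}_i \neq \{\} \wedge \bigcap\mathcal{A}_i \neq \{\}$ of \eqref{eq:wide:intersection:one}, while the inner sets $\internalize{F}\;\seq{A}$ and their intersection reduce to the single-coordinate sets $\{x \mid \const{set}_{F,i}\;x \subseteq A_i\}$ and $\{x \mid \const{set}_{F,i}\;x \subseteq \bigcap\mathcal{A}_i\}$. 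Thus \eqref{eq:wide:intersection:one} is literally an instance of \eqref{eq:wide:intersections}.

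\textbf{The converse: setup and the obstacle.}
For the converse I would argue by induction over the coordinates. Abbreviate $C_j = \bigcap\mathcal{A}_j$. Membership of $[x]_\sim$ in the left-hand side of \eqref{eq:wide:intersections} unfolds to: for every choice $A_j \in \mathcal{A}_j$ there is a representative of $[x]_\sim$ with $\const{set}_{F,j} \subseteq A_j$ for all $j$ \emph{simultaneously}; the right-hand side asks for a \emph{single} representative with $\const{set}_{F,j} \subseteq C_j$ for all $j$ at once (using $\internalize{F}\;\seq{C} = \bigcap\{\internalize{F}\;\seq{A} \mid \forall j.\;A_j \in \mathcal{A}_j\}$, which is just the BNF-level \eqref{eq:wide:intersection}). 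The difficulty, and the heart of the proof, is that \eqref{eq:wide:intersection:one} shrinks only one coordinate and grants no control whatsoever over the others in the representative it produces, so applying it coordinate by coordinate would destroy the bounds already achieved.

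\textbf{The key sublemma via a mapper retraction.}
To overcome this I would first establish a \emph{relativized} single-coordinate statement: if $D_j$ ($j \neq k$) are non-empty and, for each $A \in \mathcal{A}_k$, $[x]_\sim$ has a representative with $\const{set}_{F,k} \subseteq A$ and $\const{set}_{F,j} \subseteq D_j$ for $j \neq k$, then $[x]_\sim$ has a single representative with $\const{set}_{F,k} \subseteq C_k$ and $\const{set}_{F,j} \subseteq D_j$ for all $j \neq k$. The trick is a mapper retraction: for $j \neq k$ pick $d_j \in D_j$ and let $p_j$ be the identity on $D_j$ and constant $d_j$ elsewhere (so $p_j$ is idempotent with range $D_j$), and let $p_k = \const{id}$. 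Applying $\const{map}_F\;\seq{p}$ forces the frozen coordinates into $D_j$ while leaving coordinate $k$ untouched; by \eqref{eq:map:respect} it sends $[x]_\sim$ into the class of $\const{map}_F\;\seq{p}\;x$, where the premises still supply the per-$A$ coordinate-$k$ bounds. Now \eqref{eq:wide:intersection:one} for coordinate $k$ yields a representative $\hat v$ with $\const{set}_{F,k}\;\hat v \subseteq C_k$; applying $\const{map}_F\;\seq{p}$ a second time drives its other coordinates back into $D_j$ without touching coordinate $k$. Idempotence of $\seq{p}$ together with \eqref{eq:map:respect}, \theorem{map\_cong}, and \theorem{map\_id} shows the result is still equivalent to $\const{map}_F\;\seq{p}\;x$, and the latter is equivalent to $x$ since $\const{map}_F\;\seq{p}$ fixes any representative already bounded by $D_j$ in the frozen coordinates; transitivity \eqref{eq:equiv:sim} closes the chain.

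\textbf{The induction and where the difficulty sits.}
With this sublemma in hand I would run the induction on $k = 0, \dots, n$ with invariant $Q_k$: for all $A_{k+1} \in \mathcal{A}_{k+1}, \dots, A_n \in \mathcal{A}_n$, the class $[x]_\sim$ has a representative bounded by $C_j$ in coordinates $j \le k$ and by $A_j$ in coordinates $j > k$. The base case $Q_0$ is exactly membership in the left-hand side of \eqref{eq:wide:intersections}, and $Q_n$ is membership in the right-hand side. The step $Q_{k-1} \Rightarrow Q_k$ is a direct application of the sublemma with $D_j = C_j$ for $j < k$ and $D_j = A_j$ for $j > k$ (all non-empty because $C_j \neq \{\}$ by the hypothesis of \eqref{eq:wide:intersections} and $A_j \supseteq C_j$), which shrinks coordinate $k$ from the varying $A_k$ down to $C_k$ while preserving all other bounds. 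I expect the main obstacle to be precisely this combination step: turning the separate per-coordinate smallness guarantees into one representative that is small everywhere simultaneously. The retraction argument is what makes it work, and it is here that the non-emptiness of the intersections $\bigcap\mathcal{A}_j$ in both conditions is essential, since it supplies the fixed points $d_j$ onto which the retractions collapse.
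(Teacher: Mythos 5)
Your proposal is correct, and it reaches the paper's result by the same underlying mechanism, though with a different decomposition. The easy direction is identical to the paper's: instantiate \eqref{eq:wide:intersections} with $\mathcal{A}_j = \{\const{UNIV}\}$ for $j \neq i$. For the converse, the paper applies \eqref{eq:wide:intersection:one} just once per coordinate, obtaining for each $i$ a representative $y_i \sim x$ with $\const{set}_{F,i}\;y_i \subseteq \bigcap\mathcal{A}_i$ and \emph{no} control over the other coordinates, and then combines everything in one telescoping chain: with $f_i$ the retraction that is the identity on $\bigcap\mathcal{A}_i$ and constant at some $a_i \in \bigcap\mathcal{A}_i$ elsewhere, it inserts $f_1, f_2, \ldots$ one at a time, using \theorem{map\_cong} against $y_i$ to insert $f_i$ for free and \eqref{eq:map:respect} to hop from $y_i$ to $y_{i+1}$, ending with the single witness $\const{map}_F\;\seq{f}\;x \sim x$, whose setters land in $\bigcap\mathcal{A}_i$ by \theorem{set\_map} since each $f_i$ has range in $\bigcap\mathcal{A}_i$. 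You use exactly the same retraction trick (your $p_j$ play the role of the paper's $f_i$, with mixed targets $D_j$), but you package it as an induction with a relativized sublemma maintaining the invariant ``coordinates $\le k$ bounded by $C_j$, coordinates $> k$ by $A_j$,'' re-invoking \eqref{eq:wide:intersection:one} at each stage and then mapping the fresh representative $\hat v$ through $\seq{p}$ to restore the frozen bounds; your argument that $\const{map}_F\;\seq{p}\;x \sim x$ (via a representative already fixed by $\seq{p}$, which needs $\mathcal{A}_k \neq \{\}$) is the step that makes this sound, and you supply it correctly. The trade-off: the paper's version is leaner, since it only ever needs representatives bounded in a single coordinate and performs the combination in one pass over $x$ itself, whereas your version makes the combination invariant explicit and isolates the reusable sublemma, at the cost of extra bookkeeping and of applying retractions to the intermediate representatives rather than to $x$.
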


\begin{proof}
  $\eqref{eq:wide:intersections} \implies~\eqref{eq:wide:intersection:one}$ follows directly by setting $\mathcal{A}_j = \{\const{UNIV}\}$ for all $j \neq i$, where $\mathcal{\const{UNIV}}$ is the universe of the respective type.
  For the other direction, fix $x$ such that for all $\seq{A}$ where $\forall i.\;A_i \in \mathcal{A}_i$, there exists $y_{\seq{A}} \in \internalize{F}\;\seq{A}$ such that $x \sim y_{\seq{A}}$.
  For every $i$, we have $\const{set}_{F,i}\;y_{\seq{A}} \subseteq B$ for $B \in \mathcal{A}_i$ and hence $x \in \bigcap\nolimits_{B \in \mathcal{A}_i} [\{x \mid \const{set}_{F,i}\;x \subseteq B\}]_\sim$.
  By~\eqref{eq:wide:intersection:one} there exists $y_i$ such that $\const{set}_{F,i}\;y_i \subseteq \bigcap \mathcal{A}_i$ and $x \sim y_i$.
  Fix an arbitrary $a_i \in \bigcap \mathcal{A}_i$, which exists because $\bigcap \mathcal{A}_i$ is assumed to be non-empty.
  Moreover, set
  \begin{equation*}
    f_i\;x = \begin{cases}
      x &\text{if $x \in \bigcap \mathcal{A}_i$}\\
      a_i &\text{otherwise}
    \end{cases}
  \end{equation*}
  We calculate
  \begin{equation*}
    \begin{tabular}{@{}l@{}l@{\qquad}l@{}}
      $x$ & ${} \sim \const{map}_F\;\seq{\const{id}}\;y_1$ & by $x \sim y_1$ and \theorem{map\_id}\\[0.7\jot]
      &${} = \const{map}_F\;f_1\;\seq{\const{id}}\;y_1$ & by $\const{set}_{F,1}\;y_1 \subseteq \bigcap \mathcal{A}_1$ and \theorem{map\_cong}\\[0.7\jot]
      &${}\sim \const{map}_F\;f_1\;\seq{\const{id}}\;y_2$ & by~\eqref{eq:map:respect} and $y_1 \sim x \sim y_2$\\[0.7\jot]
      &${} = \const{map}_F\;f_1\;f_2\;\seq{\const{id}}\;y_2$ & by $\const{set}_{F,2}\;y_2 \subseteq \bigcap \mathcal{A}_2$ and \theorem{map\_cong}\\[0.7\jot]
      &${} \sim \dots \sim \const{map}_F\;\seq{f}\;y_n$ & similarly \\[0.7\jot]
      &${} \sim \const{map}_F\;\seq{f}\;x$ & by $x \sim y_n$ and \theorem{map\_cong}
    \end{tabular}
  \end{equation*}
  Finally, observe that $\const{map}_F\;\seq{f}\;x \in \internalize{F}\;\seq{A}$ whenever $\forall i.\;A_i \in \mathcal{A}_i$:
  We have $\const{set}_{F,i}\;(\const{map}_F\;\seq{f}\;x) = f_i\langle \const{set}_{F,i}\;x\rangle \subseteq \bigcap\mathcal{A}_i \subseteq A_i$ using \theorem{set\_map} and the definition of $f_i$.
\end{proof}

Many functors in practice contain only finitely many elements, i.e., $\const{set}_{F,i}\; x$ is always finite.
This includes all inductive datatypes built only from sums and products, e.g., finite lists and finitely branching trees.
Condition~\eqref{eq:wide:intersection:one} is always satisfied for such functors,
because wide intersections boil down to finite intersections in this case and Trnkov\'a~\cite{Trnkova1969CMUC} showed that all Set functors preserve non-empty binary intersections.

\begin{lem}%
  \label{lemma:wide:intersection:finite}
  If $\const{set}_{F,i}\; x$ is finite for all $x$, then~\eqref{eq:wide:intersection:one} holds for all equivalence relations $\sim$ that satisfy~\eqref{eq:map:respect}.
\end{lem}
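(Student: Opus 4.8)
The plan is to establish condition~\eqref{eq:wide:intersection:one} for the single index $i$ directly, reusing the telescoping chain-of-maps technique from the proof of Lemma~\ref{lem:wide:intersections:alt}. Fix a family $\mathcal{A}_i$ with $\mathcal{A}_i \neq \{\}$ and $\bigcap\mathcal{A}_i \neq \{\}$, choose some $c \in \bigcap\mathcal{A}_i$, and take an arbitrary $[x]_\sim$ in the left-hand intersection, i.e.\ for every $A \in \mathcal{A}_i$ there is a representative $y_A \sim x$ with $\const{set}_{F,i}\;y_A \subseteq A$. The goal is to produce a \emph{single} witness $y \sim x$ with $\const{set}_{F,i}\;y \subseteq \bigcap\mathcal{A}_i$, which places $[x]_\sim$ in the right-hand side.

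This is where finiteness enters and reduces the wide intersection to finitely many constraints. Since $\const{set}_{F,i}\;x$ is finite, the set of offending atoms $\{b_1, \ldots, b_m\} = \const{set}_{F,i}\;x \setminus \bigcap\mathcal{A}_i$ is finite. For each $b_j$, as $b_j \notin \bigcap\mathcal{A}_i$ there is some $A_j \in \mathcal{A}_i$ with $b_j \notin A_j$, and hence a representative $y_{A_j} \sim x$ whose content $\const{set}_{F,i}\;y_{A_j} \subseteq A_j$ avoids $b_j$. For $l = 0, \ldots, m$ let $g_l$ be the atom map that sends $b_1, \ldots, b_l$ to $c$ and fixes every other atom (so $g_0 = \const{id}$), and let $m_l \hastype \seq{\alpha}\;F \fun \seq{\alpha}\;F$ apply $g_l$ in the $i$-th component and the identity in the others. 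Put $y = m_m\;x$. Then $\const{set}_{F,i}\;y = g_m\langle\const{set}_{F,i}\;x\rangle \subseteq \bigcap\mathcal{A}_i$ by \theorem{set\_map} and the choice of $c$, since $g_m$ maps the offending atoms into $\bigcap\mathcal{A}_i$ and the remaining atoms of $x$ already lie there.

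It remains to show $x \sim y$, which I obtain from the chain $x = m_0\;x \sim m_1\;x \sim \cdots \sim m_m\;x = y$. For the step from $l$ to $l+1$, observe that $g_l$ and $g_{l+1}$ differ only at the atom $b_{l+1}$, which lies outside $\const{set}_{F,i}\;y_{A_{l+1}}$; hence \theorem{map\_cong} yields $m_l\;y_{A_{l+1}} = m_{l+1}\;y_{A_{l+1}}$, while~\eqref{eq:map:respect} applied to $x \sim y_{A_{l+1}}$ gives both $m_l\;x \sim m_l\;y_{A_{l+1}}$ and $m_{l+1}\;y_{A_{l+1}} \sim m_{l+1}\;x$. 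Composing these three facts closes one step, and transitivity over $l = 0, \ldots, m-1$ finishes the argument (when $m = 0$ the witness is $x$ itself).

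The main obstacle, and the only point where finiteness is indispensable, is exactly this chain: each step removes a single offending atom while preserving $\optparen\sim$-equivalence, so the construction works only because there are finitely many such atoms. This is the quotient-level incarnation of Trnkov\'a's observation that Set functors preserve non-empty binary intersections---eliminating the bad atoms one at a time reduces the wide intersection over $\mathcal{A}_i$ to finitely many binary steps. If $\const{set}_{F,i}\;x$ could be infinite, as for the almost-everywhere-equal sequences of Example~\ref{ex:ae}, the chain need not terminate and the inclusion can genuinely fail.
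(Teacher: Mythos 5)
Your proof is correct and is in essence the same argument as the paper's: both fix a point of $\bigcap\mathcal{A}_i$, use finiteness to reduce the wide intersection to finitely many steps, and build a $\optparen{\sim}$-chain in which each step redirects bad atoms to that fixed point, justified by \theorem{map\_cong} on a representative $y_A$ whose atoms avoid the offending element, together with~\eqref{eq:map:respect}, symmetry, and transitivity. The only difference is bookkeeping: the paper applies finiteness to a representative's atom set (the sets $B_A = \const{set}_{F,i}\;y_A \cup \{a_0\}$), first extracting a finite subfamily with $\bigcap_{j=1}^n B_{A_j} \subseteq \bigcap\mathcal{A}_i$ and then inducting on successively mapped witnesses, whereas you enumerate the finitely many offending atoms of $x$ itself and telescope cumulative one-atom redirections $m_0\;x \sim \cdots \sim m_m\;x$ directly on $x$.
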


\begin{proof}
  Fix $a_0 \in \bigcap \mathcal{A}_i$ and let $x \in \bigcap_{A \in \mathcal{A}_i} [\{x \mid \const{set}_{F,i}\;x \subseteq A\}]_\sim$.
  So for $A \in \mathcal{A}_i$, there exists a $y_A$ such that $x \sim y_A$ and $\const{set}_{F,i}\;y_A\subseteq A$.
  Set $B_A = \const{set}_{F,i}\;y_A \cup \{a_0\}$ and $\mathcal{B} = \{ B_A \mid A \in \mathcal{A}_i \}$.
  Then $\bigcap \mathcal{B} \subseteq \bigcap \mathcal{A}_i$.
  Since all $B_A$ are finite, there exist finitely many $A_j$ ($j=1,\ldots,n$ for some $n$) such that $\bigcap_{j=1}^n B_{A_j} \subseteq \bigcap \mathcal{B}$;
  for example, pick $A_1 \in \mathcal{A}_i$ arbitrarily, let $B_{A_1} - \bigcap \mathcal{B} = \{ b_2, \ldots, b_n \}$, and choose $A_j \in \mathcal{A}_i$ such that $b_j \notin B_{A_j}$ for $j = 2,\ldots,n$.
  Clearly, $a_0 \in B_{A_j}$ for all $j$ by construction.

  It suffices to show that there exists a $z$ such that $x \sim z$ and $\const{set}_{F,i}\;z \subseteq \bigcap_{j=1}^n B_{A_j}$.
  This $z$ proves that $x \in \left[\{x \mid \const{set}_{F,i}\;x \subseteq \bigcap \mathcal{A}_i\}\right]_\sim$
  because $\bigcap_{j=1}^n B_{A_j} \subseteq \bigcap \mathcal{B} \subseteq \bigcap \mathcal{A}_i$.
  We show the existence of such a $z$ by induction over the finitely many $A_j$.
  If there is only a single $A_1$, i.e., $n = 1$, then choose $z = y_{A_1}$.
  Otherwise, let $z'$ be such that $x \sim z'$ and $\const{set}_{F,i}\;z' \subseteq \bigcap_{j=1}^{n-1} B_{A_j}$ by the induction hypothesis.
  Define $f_i(a) = a$ for $a \in B_{A_n}$ and $f_i(a) = a_0$ for $a \notin B_{A_n}$.
  Set $f_{i'} = \const{id}$ for $i' \neq i$ and choose $z = \const{map}_F\;\seq{f}\;z'$.
  Then, $z \sim \const{map}_F\;\seq{f}\;y_{A_n}$ follows from $z' \sim x \sim y_{A_n}$ by~\eqref{eq:map:respect} and $\const{map}_F\;\seq{f}\;y_{A_n} = \const{map}_F\;\seq{\const{id}}\;y_{A_n} = y_{A_n}$ holds by \theorem{map\_cong} and \theorem{map\_id}, so $z \sim y_{A_n} \sim x$.
  Moreover, \theorem{set\_map} gives $\const{set}_{F,i}\;z = f_i\langle \const{set}_{F,i}\;z'\rangle \subseteq f_i\langle \bigcap_{j=1}^{n-1} B_{A_j}\rangle \subseteq \left(\bigcap_{j=1}^{n-1} B_{A_j}\right) \cap B_{A_n} = \bigcap_{j=1}^n B_{A_j}$.
\end{proof}

Preservation of preimages amounts to the following unsurprising condition:
\begin{equation}
  \label{eq:preimage:preservation:alt}
  \forall i.\;f_i^{-1}\langle A_i\rangle \neq \{\}
  \implies
  (\const{map}_F\;\seq{f})^{-1}\left\langle\bigcup[\internalize{F}\;\seq{A}]_\sim\right\rangle \subseteq
  \bigcup\left[(\const{map}_F\;\seq{f})^{-1}\langle\internalize{F}\;\seq{A}\rangle\right]_\sim
\end{equation}
As for wide intersections, taking equivalence classes must preserve \emph{non-empty} preimages.
Again, non-emptiness comes from $\circledast$ being contained in all sets of interest.

\begin{lem}%
  \label{lem:preimage:preservation}
  $\internalizesim{F}{\sim}\ \seq{(g^{-1}\langle B\rangle)} = (\const{map}_F\;\seq{g})^{-1} \langle\internalizesim{F}{\sim}\;\seq{B}\rangle$
  if~\eqref{eq:preimage:preservation:alt} is satisfied for $\seq{A}$ and $\seq{f}$ given by $A_i = \{\circledast\} \cup \embedding\langle B_i \rangle$ and $f_i = \const{map}_{1+}\;g_i$.
\end{lem}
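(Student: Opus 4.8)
The plan is to mirror the proof of Lemma~\ref{lemma:wide:intersection:preservation}: use Lemma~\ref{lem:F_in'} to rewrite both occurrences of $\internalizesim{F}{\sim}$ in terms of the plain action $\internalize{F}$, and then discharge the resulting goal using the preimage-preservation equation~\eqref{eq:preimage:preservation}, which holds for the BNF $F$, together with the assumed condition~\eqref{eq:preimage:preservation:alt}. Concretely, I would fix an arbitrary $x$ and show that the two membership statements are equivalent. Abbreviating $z = \const{map}_F\;\seq{\embedding}\;x$, $m = \const{map}_F\;\seq{(\const{map}_{\unit+{}}\;g)}$, and $I = \internalize{F}\;\seq{(\{\circledast\} \cup \embedding\langle B\rangle)}$, Lemma~\ref{lem:F_in'} turns $x \in \internalizesim{F}{\sim}\;\seq{(g^{-1}\langle B\rangle)}$ into $z \in \bigcup[\internalize{F}\;\seq{(\{\circledast\} \cup \embedding\langle g^{-1}\langle B\rangle\rangle)}]_\sim$, and it turns $x \in (\const{map}_F\;\seq{g})^{-1}\langle\internalizesim{F}{\sim}\;\seq{B}\rangle$ into $m\;z \in \bigcup[I]_\sim$, i.e. $z \in m^{-1}\langle\bigcup[I]_\sim\rangle$. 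Here I use that, for the proper equivalence $\optparen{\sim}$ of~\eqref{eq:equiv:sim}, $[w]_\sim \in [C]_\sim$ holds iff $w \in \bigcup[C]_\sim$.

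The bridge between the two $\internalize{F}$-expressions is the equation~\eqref{eq:preimage:preservation} of $F$, instantiated with the very $\seq{A}$ and $\seq{f}$ supplied by the hypothesis, i.e. $A_i = \{\circledast\} \cup \embedding\langle B_i\rangle$ and $f_i = \const{map}_{\unit+{}}\;g_i$. A short calculation, using that $\embedding$ is injective and $\circledast$ lies outside its range, gives $f_i^{-1}\langle A_i\rangle = \{\circledast\} \cup \embedding\langle g_i^{-1}\langle B_i\rangle\rangle$, so~\eqref{eq:preimage:preservation} yields $\internalize{F}\;\seq{(\{\circledast\} \cup \embedding\langle g^{-1}\langle B\rangle\rangle)} = m^{-1}\langle I\rangle$. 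Combined with the identity $\embedding \circ g_i = \const{map}_{\unit+{}}\;g_i \circ \embedding$, which via \theorem{map\_comp} shows $\const{map}_F\;\seq{\embedding}\;(\const{map}_F\;\seq{g}\;x) = m\;z$, the goal reduces to the single set equality $\bigcup[m^{-1}\langle I\rangle]_\sim = m^{-1}\langle\bigcup[I]_\sim\rangle$.

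I would prove this last equality by two inclusions. The inclusion $\bigcup[m^{-1}\langle I\rangle]_\sim \subseteq m^{-1}\langle\bigcup[I]_\sim\rangle$ is the easy, ``trivial'' direction, analogous to the one in Lemma~\ref{lemma:wide:intersection:preservation}: if $z' \sim w$ with $m\;w \in I$, then $m\;z' \sim m\;w \in I$ by~\eqref{eq:map:respect}, so $m\;z' \in \bigcup[I]_\sim$. The converse inclusion $m^{-1}\langle\bigcup[I]_\sim\rangle \subseteq \bigcup[m^{-1}\langle I\rangle]_\sim$ is exactly the conclusion of the assumed condition~\eqref{eq:preimage:preservation:alt} for this instance. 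Its non-emptiness hypothesis $\forall i.\;f_i^{-1}\langle A_i\rangle \neq \{\}$ holds automatically, since $\circledast \in f_i^{-1}\langle A_i\rangle$; this is precisely the reason for changing types to range over $\unit + \alpha_i$, so that all relevant sets contain the extra atom $\circledast$.

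The routine parts are the membership-rewriting with Lemma~\ref{lem:F_in'} and the computation of $f_i^{-1}\langle A_i\rangle$. The step that needs the most care is the type-level bookkeeping connecting the two sides, in particular the commutation $\embedding \circ g_i = \const{map}_{\unit+{}}\;g_i \circ \embedding$ that lets one push $\const{map}_F\;\seq{g}$ through $\const{map}_F\;\seq{\embedding}$ and thereby identify the preimage under $m = \const{map}_F\;\seq{(\const{map}_{\unit+{}}\;g)}$ appearing on the ``raw'' side with the preimage under $\const{map}_F\;\seq{g}$ appearing in the statement.
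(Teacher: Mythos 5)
Your proposal is correct and takes essentially the same route as the paper's proof: both rewrite the two memberships via Lemma~\ref{lem:F_in'}, bridge them with $F$'s plain preimage preservation~\eqref{eq:preimage:preservation} using $f_i^{-1}\langle A_i\rangle = \{\circledast\} \cup \embedding\langle g_i^{-1}\langle B_i\rangle\rangle$ together with the commutation $f_i \circ \embedding = \embedding \circ g_i$, and justify swapping $\bigcup[\cdot]_\sim$ with the preimage by the trivial inclusion via~\eqref{eq:map:respect} plus the assumed~\eqref{eq:preimage:preservation:alt}, whose non-emptiness precondition holds because $\circledast$ lies in every $f_i^{-1}\langle A_i\rangle$. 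The only difference is presentational: you isolate the final set equality explicitly where the paper runs a single iff-chain.
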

\begin{proof}
  We have $f_i^{-1}\langle A_i \rangle = \{\circledast\} \cup \embedding\langle g_i^{-1}\langle B_i\rangle\rangle$ and the precondition of~\eqref{eq:preimage:preservation:alt} holds.
  The inclusion from right to left holds trivially:
  If $x \in \bigcup\left[(\const{map}_F\;\seq{f})^{-1}\langle\internalize{F}\;\seq{A}\rangle\right]_\sim$,
  then there exists a $y$ such that $x \sim y$ and $\const{map}_F\;\seq{f}\;y \in \internalize{F}\;A$;
  so $\const{map}_F\;\seq{f}\;x \sim \const{map}_F\;\seq{f}\;y$ by~\eqref{eq:map:respect}.
  We calculate
  \begin{equation*}
    \begin{tabular}[b]{@{}l@{}l@{}l@{\qquad}l@{}}
      $x \in \internalizesim{F}{\sim}\ \seq{(g^{-1}\langle B\rangle)}$
      &
      ${} \Iff {}$
      &
      $[\const{map}_F\;\seq{\embedding}\;x]_\sim \in [\internalize{F}\;\seq{(\{\circledast\} \cup \embedding \langle g^{-1}\langle B \rangle\rangle)}]_\sim \}$
      &
      by Lemma~\ref{lem:F_in'}
      \\[0.7\jot]
      &
      ${} \Iff {}$
      &
      $[\const{map}_F\;\seq{\embedding}\;x]_\sim \in [\internalize{F}\;\seq{(f^{-1}\langle A \rangle)}]_\sim$
      &
      by choice of $\seq{f}$
      \\[0.7\jot]
      &
      ${} \Iff {}$
      &
      $[\const{map}_F\;\seq{\embedding}\;x]_\sim \in [(\const{map}_F\;\seq{f})^{-1}\langle \internalize{F}\;\seq{A}\rangle]_\sim$
      &
      by~\eqref{eq:preimage:preservation}
      \\[0.7\jot]
      &
      ${} \Iff {}$
      &
      $\const{map}_F\;\seq{\embedding}\;x \in \bigcup[(\const{map}_F\;\seq{f})^{-1}\langle \internalize{F}\;\seq{A}\rangle]_\sim$
      \\[0.7\jot]
      &
      ${} \Iff {}$
      &
      $\const{map}_F\;\seq{\embedding}\;x \in (\const{map}_F\;\seq{f})^{-1}\langle \bigcup [\internalize{F}\;\seq{A}]_\sim \rangle$
      &
      by~\eqref{eq:preimage:preservation:alt}
      \\[0.7\jot]
      &
      ${} \Iff {}$
      &
      $[\const{map}_F\;\seq{f}\;(\const{map}_F\;\seq{\embedding}\;x)]_\sim \in [\internalize{F}\;\seq{A}]_\sim$
      \\[0.7\jot]
      &
      ${} \Iff {}$
      &
      $[\const{map}_F\;\seq{\embedding}\;(\const{map}_F\;\seq{g}\;x)]_\sim \in [\internalize{F}\;\seq{A}]_\sim$
      &
      as $f_i \circ \embedding = \embedding \circ g_i$
      \\[0.7\jot]
      &
      ${} \Iff {}$
      &
      $x \in (\const{map}_F\;\seq{g})^{-1} \langle\internalizesim{F}{\sim}\;\seq{B}\rangle$
      &
      by Lemma~\ref{lem:F_in'}
    \end{tabular}
    \qedhere
  \end{equation*}
\end{proof}

We do not elaborate on how to establish preimage preservation~\eqref{eq:preimage:preservation:alt} any further as it follows from subdistributivity, which we will look at in the next subsection.
Instead, we show that the quotient setters $\const{set}_Q$ are natural transformations under conditions~\eqref{eq:wide:intersections} and~\eqref{eq:preimage:preservation:alt}.

\begin{lem}%
  \label{lemma:set:Q:natural}
  Under the conditions~\eqref{eq:wide:intersections} and~\eqref{eq:preimage:preservation:alt},
  $\const{set}_{Q,i}$ is a natural transformation, i.e.,
  $\const{set}_{Q,i}\;(\const{map}_Q\;\seq{f}\;[x]_\sim) = f_i\langle \const{set}_{Q,i}\;[x]_\sim\rangle$,
  and therefore $\const{set}_{Q,i}\;[\const{map}_F\;\seq{f}\;x]_\sim = f_i\langle \const{set}_{Q,i}\;[x]_\sim\rangle$.
\end{lem}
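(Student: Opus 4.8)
The plan is to reduce the naturality of $\const{set}_{Q,i}$ to Gumm's characterization, applied to the functor $Q$ itself. Recall from the discussion preceding Lemma~\ref{lem:F_in'} that $\internalize{Q}\;\seq{A} = [\internalizesim{F}{\sim}\;\seq{A}]_\sim$, and that $\const{set}_{Q,i}$ was defined from $\const{map}_Q$ via~\eqref{eq:set:in} and~\eqref{eq:F_in} (the latter instantiated at $Q$, which amounts to replacing equality by $\sim$, i.e.\ using~\eqref{eq:F_in'}). Gumm's theorem then says that $\const{set}_{Q,i}$ is a natural transformation as soon as $\internalize{Q}$ preserves wide intersections and preimages. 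So it suffices to verify these two preservation properties for $\internalize{Q}$, which I would obtain essentially for free from the two preceding lemmas.

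For wide intersections this is immediate: since we assume~\eqref{eq:wide:intersections} as a global hypothesis, Lemma~\ref{lemma:wide:intersection:preservation} applies to every family $\seq{\mathcal{B}}$ (the instance of~\eqref{eq:wide:intersections} required for the associated $\seq{\mathcal{A}}$ is covered by the assumption), and its conclusion $[\internalizesim{F}{\sim}\;\seq{(\bigcap\mathcal{B})}]_\sim = \bigcap\{[\internalizesim{F}{\sim}\;\seq{B}]_\sim \mid \forall i.\; B_i \in \mathcal{B}_i\}$ is exactly the statement that $\internalize{Q}$ preserves wide intersections. For preimages, Lemma~\ref{lem:preimage:preservation} (under~\eqref{eq:preimage:preservation:alt}) gives $\internalizesim{F}{\sim}\;\seq{(g^{-1}\langle B\rangle)} = (\const{map}_F\;\seq{g})^{-1}\langle\internalizesim{F}{\sim}\;\seq{B}\rangle$; taking equivalence classes on both sides turns the left-hand side into $\internalize{Q}\;\seq{(g^{-1}\langle B\rangle)}$, and I would show that the right-hand side becomes $(\const{map}_Q\;\seq{g})^{-1}\langle\internalize{Q}\;\seq{B}\rangle$.

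The main obstacle is exactly this last conversion, because taking equivalence classes does not commute with taking preimages in general. The key observation that makes it work is that $\internalizesim{F}{\sim}\;\seq{B}$ is \emph{saturated} under $\sim$: if $x \in \internalizesim{F}{\sim}\;\seq{B}$ and $x \sim x'$, then for any $\seq{f}, \seq{g}$ agreeing on $\seq{B}$ we have $\const{map}_F\;\seq{f}\;x' \sim \const{map}_F\;\seq{f}\;x \sim \const{map}_F\;\seq{g}\;x \sim \const{map}_F\;\seq{g}\;x'$ by~\eqref{eq:map:respect} and~\eqref{eq:F_in'}, so $x' \in \internalizesim{F}{\sim}\;\seq{B}$. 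Its preimage $(\const{map}_F\;\seq{g})^{-1}\langle\internalizesim{F}{\sim}\;\seq{B}\rangle$ is then saturated as well (again by~\eqref{eq:map:respect}), and for a saturated set $T$ one has $[x]_\sim \in [T]_\sim \Iff x \in T$; combined with $\const{map}_Q\;\seq{g}\;[x]_\sim = [\const{map}_F\;\seq{g}\;x]_\sim$ this yields $[(\const{map}_F\;\seq{g})^{-1}\langle\internalizesim{F}{\sim}\;\seq{B}\rangle]_\sim = (\const{map}_Q\;\seq{g})^{-1}\langle[\internalizesim{F}{\sim}\;\seq{B}]_\sim\rangle$, establishing preimage preservation for $\internalize{Q}$. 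Having both preservation properties, Gumm's characterization delivers naturality in the form $\const{set}_{Q,i}\;(\const{map}_Q\;\seq{f}\;[x]_\sim) = f_i\langle\const{set}_{Q,i}\;[x]_\sim\rangle$, and the alternative form $\const{set}_{Q,i}\;[\const{map}_F\;\seq{f}\;x]_\sim = f_i\langle\const{set}_{Q,i}\;[x]_\sim\rangle$ follows by unfolding $\const{map}_Q\;\seq{f}\;[x]_\sim = [\const{map}_F\;\seq{f}\;x]_\sim$.
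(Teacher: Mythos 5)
Your proof is correct and takes essentially the same route as the paper's, which likewise obtains wide-intersection preservation for $\internalize{Q}$ from Lemma~\ref{lemma:wide:intersection:preservation}, preimage preservation from Lemma~\ref{lem:preimage:preservation}, and concludes by Gumm's Theorem~6. Your saturation argument merely makes explicit the passage from the $F$-level statement of Lemma~\ref{lem:preimage:preservation} to preimage preservation for $\internalize{Q}$ (i.e., $[(\const{map}_F\;\seq{g})^{-1}\langle\internalizesim{F}{\sim}\;\seq{B}\rangle]_\sim = (\const{map}_Q\;\seq{g})^{-1}\langle\internalize{Q}\;\seq{B}\rangle$), a step the paper leaves implicit.
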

\begin{proof}
  By Lemma~\ref{lemma:wide:intersection:preservation}, $\internalize{Q}$ preserves wide intersections~\eqref{eq:wide:intersection}.
  Similarly, Lemma~\ref{lem:preimage:preservation} shows that $\internalize{Q}$ preserves preimages~\eqref{eq:preimage:preservation}.
  The claim follows with~\cite[Theorem~6]{Gumm2005CALCO}.
\end{proof}

\begin{lem}%
  \label{lemma:set:Q:map:cong}
  Under the conditions~\eqref{eq:wide:intersections} and~\eqref{eq:preimage:preservation:alt},
  $\const{map}_F\;\seq{f}\;x \sim \const{map}_F\;\seq{g}\;x$
  if $f_i\;z = g_i\;z$ for all $i$ and all $z \in \const{set}_{Q,i}\;[x]_\sim$.
\end{lem}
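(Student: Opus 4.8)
The statement is exactly the \theorem{map\_cong} law transported to the quotient, so the plan is to reduce it to $F$'s own \theorem{map\_cong}, working over the padded type $\seq{(\unit + \alpha)}\;F$ so that the degenerate cases are absorbed by the fresh atom $\circledast$. Write $S_i = \const{set}_{Q,i}\;[x]_\sim$ and assume $f_i\;z = g_i\;z$ for all $i$ and all $z \in S_i$.

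First I would show that $x \in \internalizesim{F}{\sim}\;\seq{S}$. Under conditions~\eqref{eq:wide:intersections} and~\eqref{eq:preimage:preservation:alt}, Lemma~\ref{lemma:set:Q:natural} gives that $\const{set}_Q$ is a natural transformation, hence the characterization $\internalize{Q}\;\seq{A} = \{[z]_\sim \mid \forall i.\;\const{set}_{Q,i}\;[z]_\sim \subseteq A_i\}$ holds. Instantiating $A_i = S_i$ yields $[x]_\sim \in \internalize{Q}\;\seq{S}$ immediately, since $\const{set}_{Q,i}\;[x]_\sim = S_i$. Because $\internalize{Q}\;\seq{A} = [\internalizesim{F}{\sim}\;\seq{A}]_\sim$ and the set $\internalizesim{F}{\sim}\;\seq{A}$ is $\sim$-saturated (a direct consequence of~\eqref{eq:map:respect} and the transitivity of $\sim$ from~\eqref{eq:equiv:sim}), this gives $x \in \internalizesim{F}{\sim}\;\seq{S}$. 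Note that this step makes no non-emptiness assumption on $S_i$, which is essential since $\const{set}_{Q,i}\;[x]_\sim$ may be empty (Example~\ref{ex:Inl:Inl:cont2}).

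Next I would extract a well-behaved representative. Applying Lemma~\ref{lem:F_in'} to $x \in \internalizesim{F}{\sim}\;\seq{S}$ produces a $w$ with $\const{map}_F\;\seq{\embedding}\;x \sim w$ and $\const{set}_{F,i}\;w \subseteq \{\circledast\} \cup \embedding\langle S_i\rangle$ for all $i$. The functions $\const{map}_{\unit+}\;f_i$ and $\const{map}_{\unit+}\;g_i$ then agree on $\{\circledast\} \cup \embedding\langle S_i\rangle$, as both fix $\circledast$ and send $\embedding\;a$ to $\embedding\;(f_i\;a) = \embedding\;(g_i\;a)$ for $a \in S_i$; a fortiori they agree on $\const{set}_{F,i}\;w$. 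Hence $F$'s \theorem{map\_cong} yields $\const{map}_F\;\seq{(\const{map}_{\unit+}\;f)}\;w = \const{map}_F\;\seq{(\const{map}_{\unit+}\;g)}\;w$. The conclusion now follows by the identical chain of rewrites as in the right-to-left direction of Lemma~\ref{lem:F_in'} with $y := w$: using $f_i = \embedding^{-1} \circ \const{map}_{\unit+}\;f_i \circ \embedding$ together with \theorem{map\_comp} and \theorem{map\_id}, and transporting $\const{map}_F\;\seq{\embedding}\;x \sim w$ through the outer mappers via~\eqref{eq:map:respect}, so that $\const{map}_F\;\seq{f}\;x \sim \const{map}_F\;\seq{g}\;x$ by transitivity. (The only deviation from that calculation is that $f_i, g_i$ here have an arbitrary codomain $\beta_i$ rather than $\unit + \alpha_i$, but every rewrite step is insensitive to the codomain.)

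The main obstacle is precisely the possibility of empty $S_i$: one cannot just fix a representative of $[x]_\sim$ and invoke $F$'s \theorem{map\_cong} on it, because the quotient may identify values whose $F$-setters disagree and are even non-empty, while $\const{set}_{Q,i}\;[x]_\sim$ collapses to $\{\}$. The passage to $\seq{(\unit + \alpha)}\;F$ is what circumvents this: the extra atom $\circledast$ keeps all sets of interest non-degenerate, and confining $w$'s atoms to $\{\circledast\} \cup \embedding\langle S_i\rangle$ is exactly what makes $F$'s unquotiented \theorem{map\_cong} applicable. The real content therefore lies in packaging the representative extraction cleanly through Lemma~\ref{lem:F_in'} and the naturality of $\const{set}_Q$; the remaining manipulation is the bookkeeping already performed in Lemma~\ref{lem:F_in'}.
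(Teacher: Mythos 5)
Your proof is correct and matches the paper's own argument in all essentials: both establish $x \in \internalizesim{F}{\sim}\;\seq{(\const{set}_Q\;[x]_\sim)}$, pass to the padded type $\seq{(\unit+\alpha)}\;F$ where $\const{map}_{\unit+{}}\;f_i$ and $\const{map}_{\unit+{}}\;g_i$ agree on $\{\circledast\} \cup \embedding\langle \const{set}_{Q,i}\;[x]_\sim\rangle$, apply $F$'s \theorem{map\_cong} there, and sandwich the result between two applications of~\eqref{eq:map:respect} using $f_i = \embedding^{-1} \circ \const{map}_{\unit+{}}\;f_i \circ \embedding$, \theorem{map\_comp}, and \theorem{map\_id}. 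The only (harmless) variation is in how the first step is justified: the paper computes the membership directly from Theorem~\ref{thm:set:characterization} and Lemma~\ref{lemma:wide:intersection:preservation} and then applies \theorem{map\_cong} to $\const{map}_F\;\seq{h}\;x$ for an explicitly constructed truncation $h$ obtained from~\eqref{eq:F_in'}, whereas you invoke the Gumm-style characterization of $\internalize{Q}$ (available under the same hypotheses that power Lemma~\ref{lemma:set:Q:natural}) and extract the representative $w$ from Lemma~\ref{lem:F_in'}---both routes rest on the same underlying lemmas.
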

\begin{proof}
  We first show that $[x]_\sim \in \internalize{Q}\;\seq{(\const{set}_Q\;[x]_\sim)}$.
  We observe
  \begin{equation*}
    \begin{tabular}{@{}l@{}l@{}l@{\quad}l@{}}
      $[x]_\sim \in \internalize{Q}\;\seq{(\const{set}_Q\;[x]_\sim)}$
      &
      ${} \Iff {}$
      &
      $[x]_\sim \in \internalize{Q}\;\seq{(\bigcap_{y \in [\const{map}\;\seq\embedding\;x]_\sim}\;\{ a \mid \embedding\;a \in \const{set}_F\;y \})}$
      &
      by Theorem~\ref{thm:set:characterization}
      \\[0.7\jot]
      &
      ${} \Iff {}$
      &
      $[x]_\sim \in \bigcap_{y \in [\const{map}\;\seq{\embedding}\;x]_\sim} [\internalizesim{F}{\sim}\;\seq{\{ a \mid \embedding\;a \in \const{set}_F\;y \}}]_\sim$
      &
      by Lemma~\ref{lemma:wide:intersection:preservation}
      \\[0.7\jot]
      &
      ${} \Iff {}$
      &
      $\forall y \in [\const{map}\;\seq{\embedding}\;x]_\sim.\;
      [x]_\sim \in [\internalizesim{F}{\sim}\;\seq{\{ a \mid \embedding\;a \in \const{set}_F\;y \}}]_\sim$
    \end{tabular}
  \end{equation*}
  So let $y \sim \const{map}\;\seq{\embedding}\;x$.
  Then $y \in \internalize{F}\;\seq{(\{\circledast\} \cup \const{set}_F\;y)}$ by definition of $\internalize{F}$ and thus $[x]_\sim \in [\internalizesim{F}{\sim}\;\seq{\{ a \mid \embedding\;a \in \const{set}_F\;y \}}]_\sim$ by Lemma~\ref{lem:F_in'} using $\embedding\langle \{ a \mid \embedding\;a \in \const{set}_{F,i}\;y \} \rangle = \const{set}_{F,i}\;y$ for all $i$.

  By~\eqref{eq:F_in'}, we therefore have $\const{map}_F\;\seq{\embedding}\;x \sim \const{map}_F\;\seq{h}\;x$ for $h_i\;a = \embedding\;a$ for $a \in \const{set}_{Q,i}$ and $h_i\;a = \circledast$ otherwise.
  Then
  \begin{equation*}
    \begin{tabular}[b]{@{}l@{}l@{\qquad}l@{}}
      $\const{map}_F\;\seq{f}\;x$
      &
      ${} = \const{map}_F\;\seq{\embedding^{-1}}\;(\const{map}_F\;\seq{(\const{map}_{\unit+}\;f)}\;(\const{map}_F\;\seq{\embedding}\;x))$
      &
      as $\embedding^{-1} \circ \const{map}_{\unit+}\;f_i \circ \embedding = f_i$
      \\[0.7\jot]
      &
      ${} \sim \const{map}_F\;\seq{\embedding^{-1}}\;(\const{map}_F\;\seq{(\const{map}_{\unit+}\;f)}\;(\const{map}_F\;\seq{h}\;x))$
      &
      by~\eqref{eq:map:respect}
      \\[0.7\jot]
      &
      ${} = \const{map}_F\;\seq{\embedding^{-1}}\;(\const{map}_F\;\seq{(\const{map}_{\unit+}\;g)}\;(\const{map}_F\;\seq{h}\;x))$
      &
      by \theorem{map\_cong}
      \\[0.7\jot]
      &
      ${} \sim \const{map}_F\;\seq{\embedding^{-1}}\;(\const{map}_F\;\seq{(\const{map}_{\unit+}\;g)}\;(\const{map}_F\;\embedding\;x))$
      &
      by~\eqref{eq:map:respect}
      \\[0.7\jot]
      &
      ${} = \const{map}_F\;\seq{g}\;x$
      &
      as $\embedding^{-1} \circ \const{map}_{\unit+}\;g_i \circ \embedding = g_i$
    \end{tabular}
    \qedhere
  \end{equation*}
\end{proof}

\subsection{The quotient's relator}%
\label{section:relator}

In the previous section, we have shown that it is not a good idea to naively lift the setter and a more general construction is needed.
We now show that the same holds for the relator.
The following straightforward definition
\begin{quotex}
  $\keyw{lift\_definition}\;\const{rel}_Q \hastype
  \seq{(\alpha \reltype \beta)} \fun \seq{\alpha}\;Q \reltype \seq{\beta}\;Q\;
  \keyw{is}\;
  \const{rel}_F$
\end{quotex}
relates two equivalence classes $[x]_\sim$ and $[y]_\sim$ iff there are representatives $x' \in [x]_\sim$ and $y' \in [y]_\sim$ such that $(x', y') \in \const{rel}_F\;\seq{R}$. %
This relator does not satisfy \theorem{in\_rel} in general.

\begin{exa}[Example~\ref{ex:Inl:Inl:cont2} continued]\label{ex:Inl:Inl:cont3}
  By the lifted definition, $([\const{Inl}\;x]_{\sim_P},[ \const{Inl}\;y]_{\sim_P}) \notin \const{rel}_{Q_P}\;\{\}$ because there are no $(x', y')$ in the empty relation $\{\}$ that could be used to relate using $\const{rel}_{F_P}$ the representatives $\const{Inl}\;x'$ and $\const{Inl}\;y'$.
  However, the witness $z = [\const{Inl}\;(x, y)]_{\sim_P}$ satisfies the right-hand side of \theorem{in\_rel} as $\internalize{Q}\;\{\} = \{[\const{Inl}\;\_]_{\sim_P}\}$.
  \exampleend
\end{exa}

So what is the relationship between $\const{rel}_Q$ and $\const{rel}_F$ and under what conditions does the subdistributivity property \theorem{rel\_comp} hold?
Like for the setter, we avoid the problematic case of empty relations by switching to $\unit + \alpha$.
The relator $\const{rel}_{\unit+{}}$ adds the pair $(\circledast, \circledast)$ to every relation $R$ and thereby ensures that all relations and their compositions are non-empty.
Accordingly, we obtain the following characterization under the conditions~\eqref{eq:wide:intersections} and~\eqref{eq:preimage:preservation:alt}:

\begin{thm}[Relator characterization]%
  \label{thm:rel:characterization}
  \begin{equation*}
  ([x]_\sim, [y]_\sim) \in \const{rel}_Q\;\seq{R} \Iff
  (\const{map}_F\;\seq{\embedding}\;x, \const{map}_F\;\seq{\embedding}\;y) \in (\optparen{\sim} \relcomp \const{rel}_F\;\seq{(\const{rel}_{\unit+{}}\;R)} \relcomp \optparen{\sim})
  \end{equation*}
\end{thm}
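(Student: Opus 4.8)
The plan is to read $\const{rel}_Q$ as the relator forced by \theorem{in\_rel} from $\const{map}_Q$ and $\const{set}_Q$, which is the only reading under which the claim can hold, given that the naive lift already fails \theorem{in\_rel} in Example~\ref{ex:Inl:Inl:cont3}. Under this reading $([x]_\sim,[y]_\sim)\in\const{rel}_Q\;\seq{R}$ unfolds to the existence of a $z$ with $\forall i.\;\const{set}_{Q,i}\;[z]_\sim\subseteq R_i$, $\const{map}_F\;\seq{\fst}\;z\sim x$, and $\const{map}_F\;\seq{\snd}\;z\sim y$, while the right-hand side asks for $u,v$ with $\const{map}_F\;\seq{\embedding}\;x\sim u$, $(u,v)\in\const{rel}_F\;\seq{(\const{rel}_{\unit+{}}\;R)}$, and $v\sim\const{map}_F\;\seq{\embedding}\;y$. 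The $\circledast$-device plays the same role as for the setter: since $\const{rel}_{\unit+{}}\;R$ always contains $(\circledast,\circledast)$, the intermediate relation is never empty, and $\circledast$ absorbs atoms that $\const{set}_Q$ does not track. I would prove the two implications separately, invoking Lemmas~\ref{lemma:set:Q:natural} and~\ref{lemma:set:Q:map:cong}, which is exactly where the hypotheses~\eqref{eq:wide:intersections} and~\eqref{eq:preimage:preservation:alt} are consumed; I would also keep track that these facts are used polymorphically at the instances $\alpha\times\beta$ and $(\unit+\alpha)\times(\unit+\beta)$.

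For the forward direction, given a witness $z$ as above I would lift it to $(\unit+\alpha)$ atom-wise through $s_i$ that sends $p\mapsto(\embedding\,(\fst\,p),\embedding\,(\snd\,p))$ when $p\in\const{set}_{Q,i}\;[z]_\sim$ and $p\mapsto(\circledast,\circledast)$ otherwise, and set $w=\const{map}_F\;\seq{s}\;z$. Then \theorem{set\_map} together with $\const{set}_{Q,i}\;[z]_\sim\subseteq R_i$ yields $\const{set}_{F,i}\;w\subseteq\const{rel}_{\unit+{}}\;R_i$, so \theorem{in\_rel} gives $(\const{map}_F\;\seq{\fst}\;w,\const{map}_F\;\seq{\snd}\;w)\in\const{rel}_F\;\seq{(\const{rel}_{\unit+{}}\;R)}$. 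Since $\fst\circ s_i$ agrees with $\embedding\circ\fst$ on $\const{set}_{Q,i}\;[z]_\sim$, Lemma~\ref{lemma:set:Q:map:cong} gives $\const{map}_F\;\seq{\fst}\;w\sim\const{map}_F\;\seq{\embedding}\;(\const{map}_F\;\seq{\fst}\;z)$, which is $\sim\const{map}_F\;\seq{\embedding}\;x$ by~\eqref{eq:map:respect}, and symmetrically for $\snd$. Taking $u=\const{map}_F\;\seq{\fst}\;w$, $v=\const{map}_F\;\seq{\snd}\;w$ and using symmetry of $\sim$ assembles the required composition.

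For the backward direction, \theorem{in\_rel} turns $(u,v)\in\const{rel}_F\;\seq{(\const{rel}_{\unit+{}}\;R)}$ into a witness $w$ with $\const{set}_{F,i}\;w\subseteq\const{rel}_{\unit+{}}\;R_i$, $\const{map}_F\;\seq{\fst}\;w=u$, $\const{map}_F\;\seq{\snd}\;w=v$. I would project back by $z=\const{map}_F\;\seq{(\embedding^{-1}\times\embedding^{-1})}\;w$. Using $u\sim\const{map}_F\;\seq{\embedding}\;x$ with~\eqref{eq:map:respect}, \theorem{map\_comp}, and \theorem{map\_id} gives $\const{map}_F\;\seq{\fst}\;z=\const{map}_F\;\seq{\embedding^{-1}}\;u\sim x$, and symmetrically $\const{map}_F\;\seq{\snd}\;z\sim y$. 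The remaining obligation $\const{set}_{Q,i}\;[z]_\sim\subseteq R_i$ is the delicate point: although $\const{set}_{F,i}\;z$ may contain the spurious pair $(\embedding^{-1}\,\circledast,\embedding^{-1}\,\circledast)$ stemming from $(\circledast,\circledast)$-atoms, I would show it is invisible to $\const{set}_Q$. Naturality (Lemma~\ref{lemma:set:Q:natural}) computes $\const{set}_{Q,i}\;[z]_\sim=(\embedding^{-1}\times\embedding^{-1})\langle\const{set}_{Q,i}\;[w]_\sim\rangle$, so it suffices to prove $\const{set}_{Q,i}\;[w]_\sim\subseteq\{(\embedding\,a,\embedding\,b)\mid(a,b)\in R_i\}$. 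For that, $\const{set}_{Q,i}\;[w]_\sim\subseteq\const{set}_{F,i}\;w\subseteq\const{rel}_{\unit+{}}\;R_i$ (the first inclusion from Theorem~\ref{thm:set:characterization} by instantiating its intersection at $y=\const{map}_F\;\seq{\embedding}\;w$), while applying naturality to $\fst$ and to $\embedding$ together with $[u]_\sim=[\const{map}_F\;\seq{\embedding}\;x]_\sim$ yields $\fst\langle\const{set}_{Q,i}\;[w]_\sim\rangle=\embedding\langle\const{set}_{Q,i}\;[x]_\sim\rangle$, whose right-hand side omits $\circledast$; hence no $(\circledast,\circledast)$-atom survives in $\const{set}_{Q,i}\;[w]_\sim$, and applying $\embedding^{-1}\times\embedding^{-1}$ lands inside $R_i$.

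The main obstacle is precisely this last step: because the mapper cannot erase atoms, the $(\circledast,\circledast)$-pairs introduced in the forward direction or inherited in the backward one cannot be dropped outright; they must be shown to fall outside the quotient setter. The argument hinges on $\const{set}_Q$ tracking only atoms common to the entire $\sim$-class, so that projecting $u$ back to $x$ recreates a class free of $\circledast$ and naturality forces the $\circledast$-pairs out of $\const{set}_{Q,i}$. I expect the secondary source of friction to be the bookkeeping of which instance of $\sim$ each lemma is invoked at, namely the base type versus $(\unit+\alpha)\times(\unit+\beta)$; here I rely on~\eqref{eq:wide:intersections} and~\eqref{eq:preimage:preservation:alt} holding uniformly across these instances.
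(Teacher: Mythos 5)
Your proposal is correct and takes essentially the same route as the paper's proof: both sides are unfolded via \theorem{in\_rel}, the forward witness is obtained by mapping atoms outside the tracked sets to $(\circledast,\circledast)$ and invoking Lemma~\ref{lemma:set:Q:map:cong}, and the backward witness is $\const{map}_F\;\seq{(\embedding^{-1}\times\embedding^{-1})}$ of the \theorem{in\_rel} witness, with the $\circledast$-pairs excluded exactly as in the paper by applying Lemma~\ref{lemma:set:Q:natural} twice to get $\fst\langle \const{set}_{Q,i}\;[w]_\sim\rangle = \embedding\langle\const{set}_{Q,i}\;[x]_\sim\rangle$ and bounding $\const{set}_{Q,i}\;[w]_\sim$ by $\const{set}_{F,i}\;w$ via Theorem~\ref{thm:set:characterization}. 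The only (immaterial) difference is that your $s_i$ cases on membership in $\const{set}_{Q,i}\;[z]_\sim$ where the paper's $f_i$ cases on membership in $R_i$.
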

\begin{proof}
  Applying \theorem{in\_rel} to both $\const{rel}_Q$ and $\const{rel}_F$, it suffices to show that
  \begin{equation}\label{eq:relQ:unfolded}
    \exists z.\; (\forall i.\; \const{set}_{Q,i}\;[z]_\sim \subseteq R_i) \land \const{map}_F\;\seq{\fst}\;z \sim x \land \const{map}_F\;\seq{\snd}\;z \sim y
  \end{equation}
  if and only if
  \begin{equation}\label{eq:relQ:alt:unfolded}
    \exists z'.\; (\forall i.\; \const{set}_{F,i}\;z' \subseteq \const{rel}_{\unit+{}}\;R_i) \land \const{map}_F\;\seq{\fst}\;z' \sim \const{map}_F\;\seq{\embedding}\;x \land \const{map}_F\;\seq{\snd}\;z' \sim \const{map}_F\;\seq{\embedding}\;y.
  \end{equation}
  Specifically, we show how to convert the witnesses $z$ and $z'$.

  From~\eqref{eq:relQ:unfolded} to~\eqref{eq:relQ:alt:unfolded}:
  Let the function $f_i$ send the pair $(a,b)$ to $(\embedding\;a,\, \embedding\;b)$ if $(a,b) \in R_i$ and otherwise to $(\circledast,\circledast)$.
  We choose $z' = \const{map}_F\;\seq{f}\;z$ and prove that it is a witness for~\eqref{eq:relQ:alt:unfolded}.
  By \theorem{set\_map} we have $\const{set}_{F,i}\;z' = f_i\langle \const{set}_{F,i}\;z\rangle$.
  The image of any set under $f_i$ is clearly included in $\const{rel}_{\unit+{}}\;R_i$, hence $\const{set}_{F,i}\;z' \subseteq \const{rel}_{\unit+{}}\;R_i$ for all $i$.
  Next, we calculate
  \begin{equation*}
    \begin{tabular}{@{}l@{}l@{\qquad}l@{}}
      $\const{map}_F\;\seq{\fst}\;z'$ & ${} = \const{map}_F\;\seq{(\fst \circ f)}\;z$ & by \theorem{map\_comp} \\[0.7\jot]
      & ${} \sim \const{map}_F\;\seq{(\embedding \circ \fst)}\;z$ & by Lemma~\ref{lemma:set:Q:map:cong} and $\forall i.\;\const{set}_{Q,i}\;[z]_\sim \subseteq R_i$ \\[0.7\jot]
      & ${} = \const{map}_F\;\seq{\embedding}\;(\const{map}_F\;\seq{\fst}\;z)$ & by \theorem{map\_comp} \\[0.7\jot]
      & ${} \sim \const{map}_F\;\seq{\embedding}\;x$ & by~\eqref{eq:map:respect} and $\const{map}_F\;\seq{\fst}\;z = x$.
    \end{tabular}
  \end{equation*}
  The third conjunct $\const{map}_F\;\seq{\snd}\;z' \sim \const{map}_F\;\seq{\embedding}\;y$ is derived similarly.

  From~\eqref{eq:relQ:alt:unfolded} to~\eqref{eq:relQ:unfolded}:
  Note that $\const{set}_{Q,i}\;[\const{map}_F\;\seq{\fst}\;z']_\sim =  \const{set}_{Q,i}\;[\const{map}_F\;\seq{\embedding}\;x]_\sim$ as $\const{map}_F\;\seq{\fst}\;z' \sim \const{map}_F\;\seq{\embedding}\;x$ are in the same equivalence class.
  Using Lemma~\ref{lemma:set:Q:natural} twice it follows that $\fst\langle \const{set}_{Q,i}\;[z']_\sim\rangle = \embedding\langle \const{set}_{Q,i}\;[x]_\sim\rangle$.
  A similar observation can be made about the second projection and $y$.
  Therefore, $\const{set}_{Q,i}\;[z']_\sim$ consists only of pairs of the form $(\embedding\;a, \embedding\;b)$.
  Accordingly, we use a function $g$ that maps $\embedding\;a$ to $a$ and $\circledast$ to some unspecified value.
  Then choose $z = \const{map}_F\;\seq{(g \times g)}\;z'$ as the witness for~\eqref{eq:relQ:unfolded},
  where $g\times g$ denotes the componentwise application of $g$ to pairs.
  We have
  \begin{equation*}
    \begin{tabular}{@{}l@{}l@{\qquad}l@{}}
      $\const{set}_{Q,i}\;[z]_\sim$ & ${} = (g \times g)\langle \const{set}_{Q,i}\;[z']_\sim\rangle$ & by Lemma~\ref{lemma:set:Q:natural} \\[0.7\jot]
      & ${} = (\embedding \times \embedding)^{-1}\langle\const{set}_{Q,i}\;[z']_\sim\rangle$ & by the above observations \\[0.7\jot]
      & ${} \subseteq (\embedding \times \embedding)^{-1}\langle\const{set}_{F,i}\;z'\rangle$ & by Theorem~\ref{thm:set:characterization} \\[0.7\jot]
      & ${} \subseteq (\embedding \times \embedding)^{-1}\langle\const{rel}_{\unit+{}}\;R_i\rangle$ & by assumption and monotonicity of preimage \\[0.7\jot]
      & ${} = R_i$. &
    \end{tabular}
  \end{equation*}
  Moreover, $\const{map}_F\;\seq{\fst}\;z = \const{map}_F\;\seq{(g \circ \fst)}\;z' \sim \const{map}_F\;\seq{(g \circ \embedding)}\;x = x$
  by \theorem{map\_comp},~\eqref{eq:map:respect} applied to $\const{map}_F\;\seq{\fst}\;z' \sim \const{map}_F\;\seq{\embedding}\;x$, and \theorem{map\_id};
  analogously for $\const{map}_F\;\seq{\snd}\;z \sim y$.
\end{proof}

\begin{exa}[Example~\ref{ex:Inl:Inl:cont3} continued]\label{ex:Inl:Inl:cont4}
  For arbitrary $x$, $y$, and $R$, we have
  \begin{equation*}
    \begin{array}{@{}r@{}c@{}l@{}}
      && ([\const{Inl}\;x]_{\sim_P}, [\const{Inl}\;y]_{\sim_P}) \in \const{rel}_Q\;R \\
      &{} \Iff {}& (\const{map}_{F_P}\;\embedding\;(\const{Inl}\;x), \const{map}_{F_P}\;\embedding\;(\const{Inl}\;y)) \in (\optparen{\sim_P} \relcomp \const{rel}_{F_P}\;(\const{rel}_{\unit+{}}\;R) \relcomp \optparen{\sim_P}) \\
      &{} \Iff {}& (\exists x'\;y'.\;\const{Inl}\;(\embedding\;x) \sim_P x' \land (x',y') \in \const{rel}_{F_P}\;(\const{rel}_{\unit+{}}\;R) \land y' \sim_P \const{Inl}\;(\embedding\;y)) \\
      &{} \Iff {}& (\exists x''\;y''.\;(\const{Inl}\;x'',\const{Inl}\;y'') \in \const{rel}_{F_P}\;(\const{rel}_{\unit+{}}\;R)) \\
      &{} \Iff {}& (\exists x''\;y''.\;(x'',y'') \in \const{rel}_{\unit+{}}\;R),
    \end{array}
  \end{equation*}
  which is always true since $(\circledast,\circledast) \in \const{rel}_{\unit+{}}\;R$.
  On the other hand,
  \begin{equation*}
    \begin{array}{@{}r@{}c@{}l@{}}
      && ([\const{Inr}\;x]_{\sim_P}, [\const{Inr}\;y]_{\sim_P}) \in \const{rel}_Q\;R \\
      &{} \Iff {}& (\const{map}_{F_P}\;\embedding\;(\const{Inr}\;x), \const{map}_{F_P}\;\embedding\;(\const{Inr}\;y)) \in (\optparen{\sim_P} \relcomp \const{rel}_{F_P}\;(\const{rel}_{\unit+{}}\;R) \relcomp \optparen{\sim_P}) \\
      &{} \Iff {}& (\exists x'\;y'.\;\const{Inr}\;(\embedding\;x) \sim_P x' \land (x',y') \in \const{rel}_{F_P}\;(\const{rel}_{\unit+{}}\;R) \land y' \sim_P \const{Inr}\;(\embedding\;y)) \\
      &{} \Iff {}& (\const{Inr}\;(\embedding\;x),\const{Inr}\;(\embedding\;y)) \in \const{rel}_{F_P}\;(\const{rel}_{\unit+{}}\;R) \\
      &{} \Iff {}& (\embedding\;x,\embedding\;y) \in \const{rel}_{\unit+{}}\;R \\
      &{} \Iff {}& (x,y) \in R.
    \end{array}
  \end{equation*}
  It is easy to see that $\const{rel}_Q\;R$ does not relate $[\const{Inl}\;x]_{\sim_P}$ and $[\const{Inr}\;y]_{\sim_P}$ or vice versa.
  Therefore, $\const{rel}_Q$ behaves exactly like the relator of $\unit + \alpha$, as expected.
  \exampleend
\end{exa}

Moreover, the following condition on $\optparen{\sim}$ characterizes when $\const{rel}_Q$ satisfies \theorem{rel\_comp}.
Again, the non-emptiness assumptions for $R_i \relcomp S_i$ come from $\const{rel}_{\unit+{}}$ extending any relation $R$ with the pair $(\circledast, \circledast)$.
\begin{equation}
  \label{eq:weak:pullback}
  (\forall i.\; R_i \relcomp S_i \neq \{\})
  \implies
  \const{rel}_F\;\seq{R} \relcomp \optparen{\sim} \relcomp \const{rel}_F\;\seq{S} \subseteq
  \optparen{\sim} \relcomp \const{rel}_F\;\seq{(R \relcomp S)} \relcomp \optparen{\sim}
\end{equation}

\looseness=-1
It turns out that this condition implies the respectfulness of the mapper~\eqref{eq:map:respect}.
Intuitively, the relator is a generalization of the mapper.
Furthermore, it is well known that subdistributivity implies preimage preservation~\cite{GummSchroeder2005AU}.
Since our conditions on $\optparen{\sim}$ characterize these preservation properties, it is no surprise that the latter implication carries over.

\begin{lem}
  For an equivalence relation $\sim$,
  condition~\eqref{eq:weak:pullback} implies respectfulness~\eqref{eq:map:respect} and preimage preservation~\eqref{eq:preimage:preservation:alt}.
\end{lem}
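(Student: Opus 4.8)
The plan is to exploit the two hints preceding the lemma: the mapper is the relator applied to graphs of functions, and \eqref{eq:weak:pullback} is a subdistributivity law. Write $\const{Gr}(f_i) = \{(a,\,f_i\;a)\}$ for the graph of $f_i$, regarded as a relation. I would rely on a few standard BNF facts that follow from \theorem{in\_rel}, \theorem{map\_id}, and \theorem{set\_map}: the relator is monotone and commutes with converse, $\const{rel}_F\;\seq{(R^{-1})} = (\const{rel}_F\;\seq{R})^{-1}$, and $\const{rel}_F\;\seq{(=)} = (=)$; moreover $(x,\,\const{map}_F\;\seq{f}\;x) \in \const{rel}_F\;\seq{(\const{Gr}(f))}$ for every $x$ (witnessed via \theorem{in\_rel} by $\const{map}_F\;\seq{(\lambda a.\,(a,\,f_i\;a))}\;x$), and conversely $(x,\,y) \in \const{rel}_F\;\seq{(\const{Gr}(f))}$ forces $y = \const{map}_F\;\seq{f}\;x$. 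Finally, since $\optparen{\sim}$ is an equivalence relation, $\optparen{\sim} \relcomp \optparen{\sim} = \optparen{\sim}$.

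For respectfulness \eqref{eq:map:respect}, assume $x \sim y$ and instantiate \eqref{eq:weak:pullback} with $R_i = \const{Gr}(f_i)^{-1}$ and $S_i = \const{Gr}(f_i)$. Then $R_i \relcomp S_i = \{(f_i\;a,\,f_i\;a) \mid a\}$ is the diagonal on the range of $f_i$, which is non-empty as $\alpha_i$ is inhabited, so the side condition holds. Using the graph facts and $x \sim y$, the pair $(\const{map}_F\;\seq{f}\;x,\,\const{map}_F\;\seq{f}\;y)$ lies in $\const{rel}_F\;\seq{R} \relcomp \optparen{\sim} \relcomp \const{rel}_F\;\seq{S}$, hence by \eqref{eq:weak:pullback} in $\optparen{\sim} \relcomp \const{rel}_F\;\seq{(R \relcomp S)} \relcomp \optparen{\sim}$. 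Since $R_i \relcomp S_i \subseteq (=)$, monotonicity together with $\const{rel}_F\;\seq{(=)} = (=)$ gives $\const{rel}_F\;\seq{(R \relcomp S)} \subseteq (=)$, so the whole composition collapses to $\optparen{\sim} \relcomp \optparen{\sim} = \optparen{\sim}$, yielding $\const{map}_F\;\seq{f}\;x \sim \const{map}_F\;\seq{f}\;y$.

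For preimage preservation \eqref{eq:preimage:preservation:alt}, assume $f_i^{-1}\langle A_i\rangle \neq \{\}$ for all $i$, and take $x$ with $\const{map}_F\;\seq{f}\;x \sim w$ for some $w \in \internalize{F}\;\seq{A}$. I would instantiate \eqref{eq:weak:pullback} with $R_i = \const{Gr}(f_i)$ and $S_i$ the restricted diagonal $\{(a,\,a) \mid a \in A_i\}$. Unfolding \theorem{in\_rel} shows $\const{rel}_F\;\seq{S} = \{(u,\,u) \mid u \in \internalize{F}\;\seq{A}\}$, so $(w,\,w) \in \const{rel}_F\;\seq{S}$; combined with $(x,\,\const{map}_F\;\seq{f}\;x) \in \const{rel}_F\;\seq{R}$ and $\const{map}_F\;\seq{f}\;x \sim w$ this places $(x,\,w)$ in $\const{rel}_F\;\seq{R} \relcomp \optparen{\sim} \relcomp \const{rel}_F\;\seq{S}$. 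Here $R_i \relcomp S_i = \{(a,\,f_i\;a) \mid a \in f_i^{-1}\langle A_i\rangle\}$ is non-empty iff $f_i^{-1}\langle A_i\rangle \neq \{\}$, so the side condition matches the hypothesis exactly. Thus \eqref{eq:weak:pullback} supplies $x'$, $w'$ with $x \sim x'$, $(x',\,w') \in \const{rel}_F\;\seq{(R \relcomp S)}$, and $w' \sim w$ (only $x \sim x'$ is needed below). Unfolding \theorem{in\_rel} for $R \relcomp S$ and applying \theorem{set\_map} and \theorem{map\_comp} to the witness shows simultaneously that $w' = \const{map}_F\;\seq{f}\;x'$ and $\const{set}_{F,i}\;w' \subseteq A_i$, i.e. $w' \in \internalize{F}\;\seq{A}$; hence $x' \in (\const{map}_F\;\seq{f})^{-1}\langle\internalize{F}\;\seq{A}\rangle$, and $x \sim x'$ puts $x$ in $\bigcup[(\const{map}_F\;\seq{f})^{-1}\langle\internalize{F}\;\seq{A}\rangle]_\sim$.

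I expect the main obstacle to be the last step of the preimage part: extracting from the single membership $(x',\,w') \in \const{rel}_F\;\seq{(R \relcomp S)}$ both the functional equation $w' = \const{map}_F\;\seq{f}\;x'$ and the containment $\const{set}_{F,i}\;w' \subseteq A_i$. Both hold because every atom of the \theorem{in\_rel}-witness in component $i$ has the shape $(a,\,f_i\;a)$ with $a \in f_i^{-1}\langle A_i\rangle$, but making this precise requires care with naturality and the composition law. By comparison, the collapse $R_i \relcomp S_i \subseteq (=)$ in the respectfulness part, and the resulting $\optparen{\sim} \relcomp \optparen{\sim} = \optparen{\sim}$, are routine.
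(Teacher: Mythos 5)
Your proof is correct, and its first half is literally the paper's argument: the paper also instantiates~\eqref{eq:weak:pullback} with $R_i = \{(f_i\;a,\,a) \mid \const{True}\}$ and $S_i = \{(a,\,f_i\;a) \mid \const{True}\}$, observes that $R_i \relcomp S_i$ is equality restricted to $f_i$'s image, and collapses $\const{rel}_F\;\seq{(R \relcomp S)}$ into equality so that the outer $\optparen{\sim} \relcomp \optparen{\sim}$ absorbs everything. For preimage preservation you take a mirror-image route. The paper applies~\eqref{eq:weak:pullback} to the pair $(y,\,x)$ with $R_i = A_i \times A_i$ and $S_i = \{(f_i\;a,\,a) \mid \const{True}\}$, then uses the \emph{converse} inclusion of subdistributivity, $\const{rel}_F\;\seq{(R \relcomp S)} \subseteq \const{rel}_F\;\seq{R} \relcomp \const{rel}_F\;\seq{S}$ (which holds in any BNF), to split the middle membership and read off the two needed facts---$w \in \internalize{F}\;\seq{A}$ from the $\const{rel}_F\;\seq{R}$ part and the functional equation from the $\const{rel}_F\;\seq{S}$ part---from two separate relator memberships. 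You instead apply it to $(x,\,w)$ with $R_i$ the graph of $f_i$ and $S_i$ the diagonal restricted to $A_i$, so that $R_i \relcomp S_i$ becomes the graph of $f_i$ restricted to $f_i^{-1}\langle A_i\rangle$, and you must extract both facts from the single \theorem{in\_rel} witness---exactly the step you flag as delicate, and it does go through: every atom of the witness has the form $(a,\,f_i\;a)$ with $f_i\;a \in A_i$, so $\snd = f_i \circ \fst$ on its atoms, giving $w' = \const{map}_F\;\seq{f}\;x'$ by \theorem{map\_comp} and \theorem{map\_cong}, and $\const{set}_{F,i}\;w' \subseteq A_i$ by \theorem{set\_map}. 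The trade-off is minor and even: your instantiation makes the non-emptiness side condition $R_i \relcomp S_i \neq \{\}$ coincide verbatim with the hypothesis $f_i^{-1}\langle A_i\rangle \neq \{\}$ and avoids the appeal to converse subdistributivity, whereas the paper's decomposition avoids the one-witness extraction at the cost of that (easy) splitting step; incidentally, your phrasing of the functional equation is cleaner than the paper's, which states it with the roles of the two values swapped.
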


\begin{proof}
  To show~\eqref{eq:map:respect}, fix $x$ and $y$ such that $x \sim y$.
  Choose the relations $R_i = \{(f_i\;a, a) \mid \const{True}\}$ and $S_i = \{(a, f_i\;a) \mid \const{True}\}$.
  Then $R_i \relcomp S_i \neq \{\}$ because types in HOL are non-empty and so is $f_i$'s image.
  We have $(\const{map}_F\;\seq{f}\;x, x) \in \const{rel}_F\;\seq{R}$ and $(y, \const{map}_F\;\seq{f}\;y) \in \const{rel}_F\;\seq{S}$ by well-known BNF properties.
  Therefore, $(\const{map}_F\;\seq{f}\;x, \const{map}_F\;\seq{f}\;y) \in (\const{rel}_F\;\seq{R} \relcomp \optparen{\sim} \relcomp \const{rel}_F\;\seq{S})$.
  Using~\eqref{eq:weak:pullback}, there exist $z$ and $z'$ such that $\const{map}_F\;\seq{f}\;x \sim z$, $(z,z') \in \const{rel}_F\;\seq{(R \relcomp S)}$, and $z' \sim \const{map}_F\;\seq{f}\;y$.
  Note that $R_i \relcomp S_i$ is equality restricted to $f_i$'s image.
  This implies $z = z'$, again by the BNF properties.
  Thus $\const{map}_F\;\seq{f}\;x \sim \const{map}_F\;\seq{f}\;y$.

  It remains to prove~\eqref{eq:preimage:preservation:alt}.
  Let $x$ and $y$ be such that $\const{map}_F\;\seq{f}\;x \sim y$ and $y \in \internalize{F}\;\seq{A}$.
  Choose $R_i = A_i \times A_i$ and $S_i = \{ (f_i\; a, a) \mid \const{True} \}$.
  Then $R_i \relcomp S_i \neq \{\}$ as $f_i^{-1} \langle A_i \rangle \neq \{\}$ by assumption in~\eqref{eq:preimage:preservation:alt}.
  Moreover, $(y, y) \in \const{rel}_F\;\seq{R}$ as $\const{map}_F\;\seq{(\lambda a.\; (a, a))}\;y \in \internalize{F}\;\seq{R}$.
  Further, $(\const{map}_F\;\seq{f}\;x, x) \in \const{rel}_F\;\seq{S}$ as $\const{map}_F\;\seq{(\lambda a.\; (f\;a, a))}\;x \in \internalize{F}\;\seq{S}$.
  Therefore, $(y, x) \in \optparen{\sim} \relcomp \const{rel}_F\;\seq{(R \relcomp S)} \relcomp \optparen{\sim}$ by~\eqref{eq:weak:pullback}.
  So there are $u$ and $v$ such that $y \sim u$, $(u, v) \in \const{rel}_F\;\seq{(R \relcomp S)}$, and $v \sim x$.
  By the BNF properties, there is a $w$ such that $(u, w) \in \const{rel}_F\;\seq{R}$ and $(w, v) \in \const{rel}_F\;\seq{S}$.
  Then, $w \in \internalize{F}\;\seq{A}$ and $v = \const{map}_F\;\seq{f}\;w$.
  So $v \in (\const{map}_F\;\seq{f})^{-1}\langle \internalize{F}\;\seq{A} \rangle$ and $x \sim v$.
\end{proof}

In summary, we obtain the following main preservation theorem:
\begin{thm}%
  \label{thm:bnf:quotient}
  The quotient $\seq{\alpha}\;Q = \seq{\alpha}\;F / \quotparen{\sim}$ inherits the structure from the BNF $\seq{\alpha}\;F$ with the mapper $\const{map}_Q\;\seq{f}\;[x]_\sim = [\const{map}_F\;\seq{f}\;x]_\sim$
 if $\optparen{\sim}$ satisfies the conditions~\eqref{eq:equiv:sim},~\eqref{eq:wide:intersections}, and~\eqref{eq:weak:pullback}.
 The setters and relator are given by Theorems~\ref{thm:set:characterization} and~\ref{thm:rel:characterization}, respectively.
\end{thm}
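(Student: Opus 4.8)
This theorem is the capstone of the section, and the plan is to verify each BNF axiom for $\seq{\alpha}\;Q$ by assembling the characterizations and lemmas already established, with only subdistributivity requiring genuinely new work. First I would record the consequences of the three hypotheses. Condition~\eqref{eq:equiv:sim} makes $\optparen{\sim}$ an equivalence relation, so in particular $\optparen{\sim} \relcomp \optparen{\sim} = \optparen{\sim}$. By the lemma immediately preceding the theorem, condition~\eqref{eq:weak:pullback} entails both respectfulness~\eqref{eq:map:respect} and preimage preservation~\eqref{eq:preimage:preservation:alt}; together with the hypothesis~\eqref{eq:wide:intersections} this supplies every side condition demanded by Lemmas~\ref{lemma:set:Q:natural} and~\ref{lemma:set:Q:map:cong} and by Theorem~\ref{thm:rel:characterization}. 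Well-definedness of $\const{map}_Q$ is then immediate from~\eqref{eq:map:respect}, and $\const{set}_{Q,i}$ and $\const{rel}_Q$ are the operations defined by~\eqref{eq:set:in} and \theorem{in\_rel} on $Q$, whose relationship to $F$'s operations is given by Theorems~\ref{thm:set:characterization} and~\ref{thm:rel:characterization}.

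Next I would discharge the routine axioms. \theorem{map\_id} and \theorem{map\_comp} follow by computing on a representative $x$ and invoking $F$'s corresponding laws, e.g.\ $\const{map}_Q\;\seq{\const{id}}\;[x]_\sim = [\const{map}_F\;\seq{\const{id}}\;x]_\sim = [x]_\sim$. \theorem{set\_map} is exactly Lemma~\ref{lemma:set:Q:natural}. \theorem{map\_cong} is Lemma~\ref{lemma:set:Q:map:cong} transported to the quotient: if $\seq{f}$ and $\seq{g}$ agree on $\const{set}_{Q,i}\;[x]_\sim$, the lemma yields $\const{map}_F\;\seq{f}\;x \sim \const{map}_F\;\seq{g}\;x$, i.e.\ the two images lie in one class, so $\const{map}_Q\;\seq{f}\;[x]_\sim = \const{map}_Q\;\seq{g}\;[x]_\sim$. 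For the cardinality axioms I would take $\const{bd}_Q = \const{bd}_F$; then \theorem{bd} is inherited, and \theorem{set\_bd} follows from Theorem~\ref{thm:set:characterization}, which, instantiating the intersection at the representative $\const{map}_F\;\seq{\embedding}\;x$, exhibits $\const{set}_{Q,i}\;[x]_\sim$ as a subset of $\{a \mid \embedding\;a \in \const{set}_{F,i}\;(\const{map}_F\;\seq{\embedding}\;x)\}$; since $\embedding$ is injective, this set has cardinality at most $|\const{set}_{F,i}\;(\const{map}_F\;\seq{\embedding}\;x)| \leq_o \const{bd}_F$ by $F$'s \theorem{set\_bd}. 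Finally, \theorem{in\_rel} holds by the very definition of $\const{rel}_Q$.

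The one remaining and main obstacle is subdistributivity \theorem{rel\_comp}, which I would prove using Theorem~\ref{thm:rel:characterization}. Suppose $([x]_\sim,[z]_\sim) \in \const{rel}_Q\;\seq{R}$ and $([z]_\sim,[y]_\sim) \in \const{rel}_Q\;\seq{S}$; abbreviating $\embedding$-images by $X = \const{map}_F\;\seq{\embedding}\;x$ and similarly $Y, Z$, the characterization gives $(X,Z) \in \optparen{\sim} \relcomp \const{rel}_F\;\seq{(\const{rel}_{\unit+{}}\;R)} \relcomp \optparen{\sim}$ and $(Z,Y) \in \optparen{\sim} \relcomp \const{rel}_F\;\seq{(\const{rel}_{\unit+{}}\;S)} \relcomp \optparen{\sim}$. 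Composing the two chains and collapsing the two adjacent copies of $\optparen{\sim}$ via $\optparen{\sim}\relcomp\optparen{\sim} = \optparen{\sim}$ places $(X,Y)$ in $\optparen{\sim} \relcomp \const{rel}_F\;\seq{(\const{rel}_{\unit+{}}\;R)} \relcomp \optparen{\sim} \relcomp \const{rel}_F\;\seq{(\const{rel}_{\unit+{}}\;S)} \relcomp \optparen{\sim}$. Now I would apply~\eqref{eq:weak:pullback} to the inner block $\const{rel}_F\;\seq{(\const{rel}_{\unit+{}}\;R)} \relcomp \optparen{\sim} \relcomp \const{rel}_F\;\seq{(\const{rel}_{\unit+{}}\;S)}$; its non-emptiness hypothesis is free because $(\circledast,\circledast) \in \const{rel}_{\unit+{}}\;R_i \relcomp \const{rel}_{\unit+{}}\;S_i$ for every $i$. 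Using the sum-relator identity $\const{rel}_{\unit+{}}\;R_i \relcomp \const{rel}_{\unit+{}}\;S_i = \const{rel}_{\unit+{}}\;(R_i \relcomp S_i)$ and collapsing the outer copies of $\optparen{\sim}$ once more yields $(X,Y) \in \optparen{\sim} \relcomp \const{rel}_F\;\seq{(\const{rel}_{\unit+{}}\;(R \relcomp S))} \relcomp \optparen{\sim}$, which by Theorem~\ref{thm:rel:characterization} is exactly $([x]_\sim,[y]_\sim) \in \const{rel}_Q\;\seq{(R \relcomp S)}$. I expect the only delicate points to be the two uses of transitivity that merge the sandwich of $\optparen{\sim}$'s and the verification of the $\const{rel}_{\unit+{}}$ composition identity; the $\circledast$-padding is precisely what makes the side condition of~\eqref{eq:weak:pullback} hold vacuously, so that everything beyond subdistributivity is mere bookkeeping over the already-established characterizations.
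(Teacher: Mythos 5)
Your proposal is correct and takes essentially the same approach as the paper, which states this theorem as a summary of the preceding development: you assemble exactly the intended ingredients --- the lemma showing~\eqref{eq:weak:pullback} implies~\eqref{eq:map:respect} and~\eqref{eq:preimage:preservation:alt}, Lemmas~\ref{lemma:set:Q:natural} and~\ref{lemma:set:Q:map:cong} for \theorem{set\_map} and \theorem{map\_cong}, and Theorems~\ref{thm:set:characterization} and~\ref{thm:rel:characterization} with $\const{bd}_Q = \const{bd}_F$. Your explicit \theorem{rel\_comp} derivation, including the identity $\const{rel}_{\unit+{}}\;R_i \relcomp \const{rel}_{\unit+{}}\;S_i = \const{rel}_{\unit+{}}\;(R_i \relcomp S_i)$ and the observation that $(\circledast,\circledast)$ discharges the non-emptiness hypothesis of~\eqref{eq:weak:pullback}, is sound and correctly fills in the one step the paper leaves to its formalization.
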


\begin{exa}%
    \label{ex:tllist}
    A terminated coinductive list $(\alpha,\beta)\;\tc{tllist}$
    is either a finite list of $\alpha$ values terminated by a single $\beta$ value, or an infinite list of $\alpha$ values.
    This type can be seen as a quotient of pairs $\alpha\;\tc{llist} \times \beta$,
    where the first component stores the possibly infinite list given by a codatatype $\tc{llist}$ and the second component stores the terminator.
    The equivalence relation identifies all pairs with the same infinite list in the first component, effectively removing the terminator from infinite lists.\footnote{%
      \looseness=-1
      Clearly, $\tc{tllist}$ could be defined directly as a codatatype.
      When Isabelle had no codatatype command, one of the authors formalized $\tc{tllist}$ via this quotient~\cite[version for Isabelle2013]{Coinductive2013}.
    }
    Let $(\var{xs},b) \sim_{\tc{tllist}} (\var{ys},c)$ iff $\var{xs} = \var{ys}$ and, if $\var{xs}$ is finite, $b = c$.
    Like $\optparen{\sim_P}$ from Example~\ref{ex:Inl:Inl}, $\optparen{\sim_{\tc{tllist}}}$ does not satisfy the naive condition~\eqref{eq:set:respect:naive}.
    \begin{quotex}
        $\keyw{codatatype }\alpha\;\tc{llist} = \const{LNil} \mid \const{LCons}\;\alpha\;(\alpha\;\tc{llist})$
        \\
        $\keyw{quotient\_type }(\alpha,\beta)\;\tc{tllist} = (\alpha\;\tc{llist} \times \beta) / \quotparen{\sim_{\tc{tllist}}}$
    \end{quotex}

    Our goal is the construction of (co)datatypes with recursion through quotients such as $(\alpha,\beta)\;\tc{tllist}$.
    As a realistic example, consider an inductive model of a finite interactive system that produces a possibly unbounded sequence of outputs $\tc{out}$ for every input $\tc{in}$:
    \begin{quotex}
        $\keyw{datatype }\tc{system} = \const{Step}\;(\tc{in} \fun (\tc{out}, \tc{system})\;\tc{tllist})$
    \end{quotex}
    This datatype definition is only possible if $\tc{tllist}$ is a BNF in $\beta$.
    Previously, this had to be shown by manually defining the mapper and setters and proving the BNF properties.
    Theorem~\ref{thm:bnf:quotient} identifies the conditions under which $\tc{tllist}$ inherits the BNF structure of its underlying type, and it allows us to automate these definitions and proofs.
    For $\tc{tlllist}$, the conditions can be discharged easily using automatic proof methods and a simple lemma about $\tc{llist}$'s relator that states that related lists are either both finite or infinite.
    \exampleend


\end{exa}

\subsection{Subdistributivity via confluent relations}%
\label{section:confluence}

Among the BNF properties, subdistributivity (\theorem{rel\_comp}) is typically the hardest to show.
For example, distinct lists (type $\alpha\;\tc{dlist}$) have been shown to be a BNF\@.
The manual proof requires 126 lines.
Of these, the subdistributivity proof takes about 100 lines.
Yet, with the theory developed so far, essentially the same argument is needed for the subdistributivity condition~\eqref{eq:weak:pullback}.
We now present a sufficient criterion for subdistributivity that simplifies such proofs.
For $\tc{dlist}$, this shortens the subdistributivity proof to 58~lines.
With our \keyw{lift\_bnf} command (Section~\ref{section:implementation}), the whole proof is now 64~lines, half of the manual proof.

Equivalence relations are often (or can be) expressed as the equivalence closure of a rewrite relation $\optparen{\rightsquigarrow}$.
For example, the subdistributivity proof for distinct lists views $\alpha\;\tc{dlist}$ as the quotient $\alpha\;\tc{list} / \quotparen{\sim_{\tc{dlist}}}$ with $\mathit{xs} \sim_{\tc{dlist}} \mathit{ys}$ iff $\const{remdups}\;\mathit{xs} = \const{remdups}\;\mathit{ys}$, where $\const{remdups}\;\mathit{xs}$ keeps only the last occurrence of every element in $\mathit{xs}$.
So, $\optparen{\sim_{\tc{dlist}}}$ is the equivalence closure of the following relation $\optparen{\rightsquigarrow_{\tc{dlist}}}$, where $\optparen{\append}$ concatenates two lists:
\begin{equation*}
  \label{eq:rewrite:dlist}
  \mathit{xs} \append [x] \append \mathit{ys} \rightsquigarrow_{\tc{dlist}} \mathit{xs} \append \mathit{ys}
  \text{ if }
  x \in \const{set}\;\mathit{ys}
\end{equation*}
We use the following notation:
$\optparen{\leftsquigarrow}$ denotes the reverse relation, i.e., $x \leftsquigarrow y$ iff $y \rightsquigarrow x$.
Further, 
$\optparen{\rtranclp{\rightsquigarrow}}$ denotes the reflexive and transitive closure, and $\optparen{\rtranclp{\leftrightsquigarrow}}$ the equivalence closure.
A relation $\optparen{\rightsquigarrow}$ is confluent iff whenever $x \rtranclp{\rightsquigarrow} y$ and $x \rtranclp{\rightsquigarrow} z$, then there exists a $u$ such that $y \rtranclp{\rightsquigarrow} u$ and $z \rtranclp{\rightsquigarrow} u$---or, equivalently in pointfree style, if $(\rtranclp{\leftsquigarrow} \relcomp \rtranclp{\rightsquigarrow}) \subseteq (\rtranclp{\rightsquigarrow} \relcomp \rtranclp{\leftsquigarrow})$.

\begin{thm}[Subdistributivity via confluent relations]%
  \label{thm:confluent:quotient}
  Let an equivalence relation $\optparen{\sim}$ satisfy~\eqref{eq:map:respect}.
  Then, it also satisfies~\eqref{eq:weak:pullback} if there is a confluent relation $\optparen{\rightsquigarrow}$ with the following properties:
  \begin{enumerate}[label=(\roman*)]
  \item\label{item:eq:in:rtcs}
    The equivalence relation is contained in $\optparen{\rightsquigarrow}$'s equivalence closure:
    $(\sim) \subseteq (\rtranclp{\leftrightsquigarrow})$.\footnote{%
      The other inclusion $(\rtranclp{\leftrightsquigarrow}) \subseteq (\sim)$ follows from the second condition with~\eqref{eq:map:respect}:
      For $x \rightsquigarrow y$, let $x' = \const{map}_F\;\seq{(\lambda a.\;(a, a))}\;x$.
      As $x = \const{map}_F\;\seq{\const{fst}}\;x'$, there exists a $y'$ with $x' \sim y'$ and $y = \const{map}_F\;\seq{\const{fst}}\;y'$.
      Applying~\eqref{eq:map:respect} to $x' \sim y'$, we have $x = \const{map}_F\;\seq{\const{fst}}\;x' \sim \const{map}_F\;\seq{\const{fst}}\;y' = y$.
      So $(\rightsquigarrow) \subseteq (\sim)$ and therefore $(\rtranclp{\leftrightsquigarrow}) \subseteq (\sim)$.
    }
  \item\label{item:projection}
    The relation factors through projections:
    if $\const{map}_F\;\seq{\fst}\;x \rightsquigarrow y$
    then there exists a $y'$ such that $y = \const{map}_F\;\seq{\fst}\;y'$ and $x \sim y'$ and $\const{set}_{F,i}\;y' \subseteq \const{set}_{F,i}\;x$ for all $i$;
    and similarly for $\snd$.
  \end{enumerate}
\end{thm}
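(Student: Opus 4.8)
The plan is to prove the inclusion in \eqref{eq:weak:pullback} directly, by chasing an arbitrary pair through the composition on the left and applying the underlying BNF's subdistributivity \theorem{rel\_comp} at the very end. Because the goal is an inclusion, I expect the non-emptiness hypothesis $R_i \relcomp S_i \neq \{\}$ to be unnecessary (which is why this criterion is genuinely a sufficient one): I simply take an arbitrary $(x, y) \in \const{rel}_F\;\seq{R} \relcomp \optparen{\sim} \relcomp \const{rel}_F\;\seq{S}$ and construct the required decomposition.

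First I would unfold the hypothesis. By definition of composition there are $a$, $b$ with $(x, a) \in \const{rel}_F\;\seq{R}$, $a \sim b$, and $(b, y) \in \const{rel}_F\;\seq{S}$. Applying \theorem{in\_rel} to both memberships produces witnesses $p$, $q$ with $\const{set}_{F,i}\;p \subseteq R_i$, $\const{map}_F\;\seq{\fst}\;p = x$, $\const{map}_F\;\seq{\snd}\;p = a$, and symmetrically $\const{set}_{F,i}\;q \subseteq S_i$, $\const{map}_F\;\seq{\fst}\;q = b$, $\const{map}_F\;\seq{\snd}\;q = y$. Next I would pass from the abstract equivalence to the rewrite relation: condition~\ref{item:eq:in:rtcs} turns $a \sim b$ into $a \mathrel{\rtranclp{\leftrightsquigarrow}} b$, and since $\optparen{\rightsquigarrow}$ is confluent, the Church--Rosser property yields a common reduct $c$ with $a \mathrel{\rtranclp{\rightsquigarrow}} c$ and $b \mathrel{\rtranclp{\rightsquigarrow}} c$.

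The core of the argument is then to transport the witnesses along these two reductions using condition~\ref{item:projection}. Since $a = \const{map}_F\;\seq{\snd}\;p$ and $a \mathrel{\rtranclp{\rightsquigarrow}} c$, I iterate the $\snd$-variant of~\ref{item:projection} once per rewrite step—threading $\sim$ transitively and the setter inclusions by monotonicity—to obtain $p'$ with $\const{map}_F\;\seq{\snd}\;p' = c$, $p \sim p'$, and $\const{set}_{F,i}\;p' \subseteq \const{set}_{F,i}\;p \subseteq R_i$. Symmetrically, since $b = \const{map}_F\;\seq{\fst}\;q$ and $b \mathrel{\rtranclp{\rightsquigarrow}} c$, the $\fst$-variant yields $q'$ with $\const{map}_F\;\seq{\fst}\;q' = c$, $q \sim q'$, and $\const{set}_{F,i}\;q' \subseteq S_i$. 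It is here that both the $\fst$- and the $\snd$-form of~\ref{item:projection} are essential, and here that the key payoff appears: both transported witnesses now share the \emph{same} middle value $c$.

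Finally I would recombine. Setting $x' = \const{map}_F\;\seq{\fst}\;p'$ and $y' = \const{map}_F\;\seq{\snd}\;q'$, respectfulness~\eqref{eq:map:respect} applied to $p \sim p'$ and $q \sim q'$ gives $x \sim x'$ and $y \sim y'$. By construction $p'$ witnesses $(x', c) \in \const{rel}_F\;\seq{R}$ and $q'$ witnesses $(c, y') \in \const{rel}_F\;\seq{S}$ via \theorem{in\_rel}, so $(x', y') \in \const{rel}_F\;\seq{R} \relcomp \const{rel}_F\;\seq{S}$; subdistributivity \theorem{rel\_comp} of the underlying BNF then gives $(x', y') \in \const{rel}_F\;\seq{(R \relcomp S)}$. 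Together with $x \sim x'$ and $y' \sim y$ (using symmetry of $\optparen{\sim}$ from \eqref{eq:equiv:sim}) this exhibits $(x, y) \in \optparen{\sim} \relcomp \const{rel}_F\;\seq{(R \relcomp S)} \relcomp \optparen{\sim}$, as required. I expect the only delicate step to be the reduction-length induction in the previous paragraph: pinning both middle components to a single common $c$ while simultaneously maintaining the $\sim$-chain and the shrinking setter inclusions requires care, whereas the unfolding and the final recombination are routine given \theorem{in\_rel} and \theorem{rel\_comp}.
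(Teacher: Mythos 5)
Your proof is correct and follows essentially the same route as the paper's: unfold via \theorem{in\_rel}, obtain a common reduct by confluence from condition~\ref{item:eq:in:rtcs}, transport both witnesses to it by induction along $\rtranclp{\rightsquigarrow}$ using condition~\ref{item:projection} (with the setter inclusions keeping the transported witnesses in $\internalize{F}\;\seq{R}$ and $\internalize{F}\;\seq{S}$), and close with~\eqref{eq:map:respect} and the base BNF's \theorem{rel\_comp}. Your side observation that the hypothesis $R_i \relcomp S_i \neq \{\}$ is never needed is also consistent with the paper, whose proof likewise makes no use of it.
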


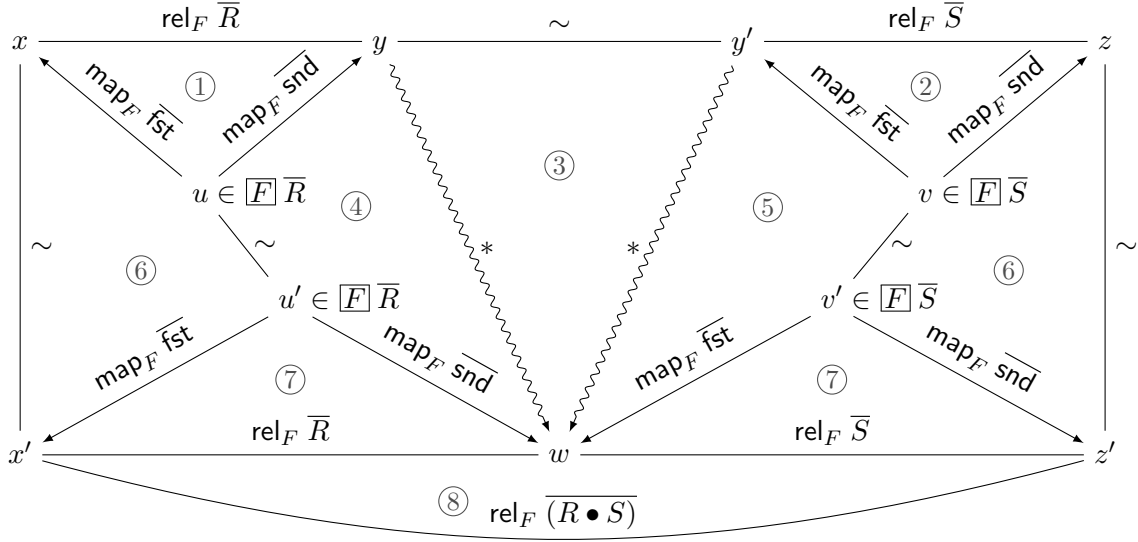
\begin{figure}[t]
    \centering
    \begin{tikzpicture}[node distance=4.3cm, >=latex]
    \begin{scope}[font=\vphantom{Ag}]
    \node (x) { $x$ };
    \node (y) [right=of x] { $y$ };
    \node (y') [right=of y] { $y'$ };
    \node (z) [right=of y'] { $z$ };
    \end{scope}

    \begin{scope}[anchor=south]
    \draw (x) -- (y)
    node (relxy) [midway] { $\const{rel}_F\;\seq{R}$ };
    \draw (y) -- (y')
    node (simyy') [midway] { $\sim$ };
    \draw (y') -- (z)
    node (rely'z) [midway] { $\const{rel}_F\;\seq{S}$ };
    \end{scope}

    \begin{scope}[font=\vphantom{Ag}]
    \node (u) at ([yshift=-2cm] relxy.south) { $u$\rlap{${} \in \internalize{F}\;\seq{R}$} };
    \node (v) at (u -| rely'z) { $v$\rlap{${} \in \internalize{F}\;\seq{S}$} };
    \end{scope}

    \begin{scope}[sloped, anchor=south]
    \draw[->] (u) -- (x) node (mapux) [pos=0.47] { $\const{map}_F\;\seq{\fst}$ };
    \draw[->] (u) -- (y) node (mapuy) [pos=0.47] { $\const{map}_F\;\seq{\snd}$ };
    \draw[->] (v) -- (y') node (mapvy') [pos=0.47] { $\const{map}_F\;\seq{\fst}$ };
    \draw[->] (v) -- (z) node (mapvz) [pos=0.47] { $\const{map}_F\;\seq{\snd}$ };
    \end{scope}

    \begin{scope}[font=\vphantom{Ag}]
    \node (w) at ([yshift=-5.5cm] simyy'.south) { $w$ };
    \node (x') at (w -| x) { $x'$ };
    \node (z') at (w -| z) { $z'$ };
    \end{scope}

    \begin{scope}[anchor=south]
    \draw (x') -- (w)
    node (relx'w) [midway] { $\const{rel}_F\;\seq{R}$ };
    \draw (w) -- (z')
    node (relwz') [midway] { $\const{rel}_F\;\seq{S}$ };
    \end{scope}

    \begin{scope}[font=\vphantom{Ag},text depth=1.5ex]
    \node (u') at ([yshift=2cm] relx'w.south) { $u'$\rlap{${} \in \internalize{F}\;\seq{R}$}};
    \node (v') at (u' -| relwz') { $v'$\rlap{${} \in \internalize{F}\;\seq{S}$}};
    \end{scope}

    \begin{scope}[decoration={snake, amplitude=1.25pt, segment length=5pt, post length=5pt}]
    \draw[->, decorate] (y) -- (w) node [midway, anchor=west] { $\ast$ };
    \draw[->, decorate] (y') -- (w) node [midway, anchor=east] { $\ast$ };
    \end{scope}

    \begin{scope}[sloped, anchor=south]
    \draw[->] (u') -- (x') node (mapu'x) [midway] { $\const{map}_F\;\seq{\fst}$ };
    \draw[->] (u') -- (w) node (mapu'y) [midway] { $\const{map}_F\;\seq{\snd}$ };
    \draw[->] (v') -- (w) node (mapv'y') [midway] { $\const{map}_F\;\seq{\fst}$ };
    \draw[->] (v') -- (z') node (map'vz) [midway] { $\const{map}_F\;\seq{\snd}$ };
    \end{scope}

    \begin{scope}[anchor=west]
    \draw (u') -- (u) node (simuu') [midway] { $\sim$ };
    \draw (v') -- (v) node (simvv') [midway] { $\sim$ };
    \draw (x) -- (x') node (simxx') [midway] { $\sim$ };
    \draw (z) -- (z') node (simzz') [midway, overlay] { $\sim$ };
    \end{scope}

    \draw (x') to[out=-15, in=-165]
    node (relx'z') [midway, anchor=south] { $\const{rel}_F\;\seq{(R \relcomp S)}$ }
    (z');

    \begin{scope}[circlednode]
    \node at ($ (relxy.south) ! 0.3 ! (u) $) {1};
    \node at ($ (rely'z.south) ! 0.3 ! (v) $) {2};
    \node at ($ (simyy'.south) ! 0.3 ! (w) $) {3};
    \node at ($ (mapuy) ! 0.3 ! (w) $) {4};
    \node at ($ (mapvy') ! 0.3 ! (w) $) {5};
    \node at ($ (u') ! 0.6 ! (simxx') $) {6};
    \node at ($ (v') ! 0.6 ! (simzz') $) {6};
    \node at ($ (u') ! 0.6 ! (relx'w) $) {7};
    \node at ($ (v') ! 0.6 ! (relwz') $) {7};
    \node at (relx'z'.west) [xshift=-0.2cm, anchor=south east] {8};
    \end{scope}
    \end{tikzpicture}
    \vspace*{-3ex}
    \caption{Proof diagram for Theorem~\ref{thm:confluent:quotient}}%
    \label{fig:confluent:quotient:proof}
\end{figure}

\begin{proof}
  The proof is illustrated in Fig.~\ref{fig:confluent:quotient:proof}.
  The proof starts at the top with $(x, z) \in (\const{rel}_F\;\seq{R} \relcomp \optparen{\sim} \relcomp \const{rel}_F\;\seq{S})$, i.e., there are $y$ and $y'$ such that $(x, y) \in \const{rel}_F\;\seq{R}$ and $y \sim y'$ and $(y', z) \in \const{rel}_F\;\seq{S}$.
  We show $(x, z) \in (\optparen{\sim} \relcomp \const{rel}_F\;\seq{(R \relcomp S)} \relcomp \optparen{\sim})$ by establishing the path from $x$ to $z$ via $x'$ and $z'$ along the three other borders of the diagram.

  First \circled{1}, by \theorem{in\_rel}, there is a $u \in \internalize{F}\;\seq{R}$ such that $x = \const{map}_F\;\seq{\fst}\;u$ and $y = \const{map}_F\;\seq{\snd}\;u$.
  Similarly, $\const{rel}_F\;\seq{S}\;y'\;z$ yields a $v$ with the corresponding properties \circled{2}.

  Second, by~\ref{item:eq:in:rtcs}, $y \sim y'$ implies $y \rtranclp{\leftrightsquigarrow} y'$.
  Since $\optparen{\rightsquigarrow}$ is confluent, there exists a $w$ such that $y \rtranclp{\rightsquigarrow} w$ and $y' \rtranclp{\rightsquigarrow} w$ \circled{3}.
  By induction on $\optparen{\rtranclp{\rightsquigarrow}}$ using~\ref{item:projection},
  $y \rtranclp{\rightsquigarrow} w$ factors through the projection $y = \const{map}_F\;\seq{\snd}\;u$ and we obtain a $u'$ such that $u \sim u'$ and $w = \const{map}_F\;\seq{\snd}\;u'$ and $\const{set}_{F,i}\;u' \subseteq \const{set}_{F,i}\;u$ for all $i$~\circled{4}.
  Analogously, we obtain $v'$ corresponding to $y'$ and $v$ \circled{5}.
  Set $x' = \const{map}_F\;\seq{\fst}\;u'$ and $z' = \const{map}_F\;\seq{\snd}\;v'$.
  As $\const{map}_F$ preserves $\optparen{\sim}$ by~\eqref{eq:map:respect}, we have $x \sim x'$ and $z \sim z'$ \circled{6}.

  Next, we focus on the two triangles at the bottom \circled{7}.
  By defininition of $\internalize{F}$, $\const{set}_{F,i}\;u' \subseteq \const{set}_{F,i}\;u$ for all $i$ and $u \in \internalize{F}\;\seq{R}$ imply $u' \in \internalize{F}\;\seq{R}$; similarly $v' \in \internalize{F}\;\seq{S}$.
  Now, $u'$ and $v'$ are the witnesses to the existential in
  \theorem{in\_rel} for $x'$ and $w$, and $w$ and $z'$, respectively.
  So $(x', w) \in \const{rel}_F\;\seq{R}$ and $(w, z') \in \const{rel}_F\;\seq{S}$, i.e., $(x', z') \in (\const{rel}_F\;\seq{R} \relcomp \const{rel}_F\;\seq{S})$.
  Finally, as $F$ is a BNF, $(x', z') \in \const{rel}_F\;\seq{(R \relcomp S)}$ follows with subdistributivity \theorem{rel\_comp} \circled{8}.
\end{proof}

\begin{exa}\label{ex:dlist}
  \looseness=-1
  For distinct lists, we have $(\sim_{\tc{dlist}}) = (\rtranclp{\leftrightsquigarrow}_{\tc{dlist}})$
  and $\optparen{\rightsquigarrow_{\tc{dlist}}}$ is confluent.
  However, condition~\ref{item:projection} of Theorem~\ref{thm:confluent:quotient} does not hold.
  For example, for $x = [(1, a), (1, b)]$, we have $\const{map}_{\tc{list}}\;\fst\;x = [1, 1] \rightsquigarrow_{\tc{dlist}} [1]$,
  and yet there is no $y$ such that $x \sim_{\tc{dlist}} y$ and $\const{map}_{\tc{list}}\;\fst\;y = [1]$.
  The problem is that the projection $\const{map}_{\tc{list}}\;\fst$ makes different atoms of $x$ equal and $\optparen{\rightsquigarrow_{\tc{dlist}}}$ \emph{removes} equal atoms, but the removal cannot be mimicked on $x$ itself.
  Fortunately, we can also \emph{add} equal atoms instead of removing them.
  Define $\optparen{\rightsquigarrow'_{\tc{dlist}}}$ by
  \begin{equation*}
     \mathit{xs} \append \mathit{ys} \rightsquigarrow'_{\tc{dlist}} \mathit{xs} \append [x] \append \mathit{ys}
     \text{ if }
     x \in \const{set}\;\mathit{ys}
  \end{equation*}
  Then, $\optparen{\rightsquigarrow'_{\tc{dlist}}}$ is confluent and factors through projections.
  So distinct lists inherit the BNF structure from lists by Theorem~\ref{thm:confluent:quotient} and either Lemma~\ref{lem:wide:intersections} or Lemma~\ref{lemma:wide:intersection:finite}.
  \exampleend
\end{exa}

\begin{exa}
  The free monoid over atoms $\alpha$ consists of all finite lists $\alpha\;\tc{list}$.
  The free idempotent monoid $\alpha\;\tc{fim}$ is then the quotient $\alpha\;\tc{list} / \quotparen{\sim_{\tc{fim}}}$ where $\optparen{\sim_{\tc{fim}}}$ is the equivalence closure of the idempotence law for list concatenation
  \begin{equation*}
    \mathit{xs} \append \mathit{ys} \append \mathit{zs}
    \rightsquigarrow_{\tc{fim}}
    \mathit{xs} \append \mathit{ys} \append \mathit{ys} \append \mathit{zs}
  \end{equation*}
  \looseness=-1
  We have oriented the rule such that it introduces rather than removes the duplication.
  In term rewriting, the rule is typically oriented in the other direction~\cite{Hullot1980SRI}
  such that the resulting rewriting system terminates;
  however, this classical relation $\optparen{\leftsquigarrow_{\tc{fim}}}$ is not confluent:
  \emph{ababcbabc} has two normal forms $\emph{a\underline{babcbabc}} \leftsquigarrow_{\tc{fim}} \emph{\underline{abab}c} \leftsquigarrow_{\tc{fim}} \emph{abc}$ and $\emph{\underline{abab}cbabc} \leftsquigarrow_{\tc{fim}} \emph{abcbabc}$ (redexes are underlined).
  In contrast, our orientation yields a confluent relation $\optparen{\rightsquigarrow_{\tc{fim}}}$,
  although the formal proof requires some effort.
  The relation also factors through projections.
  So by Theorem~\ref{thm:confluent:quotient} and either Lemma~\ref{lem:wide:intersections} or Lemma~\ref{lemma:wide:intersection:finite}, the free idempotent monoid $\alpha\;\tc{fim}$ is also a BNF\@.
  \exampleend
\end{exa}

\begin{exa}
  A cyclic list is a finite list where the two ends are glued together.
  Abbot et al.~\cite{AbbottAltenkirchGhaniMcBride2004MPC} define the type of cyclic lists as the quotient that identifies lists whose elements have been shifted.
  Let $\optparen{\rightsquigarrow_{\tc{rotate}}}$ denote the one-step rotation of a list, i.e.,
  \begin{equation*}
    [] \rightsquigarrow_{\tc{rotate}} []
    \qquad
    \qquad
    [x] \append \mathit{xs} \rightsquigarrow_{\tc{rotate}} \mathit{xs} \append [x]
  \end{equation*}
  The quotient $\alpha\;\tc{cyclist} = \alpha\;\tc{list} / \quotparen{\rtranclp{\leftrightsquigarrow}_{\tc{rotate}}}$ is a BNF as $\optparen{\rightsquigarrow_{\tc{rotate}}}$ satisfies the conditions of Theorem~\ref{thm:confluent:quotient} and Lemmas~\ref{lem:wide:intersections} and~\ref{lemma:wide:intersection:finite}.
  \exampleend
\end{exa}

\begin{exa}[Example~\ref{ex:regex} continued]
We prove the fact that $\alpha\;\tc{re}_\tc{aci}$ is a BNF using Theorem~\ref{thm:confluent:quotient}. The
confluent rewrite relation $\optparen{\rightsquigarrow_\tc{aci}}$ that satisfies the conditions of Theorem~\ref{thm:confluent:quotient} and whose equivalence closure is $\optparen{\sim_{\tc{aci}}}$ is defined inductively as follows.

\begin{quotex}
\begin{tabular}[b]{@{}l@{\qquad}l@{}}
$\const{Alt}\;(\const{Alt}\;r\;s)\;t \rightsquigarrow_\tc{aci} \const{Alt}\;r\;(\const{Alt}\;s\;t)$&
$\const{Alt}\;r\;(\const{Alt}\;s\;t) \rightsquigarrow_\tc{aci} \const{Alt}\;(\const{Alt}\;r\;s)\;t$\\[0.7\jot]
$\const{Alt}\;r\;s \rightsquigarrow_\tc{aci} \const{Alt}\;s\;r$&
$r \rightsquigarrow_\tc{aci} \const{Alt}\;r\;r$\\[0.7\jot]
$r \rightsquigarrow_\tc{aci} r$\\[0.7\jot]
\multicolumn{2}{@{}l@{}}{
$r\rightsquigarrow_\tc{aci}r'\implies s\rightsquigarrow_\tc{aci}s'\implies \const{Alt}\;r\;s \rightsquigarrow_\tc{aci} \const{Alt}\;r'\;s'$}\\[0.7\jot]
\multicolumn{2}{@{}l@{}}{
$r\rightsquigarrow_\tc{aci}r'\implies s\rightsquigarrow_\tc{aci}s'\implies \const{Conc}\;r\;s \rightsquigarrow_\tc{aci} \const{Conc}\;r'\;s'$}\\[0.7\jot]
$r\rightsquigarrow_\tc{aci}r'\implies \const{Star}\;r \rightsquigarrow_\tc{aci} \const{Star}\;r'$
\end{tabular}%
\exampleend
\end{quotex}
\end{exa}

\begin{exa}\label{ex:regex2}
We now consider a variant of Example~\ref{ex:regex}, where we quotient the regular expressions by $\sim_\tc{acidz}$, the least equivalence relation satisfying
\begin{quotex}
\begin{tabular}{@{}l@{\qquad}l@{\qquad}l@{}}
$\const{Alt}\;(\const{Alt}\;r\;s)\;t \sim_\tc{acidz} \const{Alt}\;r\;(\const{Alt}\;s\;t)$
&
$\const{Alt}\;r\;s \sim_\tc{acidz} \const{Alt}\;s\;r$
&
$\const{Alt}\;r\;r \sim_\tc{acidz} r$\\[0.7\jot]
$\const{Conc}\;\const{Zero}\;r \sim_\tc{acidz} \const{Zero}$&
$\const{Alt}\;\const{Zero}\;r \sim_\tc{acidz} r$\\[0.7\jot]
\multicolumn{2}{@{}l@{}}{$\const{Conc}\;(\const{Alt}\;r\;s)\;t \sim_\tc{acidz} \const{Alt}\;(\const{Conc}\;r\;t)\;(\const{Conc}\;s\;t)$}\\[0.7\jot]
\multicolumn{2}{@{}l@{}}{
    $r\sim_\tc{acidz}r'\implies s\sim_\tc{acidz}s'\implies \const{Alt}\;r\;s \sim_\tc{acidz} \const{Alt}\;r'\;s'$}\\[0.7\jot]
\multicolumn{2}{@{}l@{}}{
    $r\sim_\tc{acidz}r'\implies \const{Conc}\;r\;s \sim_\tc{acidz} \const{Conc}\;r'\;s$}
\end{tabular}
\end{quotex}

\noindent
This relation is of special interest because, like $\optparen{\sim_\tc{aci}}$, it gives rise to a finite automaton construction when computing with Brzozowski derivatives of regular expressions equated by $\optparen{\sim_{\tc{acidz}}}$ and, moreover, the constructed automaton is isomorphic to the one obtained by the subset construction via Antimirov's partial derivatives~\cite[Section 4.2]{DBLP:conf/itp/NipkowT14}.

A first observation is that, unlike $\optparen{\sim_{\tc{aci}}}$, the relation $\optparen{\sim_{\tc{acidz}}}$ does not preserve the regular expression's setter: the offending rule being $\const{Conc}\;\const{Zero}\;r \sim_\tc{acidz} \const{Zero}$. Nevertheless, $\optparen{\sim_{\tc{acidz}}}$ satisfies $\eqref{eq:map:respect}$, which together with the fact that regular expressions are finite objects (and thus have finitely many atoms) allows us to use Lemma~$\ref{lemma:wide:intersection:finite}$ to obtain~\eqref{eq:wide:intersection:one}.

\looseness=-1
As before, we aim to apply Theorem~\ref{thm:confluent:quotient} to show subdistributivity. Notably, the above relation is not a congruence: one is not allowed to apply the equivalence under the $\const{Star}$ constructor and in the second argument of $\const{Conc}$. Thus, our confluent ``rewrite'' relation $\optparen{\rightsquigarrow_{\tc{acidz}}}$ also inherits these constraints. Moreover, it uses two auxiliary functions $\const{elim\_zeros}$ and $\const{distribute}$ to handle the new (compared to $\optparen{\sim_\tc{aci}}$) cases related to the $\const{Zero}$ and $\const{Conc}$ operators.
\begin{quotex}
\begin{tabular}{@{}l@{\qquad}l@{}}
\multicolumn{2}{@{}l@{}}{$\const{elim\_zeros}\;(\const{Alt}\;r\;s) = \syntax{let}\;r'=\const{elim\_zeros}\;r;\;s'=\const{elim\_zeros}\;s\;\syntax{in}$}\\[0.7\jot]
\multicolumn{2}{@{}l@{}}{$\phantom{\const{elim\_zeros}\;(\const{Alt}\;r\;s) ={}}\syntax{if}\;r'=\const{Zero}\;\syntax{then}\;s'\;\syntax{else}\;\syntax{if}\;s'=\const{Zero}\;\syntax{then}\;r'\;\syntax{else}\;\const{Alt}\;r'\;s'$}\\[0.7\jot]
\multicolumn{2}{@{}l@{}}{$\const{elim\_zeros}\;(\const{Conc}\;r\;s) = \syntax{let}\;r'=\const{elim\_zeros}\;r\;\syntax{in}\;\syntax{if}\;r'=\const{Zero}\;\syntax{then}\;\const{Zero}\;\syntax{else}\;\const{Conc}\;r'\;s$}\\[0.7\jot]
\multicolumn{2}{@{}l@{}}{$\const{elim\_zeros}\;r = r$}\\[4\jot]

\multicolumn{2}{@{}l@{}}{$\const{distribute}\;t\;(\const{Alt}\;r\;s) = \const{Alt}\;(\const{distribute}\;t\;r)\;(\const{distribute}\;t\;s)$}\\[0.7\jot]
\multicolumn{2}{@{}l@{}}{$\const{distribute}\;t\;(\const{Conc}\;r\;s) = \const{Conc}\;(\const{distribute}\;t\;r)\;s$}\\[0.7\jot]
\multicolumn{2}{@{}l@{}}{$\const{distribute}\;t\;r = \const{Conc}\;r\;t$}\\[4\jot]
$\const{Alt}\;(\const{Alt}\;r\;s)\;t \rightsquigarrow_\tc{acidz} \const{Alt}\;r\;(\const{Alt}\;s\;t)$&
$\const{Alt}\;r\;(\const{Alt}\;s\;t) \rightsquigarrow_\tc{acidz} \const{Alt}\;(\const{Alt}\;r\;s)\;t$\\[0.7\jot]
$\const{Alt}\;r\;s \rightsquigarrow_\tc{acidz} \const{Alt}\;s\;r$&
$r \rightsquigarrow_\tc{acidz} \const{Alt}\;r\;r$\\[0.7\jot]
$r \rightsquigarrow_\tc{acidz} s\implies r\rightsquigarrow_\tc{acidz}\const{elim\_zeros}\;s$& $\const{Conc}\;r\;s\rightsquigarrow_\tc{acidz}\const{distribute}\;s\;r$
\\[0.7\jot]
$r \rightsquigarrow_\tc{acidz} r$\\[0.7\jot]
\multicolumn{2}{@{}l@{}}{
    $r\rightsquigarrow_\tc{acidz}r'\implies s\rightsquigarrow_\tc{acidz}s'\implies \const{Alt}\;r\;s \rightsquigarrow_\tc{acidz} \const{Alt}\;r'\;s'$}\\[0.7\jot]
\multicolumn{2}{@{}l@{}}{
    $r\rightsquigarrow_\tc{acidz}r'\implies \const{Conc}\;r\;s \rightsquigarrow_\tc{acidz} \const{Conc}\;r'\;s$}
\end{tabular}
\end{quotex}

Showing confluence of $\optparen{\rightsquigarrow_{\tc{acidz}}}$ is challenging because $\optparen{\rightsquigarrow_{\tc{acidz}}}$ does not terminate due to $r \rightsquigarrow_\tc{acidz} \const{Alt}\;r\;r$.
In fact, we proceed
by showing strong confluence. The above rules are carefully designed to be restrictive in what can be rewritten in a single step (fewer restrictions would result in more cases that must be considered) and just permissive enough to be able to join the
critical pairs by performing only a single step on one side. In contrast,
establishing that $\optparen{\sim_{\tc{acidz}}}$ is the equivalence closure of
$\optparen{\rightsquigarrow_{\tc{acidz}}}$ is routine. The missing bit that
$\optparen{\rightsquigarrow_{\tc{acidz}}}$ factors through projections is also a
straightforward induction proof, after showing that the sets of atoms can only decrease along
$\optparen{\rightsquigarrow_{\tc{acidz}}}$ (which happens when $\const{elim\_zeros}$ is used).
Altogether, $\optparen{\rightsquigarrow_{\tc{acidz}}}$ satisfies the assumption of Theorem~\ref{thm:confluent:quotient} and allows us to lift the BNF structure of $\alpha\;\tc{re}$ to the quotient type:
\begin{quotex}
\keyw{quotient\_type}\;$\alpha\;\tc{re}_\tc{acidz} = \alpha\;\tc{re} / \quotparen{\sim_\tc{acidz}}$
\end{quotex}

We note that the
Theorem~\ref{thm:confluent:quotient}'s counterpart from our conference paper~\cite[Theorem
4]{DBLP:conf/cade/FurerLST20} does not apply to
$\optparen{\rightsquigarrow_{\tc{acidz}}}$ as it required the rewrite relation
to preserve the setters.\exampleend
\end{exa}

\subsection{Non-emptiness witnesses}%
\label{section:witness}

\looseness=-1
An often neglected (also by us in Section~\ref{section:BNF}), but important additional piece of information that BNFs carry are the so-called non-emptiness witnesses~\cite{DBLP:conf/esop/Blanchette0T15}.
These are constants $\const{w} \hastype \alpha_{i_1} \fun \cdots \fun \alpha_{i_k} \fun \seq{\alpha}\;F$, where $I_\const{w} = \{i_1, \ldots, i_k\} \subseteq \{1, \ldots, n\}$, which capture the information which atoms must be given to construct a value of $\seq{\alpha}\;F$.
The constants are subject to the following property for all $i$:
\[
  a_i \in \const{set}_{F,i}\;(\const{w}\;b_{i_1}\cdots \;b_{i_k}) \implies i \in I_\const{w} \land a_i = b_i.
\]
For example, the product type has a single witness $\const{Pair} \hastype \alpha \fun \beta \fun \alpha \times \beta$, whereas the sum type has two: $\const{Inl} \hastype \alpha \fun \alpha + \beta$ and $\const{Inr} \hastype \beta \fun \alpha + \beta$. Non-emptiness witnesses are used to prove that datatypes are non-empty, a requirement for introducing new types in HOL~\cite{Paulson2006TCL}. The fewer arguments a witness has, the more useful it is when proving non-emptiness. Hence, a witness with the (index) set of arguments $I$ subsumes another with the set $J$ if $I \subseteq J$. BNFs carry a complete set of witnesses that are minimal with respect to subsumption.

The (co)datatype commands also automatically produce a complete set of witnesses for the resulting types. For example, the list type has a single witness $[] \hastype \alpha\;\tc{list}$.
It forms a complete set because it takes no arguments and thus subsumes every other possible witness.
Similarly, the coinductive lists from Example~\ref{ex:tllist} have a single witness $\const{LNil} \hastype \alpha\;\tc{llist}$.

\looseness=-1
Witnesses can be lifted from the raw type to the quotient type because
\[
  \const{set}_{Q,i}\;[x]_\sim = \bigcap\nolimits_{y \in [\const{map}_F\;\seq{\embedding}\;x]_\sim} \{a \mid \embedding\;a \in \const{set}_{F,i}\;y\} \subseteq \const{set}_{F,i}\;x
\]
for all $x$.
However, the set of witnesses obtained by lifting may stop being complete.
For example, lifting from $\alpha\;\tc{llist} \times \beta$ to terminated lazy lists results in a single witness $\const{tlnil}\hastype\beta \fun (\alpha,\beta)\;\tc{tllist}$, which corresponds to the witness $\lambda b.\;(\const{LNil}, b)$ on the underlying type.
Yet there is a second witness $\const{tlconst}\hastype\alpha \fun (\alpha,\beta)\;\tc{tllist}$ lifted from $\lambda a.\;(\const{lconst}\;a, \const{undef})$,
where the function $\const{lconst}$ is defined corecursively by $\const{lconst}\;a=\const{LCons}\;a\;(\const{lconst}\;a)$ and $\const{undef}$ is an unspecified value of type $\beta$.
The function $f = (\lambda a.\;(\const{lconst}\;a,\const{undef}))$ is not a witness on the raw type because it violates the witness property:
$f\;a$ always contains $\const{undef}$ as an atom of type $\beta$, yet $f$ does not take an argument corresponding to that type.
In contrast, there is no such atom in $\const{tlconst}\;a$, as we will prove in Example~\ref{ex:tllist:transfer}.
Identifying such additional witnesses arising from the equivalence relation's influence on the setters requires manual proofs.

\subsection{Partial quotients}%
\label{section:partial:quotient}

So far, we have focused on total quotients generated by an equivalence relation.
If the relation $\optparen{\sim}$ is only a partial equivalence relation, i.e., symmetric and transitive but not necessarily reflexive, then the resulting quotient is a partial quotient.
Every partial quotient $\seq{\alpha}\;Q = \seq{\alpha}\;F / \quotparen{\sim}$ factors into a subtype $\seq{\alpha}\;T$ of $\seq{\alpha}\;F$ and a total quotient $\seq{\alpha}\;T / \quotparen{\sim'}$.

We can therefore combine the conditions for total quotients with those for subtypes.
Let $\const{Field}_\sim$ denote the field of the symmetric relation $\optparen{\sim}$, i.e. $\const{Field}_\sim = \{ x \mid \exists y.\; x \sim y \}$.
Define $\seq{\alpha}\;T$ as isomorphic to $\const{Field}_\sim$.
The partial equivalence relation $\optparen{\sim}$ on $\seq{\alpha}\;F$ yields an equivalence relation $\optparen{\sim'}$ on $\seq{\alpha}\;T$.
Clearly, $\seq{\alpha}\;Q$ is isomorphic to $\seq{\alpha}\;T / \quotparen{\sim'}$.

The subtype $\seq{\alpha}\;T$ inherits the BNF structure from $\seq{\alpha}\;F$ under two conditions~\cite{Biendarra2015BA,LochbihlerSchneider2018ITP}:
\begin{enumerate}
\item\label{item:subtype:closed}
  The set $\const{Field}_\sim$ must be closed under the mapper $\const{map}_F$, i.e., $\const{map}_F\;\seq{f}\;x \in \const{Field}_\sim$ for all $\seq{f}$ and $x \in \const{Field}_\sim$.

\item\label{item:subtype:reflect}
  Whenever $\const{map}_F\;\seq{\fst}\;z \in \const{Field}_\sim$ and $\const{map}_F\;\seq{\snd}\;z \in \const{Field}_\sim$, then there exists a $y \in \const{Field}_\sim$ with $\forall i.\,\const{set}_{F,i}\; y \subseteq \const{set}_{F,i}\;z$ and $\const{map}_F\;\seq{\fst}\;y = \const{map}_F\;\seq{\fst}\;z$ and $\const{map}_F\;\seq{\snd}\;y = \const{map}_F\;\seq{\snd}\;z$.
\end{enumerate}
The first condition follows from $\const{map}_F$ preserving $\optparen{\sim}$ by the definition of $\const{Field}_\sim$.
The second condition does not follow from our conditions for total quotients.
It ensures that there is a suitable witness $y$ for \theorem{in\_rel} to relate two values in the lifted relation $\const{rel}_F\;\seq{R}$.

\begin{exa}%
  \label{ex:fae}
  In Example~\ref{ex:ae}, the quotient $\alpha\ F_{\tc{seq}} / \quotparen{\sim_{\tc{ae}}}$ of sequences $\alpha\ F_{\tc{seq}} = \tc{nat} \fun \alpha$ does not inherit the BNF structure because $\sim_{\tc{ae}}$ does not satisfy~\eqref{eq:wide:intersections}.
  We now restrict the equivalence relation to sequences that contain only finitely many different values.
  That is, the partial equivalence relation $\sim_{\tc{fae}}$ relates $x$ to $y$ iff
  $\{n \mid x\;n \neq y\;n\}$ and $\const{range}\;x$ and $\const{range}\;y$ are finite, where $\const{range}\;f = f\langle \const{UNIV}\rangle$ denotes the range of $f$.
  So $\const{Field}_{\sim_{\tc{fae}}} = \{ x \mid \const{finite}\ (\const{range}\ x) \}$.

  By the two-step approach, we first define the functor $\alpha\ F_{\tc{fseq}}$ of finitely-valued sequences as the subtype of $F_{\tc{seq}}$.
  As $\const{Field}_{\sim_{\tc{fae}}}$ satisfies the above subtype conditions, $F_{\tc{fseq}}$ inherits the BNF structure from $F_{\tc{seq}}$.
  Second, we define the functor $F_{\tc{fae}}$ of finitely-valued infinitely-different sequences as the total quotient of $F_{\tc{fseq}}$ over $\sim_{\tc{fae}}$.
  By Theorem~\ref{thm:bnf:quotient}, $F_{\tc{fae}}$ inherits the BNF structure;
  wide intersections~\eqref{eq:wide:intersections} are trivially preserved by Lemma~\ref{lemma:wide:intersection:finite} as sequences in $F_{\tc{fseq}}$ contain only finitely many values.
  \exampleend
\end{exa}

Inheriting the BNF structure through the subtype $\seq{\alpha}\;T$ works in many cases.
The next example shows that this may fail in pathological cases though.
So it might be worthwhile to generalize the constructions from the previous sections directly to partial equivalene relations.
This is left as future work.

\begin{exa}%
  \label{ex:partial:incomplete}
  Let the partial equivalence relation $\sim$ relate two finite lists $\mathit{xs}$ and $\mathit{ys}$ iff they contain the same elements, i.e., $\const{set}\ \mathit{xs} = \const{set}\ \mathit{ys}$, and each contains at least one element twice.
  So $\const{Field}_\sim$ consists of all lists that contain at least one element twice.
  Let $\alpha\;T$ be the corresponding subtype of lists.
  Then $\alpha\;T$ does not inherit the BNF structure from lists because condition~\ref{item:subtype:reflect} from above does not hold.
  For the list $\mathit{zs} = [(1,a),(1,b),(2,b)]$, we have $\const{map}\ \const{fst}\ \mathit{zs} = [1,1,2] \in \const{Field}_\sim$ and $\const{map}\ \const{snd}\ \mathit{zs} = [a,b,b] \in \const{Field}_\sim$,
  yet $\mathit{zs}$ is the only list that projects to these two lists and $\mathit{zs} \notin \const{Field}_\sim$.
  This failure is not just because condition~\ref{item:subtype:reflect} is too strong;
  with the mapper for lists, $T$ does not satisfy \theorem{rel\_comp} for the relator defined via \theorem{in\_rel}.
  Nevertheless, the quotient $T / \quotparen{\sim}$ with the inherited mapper becomes the type of non-empty finite sets, which does satisfy the BNF properties.
  \exampleend
\end{exa}


\section{Implementation}%
\label{section:implementation}

We provide an Isabelle/HOL command that automatically lifts the BNF structure to total quotient types.
The command requires the user to discharge our conditions on the equivalence relation.
Upon success, it defines the mapper, setters, the relator, and lifted non-emptiness witnesses, and proves the BNF axioms and transfer rules.
The constants' definitions and transfer rules are described in more detail below.
Eventually, the command registers the quotient type with the BNF infrastructure for use in future (co)datatype definitions.
The command was implemented in 1590 lines of Isabelle/ML\@.
All automated proofs are checked by Isabelle's kernel.
Support for partial quotients is left for future work.

\subsection{The \keyw{lift\_bnf} command}%
\label{section:command}

Our implementation extends the interface of the existing \keyw{lift\_bnf} command for subtypes~\cite{Biendarra2015BA}.
Given a quotient type $\seq{\alpha}\;Q = \seq{\alpha}\;F / \quotparen{\sim}$,
\begin{quotex}
  $\keyw{lift\_bnf }\seq{\alpha}\;Q \keyw{ [wits:}\;\seq{w}\keyw{]}$ 
\end{quotex}
asks the user to prove the conditions~\eqref{eq:wide:intersections} and~\eqref{eq:weak:pullback} of Theorem~\ref{thm:bnf:quotient}, where~\eqref{eq:wide:intersections} is expressed in terms of~\eqref{eq:wide:intersection:one} according to Lemma~\ref{lem:wide:intersections:alt}.
Since the quotient construction already requires that $\optparen{\sim}$ be an equivalence relation, the remaining condition~\eqref{eq:equiv:sim} holds trivially.
The user can provide optional non-emptiness witnesses $\seq{w}$ as functions mapping to the raw type $\seq{\alpha}\;F$, which adds the corresponding proof obligations (see Section~\ref{section:witness}).

\begin{exa}[Example~\ref{ex:dlist} continued]
  Distinct lists, when viewed as a quotient of lists by the relation $\sim_{\tc{dlist}}$, can be registered as a BNF using $\keyw{lift\_bnf }\alpha\;\tc{dlist}$.
  We do not specify additional witnesses because the list witness $[]$ can be lifted and is already as general as possible.
  Two proof obligations are presented to the user:
  \begin{gather*}
    \forall R\;S.\; R \relcomp S \neq \{\} \implies
      \const{rel}_{\tc{list}}\;R \relcomp \optparen{\sim_{\tc{dlist}}} \relcomp \const{rel}_{\tc{list}}\;S \subseteq
      \optparen{\sim_{\tc{dlist}}} \relcomp \const{rel}_{\tc{list}}\;(R \relcomp S) \relcomp \optparen{\sim_{\tc{dlist}}} \\
    \forall \mathcal{A}.\;\; \mathcal{A} \neq \{\} \land (\bigcap\mathcal{A}) \neq \{\} \implies
      \bigcap\nolimits_{A \in \mathcal{A}} [\{x \mid \const{set}\;x \subseteq A\}]_{\sim_{\tc{dlist}}} \subseteq
      \left[\{x \mid \const{set}\;x \subseteq \bigcap \mathcal{A}\}\right]_{\sim_{\tc{dlist}}}
  \end{gather*}
  The first obligation can be discharged by instantiating Theorem~\ref{thm:confluent:quotient} as described in Example~\ref{ex:dlist}.
  For the second obligation, it suffices to prove that for every list $x$ satisfying
  \begin{equation}\label{eq:dlist:wide-inter:lhs}
    \forall A \in \mathcal{A}.\; \exists y.\; x \sim_{\tc{dlist}} y \land \const{set}\;y \subseteq A
  \end{equation}
  there exists a list $y'$ such that $x \sim_{\tc{dlist}} y'$ and $\const{set}\;y' \subseteq \bigcap\mathcal{A}$.
  We choose $y' = x$.
  Clearly $x \sim_{\tc{dlist}} x$, and since $x \sim_{\tc{dlist}} y$ implies $\const{set}\;x = \const{set}\;y$, it follows from~\eqref{eq:dlist:wide-inter:lhs} that $\const{set}\;x \subseteq A$ for every $A \in \mathcal{A}$.
  Alternatively, the preservation of wide intersections follows directly from Lemma~\ref{lemma:wide:intersection:finite} because $\const{set}\;x$ is finite for every list $x$.
  \exampleend
\end{exa}

After the conditions have been proved by the user, the command defines the BNF constants.
In HOL, the type $\seq{\alpha}\;Q$ is considered distinct from (but isomorphic to) the set of equivalence classes over $\seq{\alpha}\;F$.
Therefore, the definitions use an abstraction function $\const{abs}_Q \hastype \seq{\alpha}\;F \fun \seq{\alpha}\;Q$ and a representation function $\const{rep}_Q \hastype \seq{\alpha}\;Q \fun \seq{\alpha}\;F$ to translate between the types.
Concretely, we define the quotient's mapper by
\begin{equation*}
  \const{map}_Q\;\seq{f} = \const{abs}_Q \circ \const{map}_F\;\seq{f} \circ \const{rep}_Q
\end{equation*}
The quotient's setters use the function $\const{set}_{\unit+{}}$, which maps $\embedding\;a$ to $\{a\}$ and $\circledast$ to $\{\}$:
\begin{equation}
  \label{eq:set:alt:def}
  \const{set}_{Q,i} = \Bigl(\lambda x.\; \bigcap\nolimits_{y \in [\const{map}_F\;\seq{\embedding}\;x]_\sim} \bigcup \const{set}_{\unit+{}}\langle\const{set}_{F,i}\;y\rangle\Bigr) \circ \const{rep}_Q
\end{equation}
This definition is equivalent to the characterization in Theorem~\ref{thm:set:characterization}.
The relator (Theorem~\ref{thm:rel:characterization}) is lifted similarly.
Let $\vimagerel{f}{g}{R}$ denote the inverse image of the relation $R$ under functions $f$ and $g$, i.e., $(x,y) \in \vimagerel{f}{g}{R} \Iff (f\;x, f\;y) \in R$.
Then
\begin{equation}\label{eq:rel:alt:def}
  \const{rel}_Q\;\seq{R} = \vimagerel{\const{rep}_Q}{\const{rep}_Q}{%
    \vimagerel{\const{map}_F\;\seq{\embedding}}{\const{map}_F\;\seq{\embedding}}{%
      \optparen{\sim} \relcomp \const{rel}_F\;\seq{(\const{rel}_{\unit+{}}\;R)} \relcomp \optparen{\sim}}}.
\end{equation}

\subsection{Transfer rule generation}%
\label{section:transfer}

The relationship of a quotient's BNF structure to its underlying type allows us to prove additional properties about the former.
This is achieved by transfer rules, which drive Isabelle's Transfer tool~\cite{HuffmanKuncar2013CPP} (Section~\ref{section:quotient}).
Our command automatically proves parametrized transfer rules for the lifted mapper, setters, and relator.
Parametrized transfer rules are more powerful because they allow the refinement of nested types~\cite[Section~4.3]{Kuncar2016PhD}.
Such rules involve a parametrized correspondence relation $\const{pcr}_Q\;\seq{A} = \const{rel}_F\;\seq{A} \relcomp \const{cr}_Q$, where the parameters $\seq{A}$ relate the type arguments of $F$ and $Q$.

The transfer rule of $\const{map}_Q$ is unsurprising, as it is the canonical lifting of $\const{map}_F$:
\begin{equation*}
  (\const{map}_F, \const{map}_Q) \in \left( \seq{(A \relfun B)} \relfun \const{pcr}_Q\;\seq{A} \relfun \const{pcr}_Q\;\seq{B} \right)
\end{equation*}
Setters are not transferred to $\const{set}_F$ but to the more complex function from~\eqref{eq:set:alt:def}:
\begin{equation*}
  \Bigl(\lambda x.\; \bigcap\nolimits_{y \in [\const{map}_F\;\seq{\embedding}\;x]_\sim} \bigcup \const{set}_{\unit+{}}\langle\const{set}_{F,i}\;y\rangle,\;\const{set}_{Q,i}\Bigr) \in
  (\const{pcr}_Q\;\seq{A} \relfun \const{rel}_{\tc{set}}\;A_i)
\end{equation*}
where $(X,Y) \in \const{rel}_{\tc{set}}\;A \Iff (\forall x \in X.\;\exists y \in Y.\; (x,y) \in A) \land (\forall y \in Y.\;\exists x \in X.\; (x,y) \in A)$.
Similarly, the rule for $Q$'s relator contains its defining term from~\eqref{eq:rel:alt:def}.

\begin{exa}[Example~\ref{ex:tllist} continued]%
  \label{ex:tllist:transfer}
  Recall that terminated coinductive lists satisfy the conditions for lifting the BNF structure.
  Thus, we obtain the setter $\const{set}_{\tc{tllist},2} \hastype (\alpha,\beta)\;\tc{tllist} \fun \beta\;\tc{set}$ among the other BNF operations.
  We want to prove that $\const{set}_{\tc{tllist},2}\;x$ is empty for all infinite lists $x$.
  To make this precise, let the predicate $\const{lfinite} \hastype \alpha\;\tc{llist} \fun \tc{bool}$ characterize finite coinductive lists.
  We lift it to $(\alpha,\beta)\;\tc{tllist}$ by projecting away the terminator:
  \begin{quotex}
    $\keyw{lift\_definition }\const{tlfinite} \hastype (\alpha,\beta)\;\tc{tllist} \fun \tc{bool}\keyw{ is }(\lambda x.\;\const{lfinite}\;(\fst\;x))$
  \end{quotex}
  Therefore, we have to show that $\forall x.\;\lnot\,\const{tlfinite}\;\var{x} \implies \const{set}_{\tc{tllist},2}\;x = \{\}$.
  Using the transfer rules for the setter and the lifted predicate $\const{tlfinite}$, the \texttt{transfer} proof method reduces the proof obligation to
  \begin{equation*}
    \forall x'.\; \lnot\,\const{lfinite}\;(\fst\;x') \implies \smash{\bigcap\nolimits_{y \in [\const{map}_F\;\seq{\embedding}\;x']_{\sim_\tc{tllist}}}} \bigcup \const{set}_{\unit+{}}\langle\const{set}_{F,2}\;y\rangle = \{\}
  \end{equation*}
  where $x' \hastype (\alpha,\beta)\;F$, and $(\alpha,\beta)\;F = (\alpha\;\tc{llist} \times \beta)$ is the underlying functor of \tc{tllist}.
  The rest of the proof, which need not refer to $\tc{tllist}$ anymore, is automatic.
  As a corollary of this example, $\const{set}_{\tc{tllist},2}\;(\const{tlconst}\;a)$ is always empty, a property we used in Section~\ref{section:witness}.
  \exampleend
\end{exa}

We have also extended \keyw{lift\_bnf} to generate transfer rules for subtypes, for which the setters and relator do not change except for the types.
For example, if $T$ is a subtype of $F$, $\const{set}_{T,i}$ is transferred to $\const{set}_{F,i}$.
Previously no such rules were made available by the command.
This limited the properties that could be proved about the setters and relator to those that follow from the generic BNF axioms.
\begin{exa}[Example~\ref{ex:fae} continued]
  We defined finitely-valued sequences $\alpha\;F_{\tc{fseq}}$ as a subtype of general sequences $\alpha\;F_{\tc{seq}}$.
  Our implementation generates the transfer rule
  \[
    (\const{range}, \const{set}_{\tc{fseq}}) \in (\const{pcr}_{\tc{fseq}}\;A \relfun \const{rel}_{\tc{set}}\;A)
  \]
  for the setter $\const{set}_{\tc{fseq}} \hastype \alpha\;F_{\tc{fseq}} \fun \alpha\;\tc{set}$,
  where $\const{range}$ is the setter of the underlying type $F_{\tc{seq}}$.

\looseness=-1
  We use the above rule to show that $F_{\tc{fseq}}$ deserves its name, namely that $\const{set}_{\tc{fseq}}\;x$ is finite for all $x$.
  Using that $\const{pcr}_{\tc{fseq}}\;(=)$ is equal to $\const{cr}_{\tc{fseq}}$ and $\const{rel}_{\tc{set}}\;(=)$ is the equality relation,
  the transfer rule implies $\const{set}_{\tc{fseq}}\;x = \const{range}\;y$ for all $y$ where $(y,x) \in \const{cr}_{\tc{fseq}}$.
  Recall that the correspondence relation $\const{cr}_{\tc{fseq}}$ relates exactly those sequences that have finite range to their isomorphic copies in $F_{\tc{fseq}}$.
  Hence, every $y$ with $(y,x) \in \const{cr}_{\tc{fseq}}$ must have finite range.
  These reasoning steps are automated by the \texttt{transfer} proof method.
  \exampleend
\end{exa}

%
%

\section{Related work}%
\label{section:related:work}

Quotient constructions have been formalized and implemented, e.g., in
Isabelle/HOL\cite{HuffmanKuncar2013CPP,KaliszykUrban2011SAC,Paulson2006TCL,Slotosch1997TPHOLs},
HOL4~\cite{Homeier2005TPHOLs},
Agda~\cite{Veltri2015SPLST,Veltri2017phd} (as well as the Cubical Agda variant~\cite{DBLP:journals/jfp/VezzosiMA21,DBLP:conf/fscd/Veltri21}),
Cedille~\cite{MarmadukeJenkinsStump2019TFP},
Coq~\cite{Cohen2013ITP,ChicliPottierSimpson2003TYPES},
Lean~\cite{AvigadCarneiroHudon2019ITP}, and
Nuprl~\cite{Nogin2002TPHOLs}.
None of these works look at the preservation of functor properties except for Avigad et al.~\cite{AvigadCarneiroHudon2019ITP} (discussed in Section~\ref{section:QPF}) and Veltri~\cite{Veltri2017phd,DBLP:conf/fscd/Veltri21}.

Veltri~\cite{Veltri2017phd} studies the special case of when the delay monad is preserved by a quotient of weak bisimilarity, focusing on the challenges that quotients pose in intensional type theory.
Furthermore, he~\cite{DBLP:conf/fscd/Veltri21} examines, using Cubical Agda, how different constructions of the finite powerset functor $\tc{fset}$ affect codatatype recursion through this functor.
He notices that specifying $\tc{fset}$ as a higher-inductive type via an equational presentation works smoothly in a constructive setting.
In contrast, first constructing an intermediate codatatype on the underlying raw type $\tc{list}$ and then quotienting it with the equivalence relation lifted to the codatatype requires the full axiom of choice for deriving the finality theorem.
Analogously, we do not construct an intermediate codatatype and delay the quotienting, but the corecursion directly goes through the quotient BNF thanks to the BNF closure properties.

Abbot et al.~\cite{AbbottAltenkirchGhaniMcBride2004MPC} introduce quotient containers as a model of datatypes with permutative structure, such as unordered pairs, cyclic lists, and multisets.
The map function of quotient containers does not change the shape of the container.
Quotient containers therefore cannot deal with quotients where the equivalence relation takes the identity of elements into account, such as distinct lists, finite sets, and the free idempotent monoid.
Overall our construction strictly subsumes quotient containers.

\subsection{Quotients in the category of Sets}

BNFs are accessible functors in the category of Sets.
We therefore relate to the literature on when quotients preserve functors and their properties in Set.

Trnkov\'a~\cite{Trnkova1969CMUC} showed that all Set functors preserve non-empty intersections:
in our notation $\internalize{F}\;A \cap \internalize{F}\;B = \internalize{F}\;(A \cap B)$ whenever $A \cap B \neq \{\}$.
Empty intersections need not be preserved though.
Functors that do are called regular~\cite{Trnkova1981CMUC} or sound~\cite{AdamekGummTrnkova2010JLC}.
All BNFs are sound as $\internalize{F}\;A = \{ x \mid \const{set}_F\;x \subseteq A \}$.
The naive quotient construction can lead to unsound functors, as shown in Example~\ref{ex:Inl:Inl}.

Every unsound functor can be ``repaired'' by setting $\internalize{F}\;\{\}$ to the \emph{distinguished points} $\const{dp}_F$.
We write $\internalize{F}\,'$ for the repaired action.
\begin{equation}
  \label{eq:repair}
  \internalize{F}\,'\;A =
  \begin{cases}
    \const{dp}_F
    &
    \text{if } A = \{\}
    \\
    \internalize{F}\;A & \text{otherwise}
  \end{cases}
\end{equation}
Trnkov\'a characterizes the distinguished points $\const{dp}_F$ as the natural transformations from $C_{1,0}$ to $F$ where $\internalize{C_{1,0}}\; \{\} = \{\}$ and $\internalize{C_{1,0}}\; A = \{\circledast\}$ for $A \neq \{\}$.
Barr~\cite{Barr1993TCS} and Gumm~\cite{Gumm2005CALCO} use equalizers instead of natural transformations to define the distinguished points of univariate functors:
\begin{equation}
  \label{eq:distinguished:points}
  \const{dp}_F =
  \{ x \mid \const{map}_F\;(\lambda \_.\; \const{True})\;x = \const{map}_F\;(\lambda \_.\; \const{False})\; x \}
\end{equation}
\looseness=-1
The case distinction in~\eqref{eq:repair} makes it hard to work with repaired functors, especially as the case distinctions proliferate for multivariate functors.
Instead, we repair the unsoundness by avoiding empty sets altogether.
Our characterization $\internalizesim{F}{\sim}\;A$ in Lemma~\ref{lem:F_in'} effectively derives the quotient from $(\unit + \alpha)\;F$ instead of $\alpha\;F$.
Moreover, our characterization of $\internalize{F}\;\seq{A}$ generalizes Barr and Gumm's definition of distinguished points: for $\seq{A} = \{\}$,~\eqref{eq:F_in} simplifies to~\eqref{eq:distinguished:points}, using preimage preservation~\eqref{eq:preimage:preservation}.
The resulting quotient is the same because $[\internalizesim{F}{\sim}\;\seq{A}]_\sim = [\internalize{F}\;\seq{A}]_\sim$ if $A_i \neq \{\}$ for all $i$.

Given the other BNF properties, subdistributivity is equivalent to the functor preserving weak pullbacks.
Ad\'amek et al.~\cite{AdamekGummTrnkova2010JLC} showed that an accessible Set functor preserves weak pullbacks iff it has a so-called dominated presentation in terms of flat equations $E$ over a signature $\Sigma$.
This characterization does not immediately help with proving subdistributivity, though.
For example, the finite set quotient $\alpha\;\tc{fset} = \alpha\;\tc{list} / \quotparen{\sim_{\tc{fset}}}$ comes with the signature $\Sigma = \{ \sigma_n \mid n \in \mathbb{N} \}$ and the equations $\sigma_n(x_1,\ldots x_n) = \sigma_m(y_1,\ldots,y_m)$ whenever $\{x_1,\ldots,x_n\} = \{y_1,\ldots,y_m\}$.
Proving domination for this presentation boils down to proving subdistributivity directly.
Our criterion using a confluent relation (Theorem~\ref{thm:confluent:quotient}) is only sufficient, not necessary, but it greatly reduces the actual proof effort.

\subsection{Lean's quotients of polynomial functors}%
\label{section:QPF}

Avigad et al.~\cite{AvigadCarneiroHudon2019ITP} proposed quotients of polynomial functors (QPF) as a model for datatypes.
QPFs generalize BNFs in that they require less structure: there is no setter and the relator need not satisfy subdistributivity.
Nevertheless, the quotient construction is similar to ours.
Without loss of generality, we consider in our comparison only the univariate case $\alpha\;Q = \alpha\;F / \quotparen{\sim}$.

The main difference lies in the definition of the liftings $\const{lift}_F$ of predicates $P \hastype \alpha \fun \tc{bool}$ and relations $R \hastype \alpha \reltype \beta$.
In our notation, $\const{lift}_F\;P$ corresponds to $\lambda x.\; x \in \internalize{F}\;\{ a \mid P\;a \}$ and $\const{lift}_F\;R$ to $\const{rel}_F\; R$.
QPFs define these liftings for the quotient $Q$ as follows:
\begin{equation*}
  \const{lift}_Q\;P\;[x]_\sim = (\exists x' \in [x]_\sim.\; P\; x')
  \qquad
  \const{lift}_Q\;R\;[x]_\sim\;[y]_\sim = (\exists x' \in [x]_\sim.\; \exists y' \in [y]_\sim.\;R\;x'\;y')
\end{equation*}
That is, these definitions correspond to the naive construction $\internalize{Q}\;A = [\internalize{F}\;A]_\sim$ and $\const{rel}_Q\;R = [\const{rel}_F\;R]_\sim$, where $[(x, y)]_\sim = ([x]_\sim, [y]_\sim)$.
As discussed above, the resulting quotient may be an unsound functor.
Consequently, lifting of predicates does not preserve empty intersections in general.
This hinders modular proofs.
For example, suppose that a user has already shown $\const{lift}_Q\;P_1\;x$ and $\const{lift}_Q\;P_2\;x$ for some value $x$ and two properties $P_1$ and $P_2$.
Then, to deduce $\const{lift}_F\;(\lambda a.\;P_1\; a \wedge P_2\;a)\;x$, they would have to prove that the two properties do not contradict each other, i.e., $\exists a.\;P_1\;a \wedge P_2\;a$.
Obviously, this makes modular proofs harder as extra work is needed to combine properties.

QPFs use $\const{lift}_F\;P$ in the induction theorem for datatypes.
So when a datatype recurses through $\tc{tllist}$,
the aforementioned obstacle spreads to proofs by induction:
splitting a complicated inductive statement into smaller lemmas is not for free.
Moreover, $\const{lift}_Q$ holds for fewer values, as the next example shows.
Analogous problems arise in QPFs for relation lifting, which appears in the coinduction theorem.

\begin{exa}[Example~\ref{ex:tllist:transfer} cont.]
  Consider the infinite repetition $\const{tlconst}\;a \hastype (\alpha, \beta)\;\tc{tllist}$ of the atom $a$ as a terminated lazy list.
  As $\const{tlconst}\;a$ contains only $a$s, one would expect that $\const{lift}_\tc{tllist}\;(\lambda a'.\; a' = a)\;(\lambda \_.\;\const{False})\;(\const{tlconst}\;a)$ holds.
  Yet this property is provably false.
  \exampleend
\end{exa}

These issues would go away if $\const{lift}_Q$ was defined following our approach for $\internalize{Q}\;A = [\internalizesim{F}{\sim}\;A]_\sim$ and $\const{rel}_Q$ as in Theorem~\ref{thm:rel:characterization}.
These definitions do not rely on the additional BNF structure;
only $\const{map}_Q$ is needed and QPFs define $\const{map}_Q$ like we do.
The repair should therefore work for the general QPF case as well.

\section{Conclusion}

\looseness=-1
We have described a sufficient criterion for quotient types to be able to inherit the BNF structure from the underlying type.
We have demonstrated the effectiveness of the criterion by automating the BNF ``inheritance'' in the form of the \keyw{lift\_bnf} command in Isabelle/HOL and used it (which amounts to proving the criterion) for several realistic quotient types.
We have also argued that our treatment of the quotient's setter and relator to avoid unsoundness carries over to more general structures, such as Lean's QPFs.

As future work, we plan to investigate quotients of existing generalizations of BNFs to co- and contravariant functors~\cite{LochbihlerSchneider2018ITP} and functors operating on small-support endomorphisms and bijections~\cite{DBLP:journals/pacmpl/BlanchetteGPT19}.
Furthermore, we would like to provide better automation for proving subdistributivity via confluent rewrite systems as part of \keyw{lift\_bnf}.

\def\ackname{\relax Acknowledgment}


\section*{\ackname}
We thank David Basin for supporting this work, Ralf Sasse and Andrei Popescu for insightful discussions about confluent relations, BNFs, their preservation of wide intersections, and ways to express the setters in terms of the mapper, and Jasmin Blanchette and the anonymous IJCAR and LMCS reviewers for numerous comments on earlier drafts of this article, which helped to improve the presentation. Julian Biendarra developed the original \keyw{lift\_bnf} command for subtypes, which we extended to quotient types in this work.

\bibliographystyle{alphaurl}
\bibliography{lmcs.bib}

\end{document}